\newcommand{\tudparagraph}[2]{%
\vspace*{#1}

\noindent
{\bf #2}
}
\newcommand{\cE}{\mathcal{E}}
\newcommand{\cM}{\mathcal{M}}
\newcommand{\cN}{\mathcal{N}}
\renewcommand{\Pr}{\mathrm{Pr}}
\newcommand{\sinit}{s_{\mathit{\scriptscriptstyle init}}}
\newcommand{\Act}{\mathit{Act}}
\newcommand{\act}{\alpha}
\newcommand{\Size}{\mathit{size}}
\newcommand{\fpath}{\pi}
\newcommand{\infpath}{\varsigma}
\newcommand{\last}{\mathit{last}}
\newcommand{\sched}{\mathfrak{S}}
\newcommand{\MEC}{\mathit{MEC}}
\newcommand{\wgt}{\mathit{wgt}}
\newcommand{\ActEC}{\mathfrak{A}}
\newcommand{\goal}{\mathit{goal}}
\newcommand{\fail}{\mathit{fail}}
\newcommand{\Rational}{\mathbb{Q}}
\newcommand{\CiteAppendix}[1]{}
\theoremstyle{definition}
\newtheorem{mydef}{Definition}
\theoremstyle{plain}
\newtheorem{myprop}[mydef]{Proposition}
\theoremstyle{definition}
\newtheorem{myex}[mydef]{Example}
\theoremstyle{plain}
\newtheorem{mylem}[mydef]{Lemma}
\theoremstyle{plain}
\newtheorem{mycor}[mydef]{Corollary}
\theoremstyle{plain}
\newtheorem{mythm}[mydef]{Theorem}
\theoremstyle{definition}
\begin{document}

\title{Partial and Conditional Expectations in {M}arkov Decision Processes with Integer Weights}%
\author{Jakob Piribauer \and Christel Baier}%
\institute{Technische Universit{\"a}t Dresden\thanks{%
The authors are supported by the DFG through the Research Training Group QuantLA (GRK 1763),
the DFG-project BA-1679/11-1,
    the Collaborative Research Center HAEC (SFB 912),
    and the cluster of excellence CeTI.}
    \\
    \email{ \{jakob.piribauer,christel.baier\}@tu-dresden.de}
    }

\maketitle
\thispagestyle{plain}


\begin{abstract}
The paper addresses two variants of the stochastic shortest path problem (``optimize the accumulated weight until reaching a goal state'') in Markov decision processes (MDPs) with integer weights.
The first variant optimizes partial expected accumulated weights, where paths not leading to a goal state are assigned weight $0$, while the second variant considers conditional expected accumulated weights, where the probability mass is redistributed to paths reaching the goal. Both variants constitute useful approaches to the analysis of systems without guarantees on the occurrence of an event of interest (reaching a goal state), but have only been studied in structures with non-negative weights. Our main results are as follows.
There are polynomial-time algorithms to check the
finiteness of the supremum of the partial or conditional expectations
in MDPs with arbitrary integer weights.
If finite, then optimal weight-based deterministic schedulers exist.
In contrast to the setting of non-negative weights,
optimal schedulers can need infinite memory and
their value can be irrational.
However, the optimal value can be approximated up to an absolute
error of $\epsilon$ in time exponential in the size of the MDP
and polynomial in  $\log(1/\epsilon)$. \end{abstract}

\section{Introduction}


Stochastic shortest path (SSP) problems generalize the  shortest path problem on  graphs with weighted edges. 
The SSP problem is formalized using finite state Markov decision processes (MDPs),
which are a prominent model combining probabilistic and nondeterministic choices.
In each state of an MDP, one  is allowed to choose nondeterministically from a set of actions, each of them is augmented with probability distributions over the successor states and a weight (cost or reward). The SSP problem asks for a policy to choose actions (here called a scheduler) maximizing or minimizing the expected accumulated weight until reaching a goal state. In the classical setting, one seeks an optimal \emph{proper} scheduler where proper means
that a goal state is reached almost surely. Polynomial-time solutions exist
exploiting the fact that
optimal memoryless deterministic schedulers exist (provided the optimal value is finite)
and can be computed using linear programming techniques, possibly in combination
with model transformations
(see \cite{bertsekas1991,deAlfaro1999,lics2018}).
The restriction to proper schedulers, however, 
is often  too restrictive.
First, there are models that have no proper scheduler.
Second, even if proper schedulers exist,
the expectation of the accumulated weight of schedulers missing the goal with a positive probability should be taken into account as well.
Important such applications 
include the semantics of probabilistic programs (see e.g. \cite{gretz2014,katoen2015,barthe2016,chatterjee2016,olmedo2018}) where no guarantee for almost sure termination can be given and the analysis of program properties at termination time gives rise to stochastic shortest (longest) path problems in which the goal (halting configuration) is not reached almost surely.
Other examples are the fault-tolerance analysis (e.g., expected costs of repair mechanisms) in selected error scenarios that can appear with some positive, but small probability or the trade-off analysis with conjunctions of utility and cost constraints that are achievable with positive probability, but not almost surely (see e.g. \cite{baier2014}).

This motivates the switch to variants of classical SSP problems where the restriction to proper schedulers is relaxed. One option (e.g., considered in \cite{chen2013}) is to seek a scheduler optimizing the expectation of the random variable that
 assigns weight $0$ to all paths not reaching the goal and the accumulated weight of the
 shortest prefix reaching the goal  to all other paths.
We refer to this expectation as \emph{partial expectation}. Second, we consider the \emph{conditional expectation} of the accumulated weight until reaching the goal under the condition that the goal is reached.
In general, partial expectations describe situations in which some reward (positive and negative) is accumulated but only retrieved if a certain goal is met. 
In particular, partial expectations can be an appropriate replacement for the classical expected weight before reaching the goal if we want to include schedulers which miss the goal with some (possibly very small) probability. 
In contrast to conditional expectations, the resulting scheduler still has an incentive to reach the goal with a high probability, while schedulers maximizing the conditional expectation might reach the goal with a very small  positive probability.

Previous work on partial or conditional expected accumulated weights was restricted to the case of non-negative weights.
More precisely,
partial expectations have been studied in the setting of stochastic multiplayer games with non-negative weights \cite{chen2013}.
Conditional expectations in MDPs with non-negative weights 
have been addressed in \cite{tacas2017}. 
In both cases, optimal values are achieved by weight-based deterministic schedulers that depend on the current state and the weight that has been accumulated so far,  while memoryless schedulers are not sufficient.
Both \cite{chen2013} and \cite{tacas2017} prove the existence of a \emph{saturation point} for the accumulated weight from which on optimal schedulers behave memoryless and maximize the probability to reach a goal state.
This yields exponential-time algorithms for computing optimal schedulers using an iterative linear programming approach.
Moreover, \cite{tacas2017} proves that the threshold problem for conditional expectations (``does there exist a scheduler $\sched$ such that the conditional expectation under $\sched$ exceeds a given threshold?'') is PSPACE-hard even for acyclic MDPs. 

The purpose of the paper is to study partial and conditional expected accumulated weights for MDPs with integer weights.
The switch from non-negative to integer weights indeed causes several additional difficulties. We start with the following observation.
While optimal partial or conditional expectations in non-negative MDPs are rational, they can be irrational in the general setting:

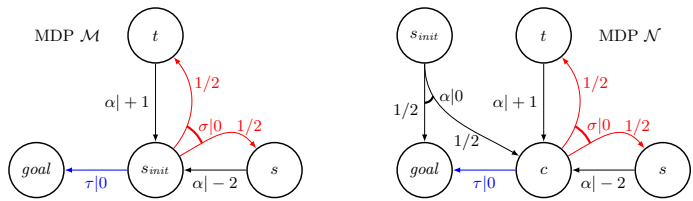
\begin{figure}
\begin{center}
\resizebox{.75\textwidth}{!}{%
\begin{tikzpicture}[scale=1,auto,node distance=8mm,>=latex,font=\small]
    \tikzstyle{round}=[thick,draw=black,circle]

    \node[round, draw=black,minimum size=30pt] (c) {$s_{\mathit{init}}$};
    \node[round, above=15mm of c,minimum size=30pt] (s3) {$t$};
        
    \node[round,right=12mm of c,minimum size=30pt] (s2) {$s$};
    \node[round,left=12mm of c,minimum size=30pt] (goal) {$\mathit{goal}$};
    
    \node[left=4mm of s3] (M) {MDP $\mathcal{M}$};

 \draw[color=red ,->] (c) [out=30,in=130,loop,min distance=7mm] edge  node [ near start, anchor=center] (n5) {} node [pos=0.9,above] {$1/2$} (s2) ;
    \draw[color=red ,->] (c)  [out=50,in=-50,loop,min distance=7mm] edge node [near start, anchor=center] (n6) {} node [pos=0.7,right] {$1/2$} (s3) ;
    \draw[color=red , very thick] (n5.center) edge [bend right=15] node [pos=0.95,right=2pt] {$\sigma | 0$} (n6.center);

    \draw[color=blue ,->] (c)  edge node {$\tau | 0$} (goal) ;
    \draw[color=black ,->] (s3)  edge node [left] {$\alpha | +1$} (c) ;
    \draw[color=black ,->] (s2)  edge node {$\alpha | -2$} (c) ;

\end{tikzpicture}
\hspace{40pt}
\begin{tikzpicture}[scale=1,auto,node distance=8mm,>=latex,font=\small]
    \tikzstyle{round}=[thick,draw=black,circle]

    \node[round, draw=black,minimum size=30pt] (c) {$c$};
    \node[round, above=15mm of c,minimum size=30pt] (s3) {$t$};
        
    \node[round,right=12mm of c,minimum size=30pt] (s2) {$s$};
    \node[round,left=12mm of c,minimum size=30pt] (goal) {$\mathit{goal}$};
    
    \node[round, draw=black, above=15mm  of goal, minimum size=0pt,minimum size=30pt] (sinit) {$s_{\mathit{init}}$};
    
    \node[right=4mm of s3] (N) {MDP $\mathcal{N}$};

\draw[color=black ,->] (sinit) [out=-90,in=150,loop,min distance=7mm] edge  node [ near start, anchor=center] (m5) {} node [pos=0.6,below] {$1/2$} (c) ;
    \draw[color=black ,->] (sinit)   edge node [, anchor=center] (m6) {} node [pos=0.6,left] {$1/2$} (goal) ;
    \draw[color=black , very thick] (m5.center) edge [bend left=35] node [pos=0.1, right] {$\alpha | 0$} (m6.center);

 \draw[color=red ,->] (c) [out=30,in=130,loop,min distance=7mm] edge  node [ near start, anchor=center] (n5) {} node [pos=0.9,above] {$1/2$} (s2) ;
    \draw[color=red ,->] (c)  [out=50,in=-50,loop,min distance=7mm] edge node [near start, anchor=center] (n6) {} node [pos=0.7,right] {$1/2$} (s3) ;
    \draw[color=red , very thick] (n5.center) edge [bend right=15] node [pos=1,right=1mm] {$\sigma | 0$} (n6.center);

    \draw[color=blue ,->] (c)  edge node {$\tau | 0$} (goal) ;
    \draw[color=black ,->] (s3)  edge node [left] {$\alpha | +1$} (c) ;
    \draw[color=black ,->] (s2)  edge node {$\alpha | -2$} (c) ;

\end{tikzpicture}
}
\end{center}
\caption{Enabled actions are denoted by Greek letters and the weight associated to the action is stated after the bar. Probabilistic choices are marked by a bold arc and transition probabilities are denoted next to the arrows. }\label{fig:irrational1}
\end{figure}

\begin{myex}
Consider the MDP $\mathcal{M}$ depicted on the left in Figure \ref{fig:irrational1}. In the initial state $s_{\mathit{init}}$, two actions are enabled. Action $\tau$ leads to $\mathit{goal}$ with probability $1$ and weight $0$. Action $\sigma$ leads to the states $s$ and $t$ with probability $1/2$ from where we will return to $s_{\mathit{init}}$ with weight $-2$ or $+1$, respectively.  The scheduler choosing $\tau$ immediately leads to an expected weight of $0$ and is optimal among schedulers reaching the goal almost surely.
As long as we choose $\sigma$ in $s_{\mathit{init}}$, the accumulated weight follows an asymmetric random walk increasing by $1$ or decreasing by $2$ with probability $1/2$ before we return to $s_{\mathit{init}}$. 
It is well known that 
the probability to ever reach accumulated weight $+1$ in this random walk is $1/\Phi$ where $\Phi=\frac{1+\sqrt{5}}{2}$ is the golden ratio. Likewise, ever reaching accumulated weight $n$ has probability $1/\Phi^n$ for all $n\in\mathbb{N}$. Consider the scheduler $\mathfrak{S}_k$ choosing $\tau$ as soon as the accumulated weight reaches $k$ in $s_{\mathit{init}}$. Its partial expectation is $k/\Phi^k$ as the paths which never reach weight $k$ are assigned weight $0$. The maximum is reached at $k=2$. In Section \ref{sec:optimalWD}, we prove that there are optimal schedulers whose decisions only depend on the current state and the weight accumulated so far. With this result we can conclude that the maximal partial expectation is indeed $2/\Phi^2$, an irrational number. 

The conditional expectation of $\mathfrak{S}_k$ in $\mathcal{M}$ is $k$ as $\mathfrak{S}_k$ reaches the goal with accumulated weight $k$ if it reaches the goal. So, the conditional expectation is not bounded. If we add a new initial state making sure that the goal is reached with positive probability as in the MDP $\mathcal{N}$, we can obtain an irrational maximal conditional expectation as well: The  scheduler $\mathfrak{T}_k$ choosing $\tau$ in $c$ as soon as the weight reaches $k$ has conditional expectation $\frac{k/2\Phi^k}{1/2+1/2\Phi^k}$. The maximum is obtained for $k=3$; the maximal conditional expectation is $\frac{3/\Phi^3}{1+1/\Phi^3}=\frac{3}{3+\sqrt{5}}$.
\end{myex}

Moreover,
while the proposed algorithms of \cite{chen2013,tacas2017} crucially rely on the monotonicity of the accumulated weights along the prefixes of paths, the accumulated weights of prefixes of path can oscillate when there are positive and negative weights. As we will see later, this implies that the existence of saturation points is no longer ensured and optimal schedulers might require infinite memory (more precisely, a counter for the accumulated weight). These observations provide evidence 
why linear-programming techniques as used in the
case of non-negative MDPs \cite{chen2013,tacas2017} 
cannot be expected to be applicable  for the general setting.

\tudparagraph{1ex}{Contributions.}
We study the problem of maximizing the partial and conditional expected
accumulated weight in MDPs with integer weights. 
Our first result is that the finiteness of the supremum of partial and conditional expectations in MDPs with integer weights can be checked in polynomial time (Section \ref{sec:PE}). 
For both variants we show that there are optimal weight-based deterministic schedulers if the supremum is finite (Section \ref{sec:optimalWD}).
Although the suprema might be irrational and optimal schedulers might need infinite memory, 
the suprema can be $\epsilon$-approximated in time exponential in the size of the MDP and polynomial in $\log(1/\epsilon)$ (Section \ref{sec:approximation}).
By duality of maximal and minimal expectations, 
analogous results hold for the problem of 
minimizing the partial or conditional expected accumulated weight.
(Note that we can multiply all weights by ${-}1$ and then apply the 
results for maximal partial resp. conditional expectations.)

\tudparagraph{1ex}{Related work.}
Closest to our contribution 
is the above mentioned work on partial expected accumulated weights
in stochastic multiplayer games 
with non-negative weights in \cite{chen2013} and on computation 
schemes for maximal conditional
expected accumulated weights in non-negative MDPs \cite{tacas2017}.
Conditional expected termination time in probabilistic push-down automata 
has been studied in \cite{esparza2005}, which can be seen as 
analogous considerations for a class of infinite-state Markov chains with
non-negative weights.
The recent work on notions of
conditional value at risk in MDPs \cite{kretinsky2018}
also studies conditional expectations, but the considered random variables
are limit averages and a notion of (non-accumulated) weight-bounded
reachability.


\section{Preliminaries} \label{sec:preliminaries}


We give basic definitions and present our notation. More details can be found in textbooks, e.g. \cite{puterman1994}.

\tudparagraph{1ex}{Notations for Markov decision processes.}
A \emph{Markov decision process} (MDP) is a tuple $\mathcal{M} = (S,\Act,P,\sinit,\wgt)$
where $S$ is a finite set of states,
$\Act$ a finite set of actions,
$\sinit \in S$ the initial state,
$P : S \times \Act \times S \to [0,1] \cap \Rational$ is the
transition probability function and
$\wgt : S \times \Act \to \mathbb{Z}$ the weight function.
We require that
$\sum_{t\in S}P(s,\act,t) \in \{0,1\}$
for all $(s,\alpha)\in S\times \Act$.
We write $\Act(s)$ for the set of actions that are enabled in $s$,
i.e., $\act \in \Act(s)$ iff $\sum_{t\in S}P(s,\act,t) =1$. We assume that $\Act(s)$ is non-empty for all $s$ and 
that all states are reachable from $s_{\mathit{init}}$. 
We call a state absorbing if the only enabled action leads to the state itself with probability $1$ and weight $0$.
The paths of $\cM$ are finite or
infinite sequences $s_0 \, \act_0 \, s_1 \, \act_1 \, s_2 \, \act_2 \ldots$
where states and actions alternate such that
$P(s_i,\act_i,s_{i+1}) >0$ for all $i\geq0$.
If $\fpath =
    s_0 \, \act_0 \, s_1 \, \act_1 \,  \ldots \act_{k-1} \, s_k$
is finite, then
  $\wgt(\fpath)=
   \wgt(s_0,\act_0) + \ldots + \wgt(s_{k-1},\act_{k-1})$
  denotes the accumulated weight of $\pi$,
    $P(\fpath) =
   P(s_0,\act_0,s_1) 
   \cdot \ldots \cdot P(s_{k-1},\act_{k-1},s_k)$
   its probability, and
$\last(\fpath)=s_k$ its last state.
The \emph{size} of $\cM$, denoted $\Size(\cM)$,
is the sum of the number of states
plus the total sum of the logarithmic lengths of the non-zero
probability values
$P(s,\alpha,s')$ as fractions of co-prime integers and the weight values $\wgt(s,\alpha)$.


\tudparagraph{1ex}{Scheduler.}
A \emph{(history-dependent, randomized) scheduler} for $\cM$
is a function $\sched$ that assigns to each finite path $\fpath$ 
a probability distribution over $\Act(\last(\fpath))$.
$\sched$ is called memoryless if $\sched(\fpath)=\sched(\fpath')$ for
all finite paths $\fpath$, $\fpath'$ with $\last(\fpath)=\last(\fpath')$,
in which case $\sched$ can be viewed as a function
that assigns to each state $s$ a distribution over $\Act(s)$.
$\sched$ is called deterministic if $\sched(\fpath)$ is a Dirac distribution
for each path $\fpath$,
in which case $\sched$ can be viewed as a function that assigns an action
to each finite path $\fpath$. 
Scheduler $\sched$ is said to be \emph{weight-based} if
$\sched(\fpath)=\sched(\fpath')$ for all finite paths $\fpath$, $\fpath'$
with $\wgt(\fpath)=\wgt(\fpath')$ and $\last(\fpath)=\last(\fpath')$.
Thus, deterministic weight-based schedulers
can be viewed as functions that assign
actions to state-weight-pairs.
By $\mathit{HR}^\mathcal{M}$ we denote the class of all schedulers, by 
$\mathit{WR}^\mathcal{M}$ the class of weight-based schedulers, 
by $\mathit{WD}^\mathcal{M}$ the class of weight-based, deterministic schedulers, 
and by $\mathit{MD}^\mathcal{M}$ the class of memoryless deterministic schedulers.
Given a scheduler $\sched$,
$\infpath \, = \, s_0 \, \act_0 \, s_1 \, \act_1 \ldots$
is a $\sched$-path iff $\infpath$ is a path and
$\sched(s_0 \, \act_0 \, s_1 \, \act_1 \ldots \act_{k-1} \, s_k)(\act_k)>0$
for all $k \geq 0$.

\tudparagraph{1ex}{Probability measure.}
We write $\Pr^{\sched}_{\cM,s}$ or briefly $\Pr^{\sched}_{s}$
to denote the probability measure induced by $\sched$ and $s$.
For details, see \cite{puterman1994}.
We will use LTL-like formulas to denote measurable sets of paths and also
write $\Diamond (wgt \bowtie x)$ to describe the set of infinite paths having a prefix $\pi$ 
with $wgt(\pi)\bowtie x$ for $x\in\mathbb{Z}$ and $\bowtie \ \in \{<,\leq,=,\geq,>\}$. 
Given a measurable set $\psi$ of infinite paths, we define
$\Pr^{\min}_{\cM,s}(\psi) = \inf_{\sched} \Pr^{\sched}_{\cM,s}(\psi)$
and
$\Pr^{\max}_{\cM,s}(\psi) = \sup_{\sched} \Pr^{\sched}_{\cM,s}(\psi)$
where $\sched$ ranges over all schedulers for $\cM$.
Throughout the paper, we suppose that the given MDP has
a designated state $\mathit{goal}$. Then, $p_s^{\max}$ and $p_s^{\min}$ denote the maximal resp. minimal probability of reaching $\mathit{goal}$ from $s$.
That is,
$p_s^{\max} = \mathrm{sup}_{\mathfrak{S}} \Pr^\mathfrak{S}_s (\Diamond \mathit{goal})$ 
and $p_s^{\min} = \inf_{\mathfrak{S}} \Pr^\mathfrak{S}_s (\Diamond \mathit{goal})$. Let 
$Act^{\max} (s) = \{\alpha \in Act(s) | \sum_{t\in S} P(s,\alpha,t) \cdot p_t^{\max} = p_s^{\max}\}$, 
and $Act^{\min} (s) = \{\alpha \in Act(s) | \sum_{t\in S} $ $P(s,\alpha,t) \cdot p_t^{\min} = p_s^{\min}\}$.

\tudparagraph{1ex}{Mean payoff.} A well-known measure for the long-run behavior of a scheduler $\mathfrak{S}$ in an MDP $\mathcal{M}$ is the \emph{mean payoff}. 
Intuitively, the mean payoff is the amount of weight accumulated per step on average in the long run. 
Formally, we define the mean payoff as the following random variable  on infinite paths $\zeta=s_0 \alpha_0 s_1 \alpha_1 \dots$:
$\mathit{MP}(\zeta):= \liminf\limits_{k\to \infty} \frac{\sum_{i=0}^k wgt(s_i,\alpha_i)}{k+1}$. 
The mean payoff of the scheduler $\mathfrak{S}$ starting in $s_{\mathit{init}}$ is then defined 
as the expected value  $\mathbb{E}^\mathfrak{S}_{s_{\mathit{init}}} (\mathit{MP})$. 
The maximal mean payoff is the supremum over all schedulers which is equal to the maximum over all $MD$-schedulers: 
$\mathbb{E}^{\max}_{s_{\mathit{init}}} (\mathit{MP})=\max_{\mathfrak{S}\in \mathit{MD}} \mathbb{E}^\mathfrak{S}_{s_{\mathit{init}}} (\mathit{MP})$. 
In strongly connected MDPs, the maximal mean payoff does not depend on the initial state.

\tudparagraph{1ex}{End components, MEC-quotient.}
An \emph{end component} of $\cM$ is a strongly connected sub-MDP. End components
can be formalized as pairs $\cE = (E,\ActEC)$ where $E$ is a nonempty subset
of $S$ and $\ActEC$ a function that assigns to each state $s\in E$ a nonempty
subset of $\Act(s)$ such that the graph induced by $\cE$ is strongly connected.
$\cE$ is called \emph{maximal} if there is no end component
$\cE' = (E',\ActEC')$ with $\cE \not= \cE'$, $E \subseteq E'$
and $\ActEC(s) \subseteq \ActEC'(s)$ for all $s\in E$.
The \emph{MEC-quotient} of an MDP $\cM$ is the MDP $\MEC(\cM)$ arising
from $\cM$
by collapsing all states that belong to the same maximal end component $\mathcal{E}$ to a state $s_\mathcal{E}$. 
All actions enabled in some state in $\mathcal{E}$ not belonging to $\mathcal{E}$ are enabled in $s_\mathcal{E}$. 
Details and the formal construction can be found in 
\cite{CBGK2008}. 
We call an end component $\mathcal{E}$
\emph{positively weight-divergent} if there is a scheduler $\mathfrak{S}$ for $\mathcal{E}$ 
such that $\Pr^\mathfrak{S}_{\mathcal{E},s}(\Diamond (\wgt \geq n))=1$ for all $s\in \mathcal{E}$ and $n\in \mathbb{N}$.  
In \cite{lics2018}, it is shown that the existence of positively weight-divergent end components can be decided in polynomial time.


\section{Partial and Conditional Expectations in MDPs} \label{sec:PE}


We define  \emph{partial} and \emph{conditional expectations} in MDPs. We extend the definition of \cite{chen2013}   by introducing partial expectations with \emph{bias} which  are closely related to conditional expectations.
Afterwards, we sketch the computation of  maximal partial expectations in MDPs with non-negative weights and in Markov chains.

\tudparagraph{1ex}{Partial and conditional expectation.} In the sequel, let $\mathcal{M}$ be an MDP with a designated absorbing goal state $\mathit{goal}$. Furthermore, we collapse all states from which $\mathit{goal}$ is not reachable to one absorbing state $\fail$.
Let  $b\in\mathbb{R}$. We define the random variable $\oplus^b \mathit{goal}$ on infinite paths $\zeta$ by
\[ \oplus^b \mathit{goal} (\zeta)=\begin{cases}
\wgt(\zeta) + b& \text{if }\zeta\vDash\Diamond \mathit{goal}\text{,}\\
0 & \text{if }\zeta\not\vDash\Diamond \mathit{goal}\text{.}
\end{cases}
\]
We call the expectation of this random variable under a scheduler $\mathfrak{S}$ the \emph{partial expectation with bias} $b$ of $\mathfrak{S}$ and write 
$\mathit{PE}^{\mathfrak{S}}_{\mathcal{M},s_{\mathit{init}}}[b] := \mathbb{E}^\mathfrak{S}_{\mathcal{M},s_{\mathit{init}}} (\oplus^b \mathit{goal})$ as well as $\mathit{PE}^{\sup}_{\mathcal{M},s_{\mathit{init}}}[b]:=\sup_{\mathfrak{S}\in \mathit{HR}^\mathcal{M}}\mathit{PE}^{\mathfrak{S}}_{\mathcal{M},s_{\mathit{init}}}[b]$. If $b=0$, we sometimes drop the argument $b$; if $\mathcal{M}$ is clear from the context, we drop the subscript. In order to maximize the partial expectation, intuitively one  has to find the right balance between reaching $\goal$ with  high probability and accumulating a high positive amount of weight before reaching $\goal$. The bias can be used to shift this balance by additionally rewarding or penalizing a high probability to reach $\goal$.

The \emph{conditional expectation} of $\mathfrak{S}$ is defined as the expectation of $\oplus^0 \mathit{goal}$ under the condition that $\mathit{goal}$ is reached. It  is defined if $\Pr^\mathfrak{S}_{\mathcal{M},s_{\mathit{init}}} (\Diamond \mathit{goal})>0$. We write
$\mathit{CE}^\mathfrak{S}_{\mathcal{M},s_{\mathit{init}}}:= \mathbb{E}^\mathfrak{S}_{\mathcal{M},s_{\mathit{init}}} (\oplus^0 \mathit{goal} | \Diamond \mathit{goal})$ and $\mathit{CE}^{\sup}_{\mathcal{M},s_{\mathit{init}}}=\sup_{\mathfrak{S}} \mathit{CE}^\mathfrak{S}_{\mathcal{M},s_{\mathit{init}}}$ where the supremum is taken over all schedulers $\mathfrak{S}$ with $\Pr^\mathfrak{S}_{\mathcal{M},s_{\mathit{init}}}(\Diamond \mathit{goal})>0$. 
We can express the conditional expectation as $\mathit{CE}^\mathfrak{S}_{\mathcal{M},s_{\mathit{init}}} = {\mathit{PE}^\mathfrak{S}_{\mathcal{M},s_{\mathit{init}}}}/{\Pr^\mathfrak{S}_{\mathcal{M},s_{\mathit{init}}}(\Diamond \mathit{goal})}$.
The following proposition establishes a close connection between conditional expectations and partial expectations with bias.

\begin{myprop} \label{prop:threshold}
Let $\mathcal{M}$ be an MDP, $\mathfrak{S}$ a scheduler with $\Pr^\mathfrak{S}_{\sinit}(\Diamond \goal)>0$, $\theta \in \mathbb{Q}$, and $\bowtie\in\{<,\leq,\geq,>\}$. Then we have
$\mathit{PE}^{\mathfrak{S}}_{s_{\mathit{init}}}[-\theta] \bowtie 0 \text{ iff }\mathit{CE}^{\mathfrak{S}}_{s_{\mathit{init}}}\bowtie \theta$.
Further, if $\Pr^{\min}_{s_{\mathit{init}}}(\Diamond \mathit{goal})>0$, then
$\mathit{PE}^{\sup}_{s_{\mathit{init}}} [-\theta]\bowtie 0 \text{ iff }\mathit{CE}^{\sup}_{s_{\mathit{init}}}\bowtie \theta$.
\end{myprop}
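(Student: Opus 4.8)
The plan is to reduce both equivalences to a single algebraic identity linking the two quantities for a fixed scheduler, and then to lift it to the suprema by a witness-and-approximation argument. First I would fix a scheduler $\mathfrak{S}$ with $p := \Pr^\mathfrak{S}_{\sinit}(\Diamond\goal) > 0$ and establish
\[
\mathit{PE}^\mathfrak{S}_{\sinit}[-\theta] \;=\; p\cdot\big(\mathit{CE}^\mathfrak{S}_{\sinit}-\theta\big).
\]
By the definition of $\oplus^{-\theta}\goal$ the additive term $-\theta$ contributes only on paths reaching $\goal$, so linearity of expectation gives $\mathit{PE}^\mathfrak{S}_{\sinit}[-\theta]=\mathit{PE}^\mathfrak{S}_{\sinit}-\theta\cdot p$; substituting the relation $\mathit{CE}^\mathfrak{S}_{\sinit}=\mathit{PE}^\mathfrak{S}_{\sinit}/p$ recalled before the proposition yields the displayed identity. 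Since $p>0$, multiplication by $p$ preserves every comparison with $0$, so $\mathit{PE}^\mathfrak{S}_{\sinit}[-\theta]\bowtie 0$ iff $\mathit{CE}^\mathfrak{S}_{\sinit}-\theta\bowtie 0$ for all $\bowtie\in\{<,\leq,\geq,>\}$; this settles the first assertion.

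For the suprema I would write $P^\ast:=\mathit{PE}^{\sup}_{\sinit}[-\theta]$ and $C^\ast:=\mathit{CE}^{\sup}_{\sinit}$, and note that the cases $\bowtie\in\{\leq,<\}$ are the logical negations of $\bowtie\in\{>,\geq\}$, so it suffices to treat $>$ and $\geq$. The crucial consequence of the hypothesis $\Pr^{\min}_{\sinit}(\Diamond\goal)>0$ is a uniform bound $p_\mathfrak{S}\geq p_{\min}>0$ valid for every scheduler $\mathfrak{S}$. The case $>$ is a direct witness extraction: if $P^\ast>0$ then some $\mathfrak{S}$ satisfies $\mathit{PE}^\mathfrak{S}_{\sinit}[-\theta]>0$, whence $\mathit{CE}^\mathfrak{S}_{\sinit}>\theta$ by the identity and $C^\ast>\theta$; conversely a witness for $C^\ast>\theta$ produces a positive partial expectation with bias $-\theta$ and hence $P^\ast>0$.

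The hard part will be the case $\geq$ for the suprema, since here the supremum need not be attained and I must argue by approximation. For the direction $P^\ast\geq 0\Rightarrow C^\ast\geq\theta$ I would, given $\epsilon>0$, choose $\mathfrak{S}$ with $\mathit{PE}^\mathfrak{S}_{\sinit}[-\theta]>-\epsilon$; dividing the identity by $p_\mathfrak{S}$ gives $\mathit{CE}^\mathfrak{S}_{\sinit}-\theta>-\epsilon/p_\mathfrak{S}\geq-\epsilon/p_{\min}$, so $C^\ast>\theta-\epsilon/p_{\min}$, and letting $\epsilon\to 0$ yields $C^\ast\geq\theta$. This is exactly the step where the uniform lower bound $p_{\min}>0$ is indispensable: without it the error term $\epsilon/p_\mathfrak{S}$ could be unbounded and the limit argument would fail. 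The reverse direction $C^\ast\geq\theta\Rightarrow P^\ast\geq 0$ is easier and needs only $p_\mathfrak{S}\leq 1$: choosing $\mathfrak{S}$ with $\mathit{CE}^\mathfrak{S}_{\sinit}-\theta>-\epsilon$, the identity gives $\mathit{PE}^\mathfrak{S}_{\sinit}[-\theta]=p_\mathfrak{S}\big(\mathit{CE}^\mathfrak{S}_{\sinit}-\theta\big)>-\epsilon$, because scaling a number in $(-\epsilon,0)$ by a factor in $(0,1]$ only moves it towards $0$ while a nonnegative value stays nonnegative; hence $P^\ast\geq-\epsilon$ for every $\epsilon$ and thus $P^\ast\geq 0$.
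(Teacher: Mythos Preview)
Your proof is correct and rests on the same algebraic identity as the paper's, namely $\mathit{PE}^{\mathfrak{S}}_{\sinit}[-\theta]=\mathit{PE}^\mathfrak{S}_{\sinit}-\theta\cdot\Pr^{\mathfrak{S}}_{\sinit}(\Diamond\goal)$, which you rewrite as $p\cdot(\mathit{CE}^\mathfrak{S}_{\sinit}-\theta)$. The paper's proof of the second claim consists of the single phrase ``by quantification over all schedulers''; your treatment is more explicit and, in fact, more honest: for the strict relations a direct witness argument suffices, but for $\bowtie\in\{\geq,\leq\}$ the supremum need not be attained at this point in the paper, and your $\epsilon$-approximation argument is exactly what is needed. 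You also correctly identify that the hypothesis $\Pr^{\min}_{\sinit}(\Diamond\goal)>0$ is used precisely to control the error term $\epsilon/p_\mathfrak{S}$ in the direction $P^\ast\geq 0\Rightarrow C^\ast\geq\theta$. So: same approach, but you have filled in a genuine detail that the paper leaves implicit.
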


\begin{proof}
The first claim follows from $\mathit{PE}^{\mathfrak{S}}_{s_{\mathit{init}}}[-\theta]=\mathit{PE}^\mathfrak{S}_{s_{\mathit{init}}}[0]-\Pr^{\mathfrak{S}}_{s_{\mathit{init}}}(\Diamond \mathit{goal}) \cdot \theta$. The second claim follows by quantification over all schedulers. 
\end{proof}

In \cite{tacas2017}, it is shown that deciding whether $\mathit{CE}^{\sup}_{s_{\mathit{init}}}\bowtie \theta$ for $\bowtie\in\{<,\leq,\geq,>\}$ and $\theta\in\mathbb{Q}$ is PSPACE-hard even for acyclic MDPs. We conclude:
\begin{mycor}
Given an MDP $\mathcal{M}$, $\bowtie\in\{<,\leq,\geq,>\}$, and $\theta\in\mathbb{Q}$, deciding whether $\mathit{PE}^{\sup}_{\mathcal{M},s_{\mathit{init}}} \bowtie \theta$   is PSPACE-hard.
\end{mycor}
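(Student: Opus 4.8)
The plan is to reduce the threshold problem for conditional expectations, which by \cite{tacas2017} is PSPACE-hard even on acyclic MDPs, to the threshold problem for partial expectations, using Proposition~\ref{prop:threshold} as the bridge and a small gadget to turn the bias $-\theta$ into ordinary integer weights. Concretely, starting from an instance $(\mathcal{M},\bowtie,\theta)$ of the conditional-expectation threshold problem, I would build in polynomial time an MDP $\mathcal{M}'$ and the instance $(\mathcal{M}',\bowtie,0)$ of the partial-expectation threshold problem such that $\mathit{PE}^{\sup}_{\mathcal{M}'} \bowtie 0$ holds if and only if $\mathit{CE}^{\sup}_{\mathcal{M}} \bowtie \theta$ holds. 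This in particular yields hardness of the partial-expectation threshold problem already for the target threshold $0$.

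For the gadget, write $\theta = p/q$ with $q\in\mathbb{N}$, $q\geq 1$. First I would multiply every weight of $\mathcal{M}$ by $q$; since paths missing $\mathit{goal}$ contribute $0$ to the partial expectation in either case, this scales $\mathit{PE}^{\mathfrak{S}}_{\mathcal{M}}[0]$ by the factor $q$ for every scheduler. Then I would prepend a fresh initial state $s'_{\mathit{init}}$ whose only action moves deterministically to the old $s_{\mathit{init}}$ with weight $-p$. Every goal-reaching path now carries the extra summand $-p$, so for each scheduler one obtains $\mathit{PE}^{\mathfrak{S}}_{\mathcal{M}'} = q\,\mathit{PE}^{\mathfrak{S}}_{\mathcal{M}}[0] - p\,\Pr^{\mathfrak{S}}_{s_{\mathit{init}}}(\Diamond \mathit{goal}) = q\,\mathit{PE}^{\mathfrak{S}}_{\mathcal{M}}[-\theta]$, and taking suprema gives $\mathit{PE}^{\sup}_{\mathcal{M}'} = q\,\mathit{PE}^{\sup}_{\mathcal{M}}[-\theta]$. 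As $q>0$, the comparison with $0$ is preserved, and Proposition~\ref{prop:threshold} turns $\mathit{PE}^{\sup}_{\mathcal{M}}[-\theta]\bowtie 0$ into $\mathit{CE}^{\sup}_{\mathcal{M}}\bowtie\theta$. The gadget keeps the MDP acyclic and enlarges $\Size(\mathcal{M})$ by only $O(\log p + \log q)$ per weight, so the reduction runs in polynomial time.

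The main obstacle is the hypothesis $\Pr^{\min}_{s_{\mathit{init}}}(\Diamond \mathit{goal})>0$ required by the supremum version of Proposition~\ref{prop:threshold}: without it, schedulers missing the goal almost surely contribute the value $0$ to $\mathit{PE}^{\sup}_{\mathcal{M}'}$ and can decouple the two suprema. I expect this to be the one point that needs genuine care. For the relation ${>}$ it is harmless, since any scheduler with $\mathit{PE}^{\mathfrak{S}}_{\mathcal{M}'}>0$ automatically reaches $\mathit{goal}$ with positive probability, so the equivalence survives unconditionally. For the remaining relations I would either verify that the construction of \cite{tacas2017} already produces instances with $\Pr^{\min}_{s_{\mathit{init}}}(\Diamond \mathit{goal})>0$, or enforce this property while preserving the truth value of $\mathit{CE}^{\sup}_{\mathcal{M}}\bowtie\theta$; this is exactly the step where I would expect to spend the real effort.
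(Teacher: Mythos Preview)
Your approach matches the paper's intended argument: reduce from the PSPACE-hard conditional-expectation threshold problem of \cite{tacas2017} via Proposition~\ref{prop:threshold}. In fact the paper gives no explicit proof of this corollary at all---it simply writes ``We conclude:'' after stating Proposition~\ref{prop:threshold} and citing the hardness result. Your proposal is therefore more detailed than the paper itself; the gadget you describe (scaling weights by $q$ and prepending a state of weight $-p$) is exactly the standard way to absorb a rational bias into integer weights, and the paper uses the same idea later in Section~\ref{sec:approximation} when it remarks that $\mathit{PE}^{\sup}[\theta]$ can be computed by ``adding a new initial state with a transition of weight $\theta$.''

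Your concern about the hypothesis $\Pr^{\min}_{s_{\mathit{init}}}(\Diamond \mathit{goal})>0$ in Proposition~\ref{prop:threshold} is legitimate and is glossed over by the paper. The cleanest resolution is indeed the first option you mention: the acyclic hardness construction in \cite{tacas2017} can be taken to satisfy $\Pr^{\min}_{s_{\mathit{init}}}(\Diamond \mathit{goal})>0$ (one can always adjoin an initial coin flip that goes to $\mathit{goal}$ with probability $\tfrac12$ and shift the threshold accordingly, without affecting acyclicity or PSPACE-hardness). So this is a small technical point rather than a genuine obstacle, and your plan handles it correctly.
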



\tudparagraph{1ex.}{Finiteness.}
We present criteria for the finiteness of $\mathit{PE}^{\sup}_{s_{\mathit{init}}}[b]$ and $\mathit{CE}^{\sup}_{s_{\mathit{init}}}$.  
Detailed proofs can be found in Appendix \ref{app:finiteness}. 
By slightly modifying the construction from \cite{lics2018} which removes end components only containing $0$-weight cycles, we obtain the following result.
\begin{restatable}{myprop}{removingzeroECs} \label{prop:removingzeroECs}
Let $\cM$ be an MDP which does not contain positively weight-divergent end components and let $b\in\mathbb{Q}$. Then there is a polynomial time transformation to an MDP $\cN$ containing all states from $\cM$ and possibly an additional absorbing state $\mathit{fail}$ such that 
\begin{itemize}[noitemsep,topsep=0pt]
\item all end components of $\cN$ have negative maximal expected mean payoff,
\item for any scheduler $\sched$ for $\cM$ there is a scheduler $\sched^\prime$ for $\cN$ with $\Pr^\sched_{\cM,s}(\Diamond \mathit{goal})=\Pr^{\sched^\prime}_{\cN,s}(\Diamond \mathit{goal})$ and $\mathit{PE}^{\sched}_{\cM,s}[b]=\mathit{PE}^{\sched^\prime}_{\cN,s}[b]$ for any state $s$ in $\cM$, and vice versa.
\end{itemize}
\end{restatable}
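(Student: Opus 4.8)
The plan is to reduce the whole statement to a claim about the maximal expected mean payoff of the end components of $\mathcal{M}$, and then to eliminate the ``borderline'' end components by a local surgery that reroutes their recurrent behaviour to $\fail$. The first and decisive structural fact I would establish is that, since $\mathcal{M}$ has no positively weight-divergent end components, \emph{every} end component $\mathcal{E}$ of $\mathcal{M}$ has maximal expected mean payoff $\le 0$. Indeed, if some $\mathcal{E}$ had a positive maximal mean payoff, fix a memoryless deterministic scheduler $\mathfrak{S}$ on $\mathcal{E}$ attaining a positive value. In the finite Markov chain induced by $\mathfrak{S}$ the accumulated weight grows at a positive rate, so by the ergodic theorem $\wgt \to +\infty$ almost surely; by strong connectivity this holds from every state of $\mathcal{E}$, whence $\Pr^{\mathfrak{S}}_{\mathcal{E},s}(\Diamond(\wgt \ge n)) = 1$ for all $s$ and all $n$, i.e.\ $\mathcal{E}$ is positively weight-divergent, a contradiction. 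Consequently the end components with strictly negative maximal mean payoff already satisfy the target property, and only those with maximal mean payoff exactly $0$ remain to be removed.

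Next I would carry out the transformation, adapting the construction of \cite{lics2018} that collapses end components all of whose cycles have weight $0$. I process the maximal end components with maximal mean payoff $0$ one after another. Fixing such a MEC $\mathcal{E}$, I restrict attention to its mean-payoff-optimal state--action pairs, which form a \emph{zero-gain core}; by the previous paragraph this core contains no sub-end-component of positive mean payoff, so its recurrent classes have gain $0$. The key observation is that any $\mathfrak{S}$-path remaining in such a core forever never visits $\goal$ and is therefore assigned value $0$ by $\oplus^b\goal$, irrespective of its (possibly unbounded) accumulated weight; the same holds for the goal-reachability event. Hence I may replace ``remaining in the core forever'' by a single transition to the fresh absorbing state $\fail$, which likewise contributes $0$, while the \emph{transient} passages through $\mathcal{E}$ are preserved by collapsing the core with the standard potential transformation, which rewrites the weight of each retained exit so that the accumulated weight reached upon leaving $\mathcal{E}$ is unchanged. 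After this surgery the collapsed gadget carries no end component of mean payoff $\ge 0$; iterating over all borderline MECs yields the MDP $\mathcal{N}$ in which every end component has strictly negative maximal mean payoff.

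Finally I would verify the scheduler correspondence and conclude. For the bias it suffices to note, exactly as in the proof of Proposition~\ref{prop:threshold}, that $\mathit{PE}^{\mathfrak{S}}_{s}[b] = \mathit{PE}^{\mathfrak{S}}_{s}[0] + b\cdot\Pr^{\mathfrak{S}}_{s}(\Diamond\goal)$, so that once the transformation preserves both $\mathit{PE}[0]$ and the goal-reachability probability it preserves $\mathit{PE}[b]$ for every $b\in\mathbb{Q}$ simultaneously. The forward direction maps a scheduler $\mathfrak{S}$ of $\mathcal{M}$ to the scheduler $\mathfrak{S}'$ of $\mathcal{N}$ that mimics $\mathfrak{S}$ on transient moves and, whenever $\mathfrak{S}$ commits to staying in a core forever (an event of some probability $q$), takes the $\fail$-transition with the matching probability $q$; both the $0$-valued contribution to $\mathit{PE}$ and the goal probability then coincide, and the backward direction is symmetric. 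The MEC decomposition, the computation of the maximal mean payoff per MEC by linear programming, the potential computation, and the local gadget all run in polynomial time and are repeated polynomially often, giving the claimed complexity. I expect the main obstacle to be precisely the middle step: showing that for a \emph{general} mean-payoff-$0$ end component, where cycles may carry nonzero weight (unlike the pure $0$-weight-cycle case treated in \cite{lics2018}), the ``stay forever'' behaviour can be separated cleanly from the transient behaviour and rerouted to $\fail$ \emph{without} altering the distribution of accumulated weight at the exits, together with certifying that the resulting gadget indeed has strictly negative maximal mean payoff.
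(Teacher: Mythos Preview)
Your overall plan is sound, and the first paragraph (reducing to end components with maximal mean payoff exactly $0$) is correct and matches the paper. The gap is precisely where you locate it yourself, but it is not a real obstacle: under the hypothesis of the proposition it simply does not arise. The key structural lemma you are missing, taken from~\cite{lics2018} and quoted at the start of the paper's proof, is that an end component which is \emph{not} positively weight-divergent and has non-negative maximal expected mean payoff has \emph{all cycles of weight~$0$}. Thus every ``borderline'' end component you need to eliminate is already of the pure $0$-weight-cycle type, and the spider construction of~\cite{lics2018} applies verbatim; there is no ``general mean-payoff-$0$ end component with nonzero-weight cycles'' to worry about.

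This matters for your argument because your proposed potential transformation only works when the accumulated weight upon exiting the core is a deterministic function of the exit state, which is exactly the all-cycles-weight-$0$ property. If cycles could carry nonzero weight, the exit weight would be a genuine random variable depending on the path taken inside $\mathcal{E}$, and no single potential shift per exit action could reproduce its distribution; your construction would then fail to preserve $\mathit{PE}$. Once you invoke the lemma above, the unique weight $w_s$ from any state $s\in\mathcal{E}$ to a chosen reference state $s_0$ is well-defined, the spider construction (pick $s_0$, reroute every state to $s_0$ with weight $w_s$, move each exit action $\beta$ from $s$ to $s_0$ with adjusted weight $\wgt(s,\beta)-w_s$, and add a $\tau$-edge from $s_0$ to $\fail$) preserves both $\Pr(\Diamond\goal)$ and $\mathit{PE}[b]$ exactly, and the scheduler correspondence in both directions is straightforward. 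Your handling of the bias $b$ and the polynomial-time claim are fine.
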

Hence, we can restrict ourselves to MDPs in which all end components have negative maximal expected mean payoff if there are no positively weight divergent end components.
The following result is now analogous to the result in \cite{lics2018} for the classical SSP problem.

\begin{restatable}{myprop}{finitenessPE} Let $\mathcal{M}$ be an MDP and $b\in\mathbb{R}$ arbitrary.
The optimal partial expectation $\mathit{PE}^{\sup}_{s_{\mathit{init}}}[b]$ is finite if and only if there are no positively weight-divergent end components in $\mathcal{M}$.
\end{restatable}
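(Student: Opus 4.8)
The plan is to prove the two implications separately, in both cases extracting the relevant information from the mean payoffs of the end components. For the direction ``no positively weight-divergent end component $\Rightarrow$ finiteness'', I would first invoke Proposition~\ref{prop:removingzeroECs} to pass to an MDP $\cN$ in which every end component has negative maximal expected mean payoff and which has the same goal-reaching probabilities and the same partial expectations with bias $b$ as $\cM$; this is legitimate precisely because $\cM$ has no positively weight-divergent end components. It then suffices to bound $\mathit{PE}^\sched_{\cN,s}[b]$ uniformly over all schedulers $\sched$ and all states $s$.

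For this bound I would rewrite the partial expectation in terms of the weight accumulated at the first visit to $\goal$. Writing $\tau$ for the hitting time of $\goal$ and $S_k$ for the weight of the length-$k$ prefix, one has $\mathit{PE}^\sched_{\cN,s}[b] \le \Expected^\sched_{\cN,s}[\max_k S_k] + \max(b,0)$, since $S_0=0$ and on $\{\tau<\infty\}$ the terminal weight is at most $\max_k S_k$. Hence it is enough to show that $\Expected^\sched_s[\max_k S_k] = \sum_{w \ge 1} \Pr^\sched_s(\Diamond(\wgt \ge w))$ is bounded independently of $\sched$. The key quantitative input is that, when every end component has negative maximal mean payoff, the maximal probability of ever accumulating weight at least $w$ decays exponentially, i.e. $\Pr^{\max}_s(\Diamond(\wgt \ge w)) \le K \gamma^w$ for constants $K>0$ and $\gamma\in(0,1)$ depending only on $\cN$; summing the geometric series yields a uniform finite bound and thus $\mathit{PE}^{\sup}_{\sinit}[b] < \infty$. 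Establishing this exponential decay is the main obstacle. I expect to derive it from the negative mean-payoff property exactly as in the finiteness analysis of the classical SSP problem in \cite{lics2018}, e.g.\ by exhibiting a state potential that turns the accumulated weight into a supermartingale with uniform negative drift while the weight is large.

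For the converse, ``positively weight-divergent end component $\Rightarrow$ $\mathit{PE}^{\sup}_{\sinit}[b]=\infty$'', I would argue by contraposition and construct, for each $n\in\Nat$, a scheduler $\sched_n$ with $\mathit{PE}^{\sched_n}_{\sinit}[b]\to\infty$. Let $\mathcal{E}$ be a positively weight-divergent end component; since $\fail$ is absorbing with a $0$-weight self-loop it cannot lie in such a component, so every state of $\mathcal{E}$ reaches $\goal$, and $p_0 := \min_{s\in\mathcal{E}} p_s^{\max} > 0$. The scheduler $\sched_n$ first navigates from $\sinit$ into a fixed state $s^\ast\in\mathcal{E}$, which succeeds with some fixed probability $q>0$; then it runs the weight-divergent scheduler inside $\mathcal{E}$ until the accumulated weight first reaches level $n$, which happens almost surely by definition of weight-divergence; from that moment it switches to a fixed memoryless scheduler reaching $\goal$ with probability at least $p_0$.

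To see $\mathit{PE}^{\sched_n}_{\sinit}[b]\to\infty$, I would condition on the stopping time $\rho$ at which the weight first reaches $n$ inside $\mathcal{E}$. Writing $X_\rho$ for the state reached and $W_\rho\ge n$ for the weight there, the tower property splits the post-$\rho$ contribution into $W_\rho\cdot\Pr^{\sched_n}(\Diamond\goal\mid X_\rho)$ plus the residual partial expectation of the fixed memoryless scheduler from $X_\rho$ with weight reset to $0$. The latter is a finite constant, since a memoryless scheduler induces a finite Markov chain in which the expected weight on goal-reaching paths is finite; hence it is bounded below by some $-D$ independent of $n$, and the bounded-length navigation phase together with the paths missing $\mathcal{E}$ contribute at least a further constant $-D'$. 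Altogether $\mathit{PE}^{\sched_n}_{\sinit}[b] \ge q\,(n\,p_0 - D - |b|) - D'$, which tends to $\infty$ as $n\to\infty$, giving $\mathit{PE}^{\sup}_{\sinit}[b]=\infty$ and completing the argument.
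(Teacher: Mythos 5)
Your proposal is correct and takes essentially the same route as the paper: for finiteness you reduce via Proposition~\ref{prop:removingzeroECs} to end components with negative maximal mean payoff and then sum the exponentially decaying probabilities $\Pr^{\max}_s(\Diamond (\wgt \geq w))$, and for divergence you pump weight almost surely inside a positively weight-divergent end component before switching to a scheduler maximizing the probability of reaching $\goal$ --- exactly the paper's two arguments. The only notable difference is where the exponential decay comes from: you defer it to a supermartingale/potential argument in the style of \cite{lics2018} (which is precisely the machinery the paper itself develops in Section~\ref{sec:approximation} via Hordijk--Kallenberg super-potentials and Blackwell's theorem), whereas the paper's appendix proof obtains it by exhibiting a single finite threshold $W$ with $\max_s\Pr^{\max}_s(\Diamond(\wgt\geq W)) = p < 1$ --- combining per-end-component bounds with a step-counting argument in the MEC-quotient --- and then iterating this threshold to get the geometric bound.
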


To obtain an analogous result for conditional expectations, we observe that the finiteness of the maximal partial expectation is necessary for the finiteness of the maximal conditional expectation. However, this is not sufficient.
 In \cite{tacas2017}, a \emph{critical scheduler} is defined as a scheduler $\mathfrak{S}$ for which there is a path containing a positive cycle and for which $\Pr^{\mathfrak{S}}_{s_{\mathit{init}}}(\Diamond \mathit{goal})=0$. Given a critical scheduler, it is easy to construct a sequence of schedulers with unbounded conditional expectation (see Appendix \ref{app:finiteness} and \cite{tacas2017}).
On the other hand, if $\Pr^{\min}_{\mathcal{M},s_{\mathit{init}}}(\Diamond \mathit{goal})>0$, then $\mathit{CE}^{\sup}_{s_{\mathit{init}}}$ is finite if and only if $\mathit{PE}^{\sup}_{s_{\mathit{init}}}$ is finite. 
We will show how we can restrict ourselves to this case if there are no critical schedulers:

So, let $\mathcal{M}$ be an MDP with $\Pr^{\min}_{\mathcal{M},s_{\mathit{init}}}(\Diamond \mathit{goal})=0$ and suppose there are no critical schedulers for $\mathcal{M}$. Let $S_0$ be the set of all states reachable from $s_{\mathit{init}}$ while only choosing actions in $Act^{\min}$. As there are no critical schedulers, $(S_0,Act^{\min})$ does not contain positive cycles. So, there is a finite maximal weight $w_s$ among paths leading from $s_{\mathit{init}}$ to  $s$ in  $S_0$.
Consider the following MDP $\mathcal{N}$: It contains the MDP $\mathcal{M}$ and a new initial state $t_{\mathit{init}}$.
For each $s\in S_0$ and each $\alpha\in Act(s)\setminus Act^{\min}(s)$, $\mathcal{N}$ also contains a new state $t_{s,\alpha}$ which is reachable from $t_{\mathit{init}}$ via an action $\beta_{s,\alpha}$ with weight $w_s$ and probability $1$. 
In $t_{s,\alpha}$, only action $\alpha$ with the same probability distribution over successors and the same weight as in $s$ is enabled. So in $\cN$, one has to decide immediately in which state to leave $S_0$ and one accumulates the maximal weight which can be accumulated in $\cM$ to reach this state in $S_0$. In this way, we ensure that $\Pr^{\min}_{\mathcal{N}, t_{\mathit{init}}} (\Diamond \mathit{goal}) >0$.

\begin{restatable}{myprop}{reductionPosPr} \label{prop:reduction}
The constructed MDP $\mathcal{N}$ satisfies $ \mathit{CE}^{\sup}_{\mathcal{N},t_{\mathit{init}}}= \mathit{CE}^{\sup}_{\mathcal{M},s_{\mathit{init}}} $.
\end{restatable}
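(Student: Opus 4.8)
The plan is to establish the two inequalities $\mathit{CE}^{\sup}_{\mathcal{N},t_{\mathit{init}}}\le \mathit{CE}^{\sup}_{\mathcal{M},\sinit}$ and $\mathit{CE}^{\sup}_{\mathcal{N},t_{\mathit{init}}}\ge \mathit{CE}^{\sup}_{\mathcal{M},\sinit}$ separately, relying throughout on one structural fact. Since $p^{\min}_{\sinit}=0$, every action $\alpha\in\mathit{Act}^{\min}(s)$ at a state $s\in S_0$ satisfies $\sum_t P(s,\alpha,t)p_t^{\min}=0$, so each of its successors $t$ has $p_t^{\min}=0$ and hence lies in $S_0$ as well; as $\goal\notin S_0$, any path reaching $\goal$ must leave $S_0$ through a \emph{first deviation}, i.e.\ the first action $\alpha\in\Act(s)\setminus\mathit{Act}^{\min}(s)$ chosen at a state $s\in S_0$. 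Before this first deviation the path stays in $S_0$ using only $\mathit{Act}^{\min}$-actions, so its accumulated weight is at most $w_s$, and because $(S_0,\mathit{Act}^{\min})$ has no positive cycle, $w_s$ is attained by a concrete finite path $\pi^\ast$ from $\sinit$ to $s$ with $P(\pi^\ast)>0$. The second tool, used in both directions, is the elementary mediant inequality: for a countable family with weights $b_D>0$ one has $\frac{\sum_D a_D}{\sum_D b_D}\le \sup_D \frac{a_D}{b_D}$, expressing that a conditional expectation obtained by mixing over disjoint entry events cannot exceed the best individual conditional expectation.

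For the upper bound, the mediant inequality applied to the randomized choice a scheduler makes in $t_{\mathit{init}}$ reduces the problem to schedulers $\mathfrak{T}$ of $\mathcal{N}$ that pick a single $\beta_{s,\alpha}$ deterministically; such a $\mathfrak{T}$ collects weight $w_s$, is forced to play $\alpha$, and then runs some continuation scheduler on $\mathcal{M}$. I would match it in $\mathcal{M}$ by the scheduler that attempts to follow $\pi^\ast$ using $\mathit{Act}^{\min}$-actions and, on success (probability $P(\pi^\ast)>0$, arriving at $s$ with weight exactly $w_s$), deviates via $\alpha$ and copies $\mathfrak{T}$'s continuation, while on any random deviation from $\pi^\ast$ it keeps using $\mathit{Act}^{\min}$ forever and thus never reaches $\goal$. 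Conditioning on $\Diamond\goal$ discards the failing branches, and since the successful branch reproduces the offset $w_s$ and the continuation dynamics of $\mathfrak{T}$ verbatim, both the partial expectation and $\Pr(\Diamond\goal)$ are scaled by the common factor $P(\pi^\ast)$; the conditional expectations therefore coincide, giving $\mathit{CE}^{\mathfrak{T}}_{\mathcal{N}}=\mathit{CE}^{\mathfrak{S}}_{\mathcal{M}}\le \mathit{CE}^{\sup}_{\mathcal{M},\sinit}$.

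For the lower bound I would fix a scheduler $\mathfrak{S}$ of $\mathcal{M}$ with $\Pr^{\mathfrak{S}}_{\sinit}(\Diamond\goal)>0$ and partition its goal-reaching paths by their first-deviation event $D=(\pi,\alpha)$, where $\pi$ is the realized $\mathit{Act}^{\min}$-prefix ending in $s$ and $\alpha$ is the deviating action. Writing $\mathit{PE}_D$ for the contribution of the paths with first deviation $D$ to the partial expectation and $q_D$ for their contribution to $\Pr^{\mathfrak{S}}_{\sinit}(\Diamond\goal)$, we get $\mathit{CE}^{\mathfrak{S}}_{\mathcal{M}}=\frac{\sum_D \mathit{PE}_D}{\sum_D q_D}\le \sup_{D:\,q_D>0}\mathit{CE}_D$ by the mediant inequality, and for each such $D$ the value splits as $\mathit{CE}_D=\bigl(\wgt(\pi)+\wgt(s,\alpha)\bigr)+\mathit{CE}^{\mathrm{cont}}_D$, where $\mathit{CE}^{\mathrm{cont}}_D$ is the conditional expectation of the weight accumulated after $\alpha$ under $\mathfrak{S}$'s residual scheduler. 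In $\mathcal{N}$ the scheduler $\mathfrak{T}_D$ that deterministically picks $\beta_{s,\alpha}$ and then copies that residual scheduler realizes exactly the same continuation, so $\mathit{CE}^{\mathfrak{T}_D}_{\mathcal{N}}=\bigl(w_s+\wgt(s,\alpha)\bigr)+\mathit{CE}^{\mathrm{cont}}_D\ge \mathit{CE}_D$ since $w_s\ge\wgt(\pi)$. Hence $\mathit{CE}^{\mathfrak{S}}_{\mathcal{M}}\le \sup_D \mathit{CE}_D\le \mathit{CE}^{\sup}_{\mathcal{N},t_{\mathit{init}}}$, and taking the supremum over $\mathfrak{S}$ completes the argument.

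The main obstacle I anticipate is the asymmetry exploited in the lower bound: in $\mathcal{M}$ a scheduler may deviate from $S_0$ at many different accumulated weights and may let its later behaviour depend on that weight, whereas $\mathcal{N}$ fixes the weight at the first exit to $w_s$. The mediant inequality is what defuses this, by showing that mixing over entry weights and entry points can never beat the best single deterministic entry, so that replacing the realized prefix weight $\wgt(\pi)$ by the maximal $w_s$ is always an improvement $\mathcal{N}$ can realize. A secondary point demanding care is the bookkeeping that the forced action $\alpha$ in $t_{s,\alpha}$ has the same successor distribution and weight as $\alpha$ at $s$, so that the residual continuation and the value $\mathit{CE}^{\mathrm{cont}}_D$ are genuinely identical on both sides; this is exactly what the construction of $\mathcal{N}$ guarantees.
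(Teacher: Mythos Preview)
Your proposal is correct and follows essentially the same approach as the paper's proof: both directions hinge on the mediant inequality (which the paper uses without naming it), with the inequality $\mathit{CE}^{\sup}_{\mathcal{N}}\le \mathit{CE}^{\sup}_{\mathcal{M}}$ shown by simulating a deterministic first choice $\beta_{s,\alpha}$ in $\mathcal{M}$ via the maximal-weight $\mathit{Act}^{\min}$-path to $s$ (avoiding $\goal$ on failure), and the reverse inequality shown by decomposing an $\mathcal{M}$-scheduler according to its first deviation from $\mathit{Act}^{\min}$ and bounding each piece by the corresponding $c_{s,\alpha}$ in $\mathcal{N}$. Your write-up is in fact more explicit about the bookkeeping (the role of $w_s\ge \wgt(\pi)$ and the matching of the continuation after $\alpha$) than the paper's own argument.
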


We can rely on this reduction to an MDP in which $\mathit{goal}$ is reached with positive probability for $\epsilon$-approximations and the exact computation of the optimal conditional expectation. In particular, the values $w_s$ for $s\in S_0$ are easy to compute by classical shortest path algorithms on weighted graphs. Furthermore, we can now decide the finiteness of the maximal conditional expectation.

\begin{restatable}{myprop}{finiteCE}
For an arbitrary MDP $\mathcal{M}$, $\mathit{CE}^{\sup}_{\mathcal{M},s_{\mathit{init}}}$ is finite if and only if there are no positively weight-divergent end components and no critical schedulers.
\end{restatable}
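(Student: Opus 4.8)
The plan is to derive both directions from the already-established finiteness criterion for partial expectations --- which states that $\mathit{PE}^{\sup}_{s_{\mathit{init}}}[b]$ is finite iff $\mathcal{M}$ has no positively weight-divergent end components --- combined with the identity $\mathit{CE}^{\mathfrak{S}}_{s_{\mathit{init}}} = \mathit{PE}^{\mathfrak{S}}_{s_{\mathit{init}}}/\Pr^{\mathfrak{S}}_{s_{\mathit{init}}}(\Diamond\mathit{goal})$ and the reduction of Proposition~\ref{prop:reduction}.

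For the implication ``$\mathit{CE}^{\sup}$ finite $\Rightarrow$ no positively weight-divergent end components and no critical schedulers'' I argue contrapositively. If a positively weight-divergent end component exists, then $\mathit{PE}^{\sup}_{s_{\mathit{init}}}$ is infinite, witnessed by schedulers that accumulate arbitrarily large weight in the component before reaching $\mathit{goal}$; such schedulers reach $\mathit{goal}$ with positive probability and have positive partial expectation, so from $\Pr^{\mathfrak{S}}(\Diamond\mathit{goal})\leq 1$ we get $\mathit{CE}^{\mathfrak{S}} = \mathit{PE}^{\mathfrak{S}}/\Pr^{\mathfrak{S}}(\Diamond\mathit{goal}) \geq \mathit{PE}^{\mathfrak{S}}$, forcing $\mathit{CE}^{\sup}=\infty$. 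If instead a critical scheduler exists, I invoke the construction sketched above (and detailed in the appendix) that turns it into a sequence of schedulers with unbounded conditional expectation, again giving $\mathit{CE}^{\sup}=\infty$.

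For the converse, assume $\mathcal{M}$ has neither positively weight-divergent end components nor critical schedulers, and split on the value of $\Pr^{\min}_{\mathcal{M},s_{\mathit{init}}}(\Diamond\mathit{goal})$. If this probability is positive, I use the equivalence noted above (for $\Pr^{\min}>0$, $\mathit{CE}^{\sup}$ is finite iff $\mathit{PE}^{\sup}$ is finite) together with the partial-expectation criterion: the absence of positively weight-divergent end components makes $\mathit{PE}^{\sup}$ finite, hence $\mathit{CE}^{\sup}$ finite. If the probability is $0$, I pass to the MDP $\mathcal{N}$ constructed before the statement. This construction is well-defined precisely because there are no critical schedulers (so $(S_0,Act^{\min})$ has no positive cycle and the weights $w_s$ are finite), it achieves $\Pr^{\min}_{\mathcal{N},t_{\mathit{init}}}(\Diamond\mathit{goal})>0$, and by Proposition~\ref{prop:reduction} it preserves the optimal conditional expectation, $\mathit{CE}^{\sup}_{\mathcal{N},t_{\mathit{init}}} = \mathit{CE}^{\sup}_{\mathcal{M},s_{\mathit{init}}}$. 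Applying the positive-probability case to $\mathcal{N}$, it remains to check that $\mathcal{N}$ has no positively weight-divergent end components.

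I expect this last step to be the only real obstacle. I would resolve it by noting that $\mathcal{N}$ arises from $\mathcal{M}$ only by adjoining the fresh states $t_{\mathit{init}}$ and $t_{s,\alpha}$, each of which is transient: once such a state is left it is never revisited. Hence every end component of $\mathcal{N}$ is contained in the embedded copy of $\mathcal{M}$ and is itself an end component of $\mathcal{M}$, so $\mathcal{N}$ inherits from $\mathcal{M}$ the absence of positively weight-divergent end components. This yields finiteness of $\mathit{PE}^{\sup}_{\mathcal{N}}$, hence of $\mathit{CE}^{\sup}_{\mathcal{N},t_{\mathit{init}}}$, hence of $\mathit{CE}^{\sup}_{\mathcal{M},s_{\mathit{init}}}$, completing the proof. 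The remaining details are routine bookkeeping: confirming $\mathit{CE}^{\mathfrak{S}}\geq\mathit{PE}^{\mathfrak{S}}$ exactly on those schedulers with defined conditional expectation and nonnegative partial expectation.
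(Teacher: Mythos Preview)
Your proposal is correct and follows essentially the same route as the paper: the contrapositive direction via the earlier proposition that positively weight-divergent end components or critical schedulers force $\mathit{CE}^{\sup}=\infty$, and the converse via the reduction to $\mathcal{N}$ (Proposition~\ref{prop:reduction}) combined with the finiteness criterion for $\mathit{PE}^{\sup}$. You are in fact slightly more explicit than the paper in two places---the case split on $\Pr^{\min}_{s_{\mathit{init}}}(\Diamond\mathit{goal})$ and the verification that $\mathcal{N}$ inherits the absence of positively weight-divergent end components from $\mathcal{M}$---but these are exactly the details the paper leaves implicit.
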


\tudparagraph{1ex}{Partial and conditional expectations in Markov chains.}
Markov chains with integer weights can be seen as MDPs with only one action $\alpha$ enabled in every state. Consequently, there is only one scheduler for a Markov chain. Hence, we drop the superscripts in $p^{\max}$ and $\mathit{PE}^{\sup}$.

\begin{myprop}
The partial and conditional expectation in a Markov chain $\mathcal{C}$ are computable in polynomial time.
\end{myprop}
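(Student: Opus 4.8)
The plan is to set up linear equation systems for the partial and conditional expectations in a Markov chain $\mathcal{C}$ and argue that these systems are solvable in polynomial time. Throughout I work with $\mathcal{C}$ in the normalized form guaranteed by Proposition~\ref{prop:removingzeroECs}: since a Markov chain has at most one scheduler, the absence of positively weight-divergent end components (equivalently, finiteness of $\mathit{PE}^{\sup}$) means every bottom strongly connected component either consists of the absorbing $\mathit{goal}$ state, the absorbing $\mathit{fail}$ state, or has negative expected mean payoff. The key structural fact I would exploit is that from any state the walk reaches $\{\mathit{goal},\mathit{fail}\}$ almost surely, and moreover the expected accumulated weight until absorption is finite, because on each transient bottom component the mean payoff is strictly negative so the accumulated weight drifts to $-\infty$ and cannot diverge to $+\infty$.

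\textbf{Partial expectation.} For each state $s$ let $e_s = \mathit{PE}_{\mathcal{C},s}[0]$ denote the partial expectation started in $s$. Conditioning on the first step gives the linear system
\[
e_s \;=\; \sum_{t\in S} P(s,t)\,\bigl(e_t + \wgt(s)\cdot p_t\bigr),
\]
where $p_t = \Pr_{\mathcal{C},t}(\Diamond\,\mathit{goal})$ is the reachability probability, with boundary conditions $e_{\mathit{goal}} = 0$ (the bias is $0$ and no further weight is accumulated) and $e_{\mathit{fail}} = 0$. First I would compute the vector $(p_s)_{s\in S}$, which is the unique solution of the standard reachability linear system and is obtainable in polynomial time by Gaussian elimination. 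Substituting the known $p_t$ turns the display above into a linear system in the unknowns $(e_s)_s$ whose coefficient matrix is $I$ minus the substochastic transition matrix restricted to the transient states; this matrix is invertible precisely because the walk is almost surely absorbed, so the solution exists, is unique, and is computable in polynomial time. The bias-$b$ version only changes the boundary to $e_{\mathit{goal}} = b\cdot p_{\mathit{goal}}$ in the natural way, so it is covered by the same argument.

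\textbf{Conditional expectation.} Using the identity $\mathit{CE}^{\mathfrak{S}}_{s} = \mathit{PE}^{\mathfrak{S}}_{s}/\Pr^{\mathfrak{S}}_{s}(\Diamond\,\mathit{goal})$ stated in the excerpt, the conditional expectation is simply $\mathit{CE}_{\mathcal{C},s_{\mathit{init}}} = e_{s_{\mathit{init}}}/p_{s_{\mathit{init}}}$, which is well-defined whenever $p_{s_{\mathit{init}}}>0$ and is a polynomial-time quotient of the two quantities already computed. Since all arithmetic is over the rationals and the linear systems have size polynomial in $\Size(\mathcal{C})$, the numbers stay of polynomial bit-length, so the overall procedure runs in polynomial time.

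\textbf{The main obstacle} is justifying that the partial-expectation system genuinely has a unique finite solution, i.e. that the expected accumulated weight until absorption is finite. Under non-negative weights this is automatic once $\mathit{goal}$ is reached a.s., but with integer weights one must rule out the series $\sum_k \wgt(\text{step }k)$ failing to converge absolutely. I would close this gap by invoking the normalization from Proposition~\ref{prop:removingzeroECs}: negative mean payoff on every transient component forces geometric decay of the tail contributions to the expected weight, guaranteeing absolute convergence and hence the invertibility of $I$ minus the transient transition matrix. This is precisely where the finiteness hypothesis (no positively weight-divergent end components), implicitly assumed via the preceding finiteness results, is used.
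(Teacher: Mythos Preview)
Your approach is essentially the same as the paper's: set up the one-step linear system for $(\mathit{PE}_{\mathcal{C},s})_{s}$ with boundary values $0$ at $\mathit{goal}$ and $\mathit{fail}$, solve it in polynomial time, and divide by $p_{s_{\mathit{init}}}$ for the conditional expectation. Your recurrence $e_s=\sum_t P(s,t)\bigl(e_t+\wgt(s)\,p_t\bigr)$ simplifies, using $p_s=\sum_t P(s,t)\,p_t$, to the paper's $x_s=\wgt(s,\alpha)\,p_s+\sum_t P(s,\alpha,t)\,x_t$, so the systems coincide.

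Where you diverge is in the justification of uniqueness and finiteness, and here you make it harder than necessary. The paper simply observes that after collapsing the states from which $\mathit{goal}$ is unreachable into $\mathit{fail}$, the only end components of the Markov chain are $\{\mathit{goal}\}$ and $\{\mathit{fail}\}$: any BSCC not containing $\mathit{goal}$ cannot reach $\mathit{goal}$ and is therefore absorbed into $\mathit{fail}$. Hence every other state is transient, the expected number of visits to each is finite, and the expected accumulated weight until absorption is bounded by $W$ times the (finite) expected absorption time; invertibility of $I-P_{\text{transient}}$ follows immediately. Your detour through Proposition~\ref{prop:removingzeroECs} and the talk of ``transient bottom components'' with negative mean payoff is superfluous (and terminologically off: a bottom SCC is never transient). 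No mean-payoff argument is needed, because after the collapsing step there simply are no non-trivial end components left in a Markov chain.
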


\begin{proof} Let $\alpha$ be the only action available in $\mathcal{C}$. Assume that all states from which $\mathit{goal}$ is not reachable have been collapsed to an absorbing state $\fail$. Then  $\mathit{PE}_{\mathcal{C}, \sinit} $ is the value of $x_{\sinit}$ in the unique solution to the following system of linear equations with one variable $x_s$ for each state $s$:
\begin{eqnarray*} 
&&x_{\mathit{goal}}=x_{\fail}=0 ,\\
&&x_s  =  wgt(s,\alpha)\cdot p_s + \sum_{t} P(s,\alpha,t)\cdot x_t \text{ for }s\in S\setminus\{\mathit{goal},\fail\}.
\end{eqnarray*}
The existence of a unique solution follows from the fact that $\{\goal\}$ and $\{\fail\}$ are the only end components (see \cite{puterman1994}). It is straight-forward to check that $(\mathit{PE}_{\mathcal{C},s})_{s\in S}$ is this unique solution.
The conditional expectation is obtained from the partial expectation by dividing by the probability $p_{s_{\mathit{init}}}$ to reach the goal.
\end{proof}

This result can be seen as a special case of the following result. Restricting ourselves to schedulers which reach the goal with maximal or minimal probability in an MDP without positively weight-divergent end components, linear programming allows us to compute the following two memoryless deterministic schedulers (see \cite{chen2013,tacas2017}).
\begin{myprop}\label{Scheduler Max}
Let $\mathcal{M}$ be an MDP without positively weight-divergent end components. There is a  scheduler $\mathfrak{Max}\in \mathit{MD}^\mathcal{M}$ such that for each $s\in S$ we have
$\Pr^\mathfrak{Max}_s(\Diamond \mathit{goal})= p_s^{{\max}}$ and 
$\mathit{PE}^{\mathfrak{Max}}_s= \mathrm{sup}_\mathfrak{S} \mathit{PE}^{\mathfrak{S}}_s$ where the supremum is taken over all schedulers  $\mathfrak{S}$ with $\Pr^\mathfrak{S}_s(\Diamond \mathit{goal})= p_s^{{\max}}$.
Similarly, there is a  scheduler $\mathfrak{Min}\in \mathit{MD}^\mathcal{M}$ maximizing the partial expectation among all schedulers reaching the goal with minimal probability. Both these schedulers and their partial expectations are computable in polynomial time.
\end{myprop}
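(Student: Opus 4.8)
The plan is to reduce the optimization to a standard weighted‑reachability (total‑reward) problem on a sub‑MDP and to solve it by linear programming. First I would invoke Proposition~\ref{prop:removingzeroECs} to assume without loss of generality that every end component of $\mathcal{M}$ has negative maximal expected mean payoff; together with the preceding finiteness criterion this guarantees that $\mathit{PE}^{\sup}_s$ is finite for every $s$. For the $\mathfrak{Max}$ scheduler I would then pass to the sub‑MDP $\mathcal{M}^{\max}$ in which only the actions in $\Act^{\max}(s)$ are enabled in each state $s$. The point of this restriction is that a scheduler attains $\Pr^\mathfrak{S}_s(\Diamond\mathit{goal})=p_s^{\max}$ for all $s$ precisely when it uses only $\Act^{\max}$‑actions and is not eventually trapped inside a goal‑free end component; since paths that never reach $\mathit{goal}$ contribute $0$ to the partial expectation, maximizing $\mathit{PE}$ over max‑probability schedulers amounts to a weighted‑reachability problem on $\mathcal{M}^{\max}$, with the trapping condition enforced afterwards.

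Second, I would write down the Bellman/LP characterization. Conditioning on the first step shows that for a memoryless $\Act^{\max}$‑scheduler choosing $\alpha$ in $s$ the value obeys
\[
\mathit{PE}_s \;=\; \wgt(s,\alpha)\cdot p_s^{\max} \;+\; \sum_{t} P(s,\alpha,t)\,\mathit{PE}_t ,
\]
because the weight of the first transition is collected exactly with the reachability probability $p_s^{\max}=\sum_t P(s,\alpha,t)\,p_t^{\max}$. Hence I expect the optimal values to be the least solution of
\[
x_{\mathit{goal}}=x_{\mathit{fail}}=0,\qquad
x_s \;\geq\; \wgt(s,\alpha)\cdot p_s^{\max} + \sum_{t} P(s,\alpha,t)\,x_t \quad\text{for all }\alpha\in\Act^{\max}(s),
\]
which is a total‑reward MDP with per‑step reward $\wgt(s,\alpha)\cdot p_s^{\max}$ and absorbing targets $\mathit{goal},\mathit{fail}$. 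I would compute it as the linear program minimizing $\sum_s x_s$ subject to these constraints and read off $\mathfrak{Max}$ as a maximizing action in each state.

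The correctness argument is where I expect the main obstacle. One direction is routine: unfolding the inequalities $n$ steps gives $x_s\ge \mathbb{E}^\mathfrak{S}_s[\sum_{i<n}\wgt(S_i,A_i)\,p^{\max}_{S_i}]+\mathbb{E}^\mathfrak{S}_s[x_{S_n}]$, so any feasible $x$ dominates $\mathit{PE}^\mathfrak{S}_s$ once the tail term $\mathbb{E}^\mathfrak{S}_s[x_{S_n}]$ is controlled. The delicate point is precisely this tail: a general $\Act^{\max}$‑scheduler may dwell in an end component, and I must show that such behaviour contributes nothing in the limit. This is exactly where the negative‑mean‑payoff assumption is essential—it forces the accumulated weight along any path that never leaves an end component to tend to $-\infty$, so those paths neither reach $\mathit{goal}$ (contributing $0$) nor spoil the inequality, and no optimal scheduler has an incentive to linger. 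I would make this rigorous as in the classical SSP analysis, showing that the least super‑solution is attained by a memoryless deterministic scheduler whose induced Markov chain reaches $\{\mathit{goal},\mathit{fail}\}$ with the correct probability; a secondary tie‑breaking among optimal actions—favouring those maximizing the probability to leave end components toward $\mathit{goal}$—then ensures the chosen $\mathfrak{Max}$ is genuinely a max‑probability scheduler, i.e.\ $\Pr^{\mathfrak{Max}}_s(\Diamond\mathit{goal})=p_s^{\max}$ for every $s$.

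Finally, polynomial‑time computability follows because $p_s^{\max}$ and the sets $\Act^{\max}(s)$ are computable in polynomial time by standard maximal‑reachability analysis, and the linear program has polynomially many variables and constraints. The scheduler $\mathfrak{Min}$ is obtained symmetrically by restricting to $\Act^{\min}(s)$ and replacing $p_s^{\max}$ by $p_s^{\min}$ throughout, so that the same LP argument applies verbatim and yields a memoryless deterministic scheduler maximizing the partial expectation among all schedulers reaching $\mathit{goal}$ with minimal probability.
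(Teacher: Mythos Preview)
Your reduction to a total-reward LP on the sub-MDP $\mathcal{M}^{\max}$ with effective one-step reward $\wgt(s,\alpha)\cdot p_s^{\max}$ is correct and is precisely the linear-programming approach the paper has in mind---the paper does not give its own proof of this proposition but simply cites \cite{chen2013,tacas2017}, where this argument is carried out. One small point worth a sentence in your write-up: invoking Proposition~\ref{prop:removingzeroECs} as a ``without loss of generality'' step is fine, but you should note that a memoryless deterministic scheduler in the transformed MDP pulls back to a memoryless deterministic scheduler in the original $\mathcal{M}$ (it does, because traversing a zero-weight sub-end-component toward a fixed exit state is itself memoryless and weight-preserving).
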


These schedulers will play a crucial role for the approximation of the maximal partial expectation and the exact computation of maximal partial expectations in MDPs with non-negative weights.


\tudparagraph{1ex.}{Partial expectations in MDPs with non-negative weights.}
In \cite{chen2013}, the computation of maximal partial expectations in stochastic multiplayer games with non-negative weights is presented. We adapt this approach to MDPs with non-negative weights.
A key result is the existence of a \emph{saturation point}, a bound on the accumulated weight above which optimal schedulers do not need memory.

In the sequel, let $R\in\mathbb{Q}$ be arbitrary, let  $\mathcal{M}$ be an MDP with non-negative weights, $\mathit{PE}^{\sup}_{s_{\mathit{init}}}<\infty$, and assume that end components have negative maximal mean payoff (see Proposition \ref{prop:removingzeroECs}). 
A saturation point for bias $R$ is a natural number $\mathfrak{p}$ such that there is a scheduler $\mathfrak{S}$ with  $\mathit{PE}^{\mathfrak{S}}_{s_{\mathit{init}}}[R]=\mathit{PE}^{{\sup}}_{s_{\mathit{init}}}[R]$ which is memoryless and deterministic as soon as the accumulated weight reaches $\mathfrak{p}$. I.e. for any two paths $\pi$ and $\pi^\prime$, with $last(\pi)=last(\pi^\prime)$ and $wgt(\pi),wgt(\pi^\prime)>\mathfrak{p}$, $\mathfrak{S}(\pi)=\mathfrak{S}(\pi^\prime)$. 

Transferring the idea behind the saturation point for conditional expectations given in \cite{tacas2017}, we  provide the following saturation point which can be considerably smaller than the saturation point given in \cite{chen2013} in stochastic multiplayer games. 
Detailed proofs to this section are given in Appendix \ref{app:non-negative}.

\begin{myprop} \label{SaturationPoint}
We define $p_{s,\alpha}^{\max}:= \sum_{t\in S} P(s,\alpha,t)\cdot p_t^{\max}$ and $\mathit{PE}^\mathfrak{Max}_{s,\alpha}:= p_{s,\alpha}^{\max}\cdot wgt(s,\alpha) + \sum_{t\in S} P(s,\alpha,t)\cdot  \mathit{PE}^\mathfrak{Max}_t$.
Then, \[\mathfrak{p}_R:= \mathrm{sup} \left\{\left.\frac{\mathit{PE}^\mathfrak{Max}_{s,\alpha}-\mathit{PE}^\mathfrak{Max}_{s}}{p^{\max}_s-p^{\max}_{s,\alpha}}\right|s\in S, \alpha\in Act(s)\setminus Act^{\max}(s)\right\}-R\] is an upper saturation point for bias $R$ in $\mathcal{M}$.
\end{myprop}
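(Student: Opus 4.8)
\noindent\emph{Proof plan.}
The plan is to reduce the statement to an optimality claim for a shifted bias and then to establish that claim by a supermartingale argument. The key observation is that accumulating weight $w$ along a prefix $\fpath$ ending in a state $s$ and continuing with bias $R$ contributes, up to the factor $P(\fpath)$, exactly $\mathit{PE}^{\residual{\sched}{\fpath}}_{s}[R+w]$ to the total partial expectation, since every continuation reaching $\goal$ collects $w$ on top of its own weight and the bias $R$. Consequently $\mathfrak{p}_R$ is an upper saturation point as soon as we prove the following core claim: for every $b\ge \mathfrak{p}_R+R$ (the supremum defining $\mathfrak{p}_R$) and every state $s$, the memoryless deterministic scheduler $\mathfrak{Max}$ maximises $\mathit{PE}_s[b]$, that is $\mathit{PE}^{\sup}_s[b]=\mathit{PE}^{\mathfrak{Max}}_s+b\cdot p_s^{\max}$. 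Indeed, the first time a path exceeds weight $\mathfrak{p}_R$ in a state $s$ the residual bias $R+w$ exceeds the supremum, so replacing the residual behaviour by $\mathfrak{Max}$ is optimal and loses nothing; this exhibits an optimal scheduler that is memoryless and deterministic above $\mathfrak{p}_R$.

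\noindent
First I would record a one-step comparison. Writing $g(s):=\mathit{PE}^{\mathfrak{Max}}_s+b\cdot p_s^{\max}$, expanding the definitions shows that choosing $\alpha$ in $s$ and following $\mathfrak{Max}$ afterwards has value $Q(s,\alpha)=\mathit{PE}^{\mathfrak{Max}}_{s,\alpha}+b\cdot p^{\max}_{s,\alpha}$, and that $g(s)=Q(s,\mathfrak{Max}(s))$. Hence for every $\alpha\in Act(s)$,
\[
g(s)-Q(s,\alpha)=\bigl(\mathit{PE}^{\mathfrak{Max}}_s-\mathit{PE}^{\mathfrak{Max}}_{s,\alpha}\bigr)+b\cdot\bigl(p_s^{\max}-p^{\max}_{s,\alpha}\bigr).
\]
For $\alpha\in Act^{\max}(s)$ the second summand vanishes and the first is non-negative by the optimality of $\mathfrak{Max}$ among schedulers reaching $\goal$ with maximal probability (Proposition \ref{Scheduler Max}). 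For $\alpha\notin Act^{\max}(s)$ we have $p_s^{\max}-p^{\max}_{s,\alpha}>0$, and $g(s)-Q(s,\alpha)\ge0$ is precisely the inequality $b\ge(\mathit{PE}^{\mathfrak{Max}}_{s,\alpha}-\mathit{PE}^{\mathfrak{Max}}_s)/(p_s^{\max}-p^{\max}_{s,\alpha})$ defining $\mathfrak{p}_R$. Thus $g(s)\ge \wgt(s,\alpha)\cdot p^{\max}_{s,\alpha}+\sum_t P(s,\alpha,t)\,g(t)$ for all $\alpha\in Act(s)$.

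\noindent
The main step, and the one I expect to be the obstacle, is to lift this local inequality to the global bound $g(s)\ge\mathit{PE}^{\sched}_s[b]$ for an arbitrary scheduler $\sched$. Fixing $\sched$ and a start state $s$, I would consider the process $\Phi_n=\mathit{PE}^{\mathfrak{Max}}_{X_n}+(b+W_n)\,p^{\max}_{X_n}$ evaluated along the $\sched$-paths $X_0\alpha_0X_1\ldots$, where $W_n$ is the weight accumulated up to $X_n$. Since the weights are non-negative, $W_n$ is non-decreasing, so $b+W_n\ge b$ stays above the supremum throughout; the one-step inequality therefore applies at every step, and averaging over $\sched(\cdot)$ makes $(\Phi_n)_n$ a supermartingale with $\Phi_0=g(s)$. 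As $\Phi=b+W$ in $\goal$ and $\Phi=0$ in $\fail$, optional stopping at the hitting time $\tau$ of $\{\goal,\fail\}$ would give $g(s)\ge\mathbb{E}^{\sched}_s[\Phi_\tau]=\mathbb{E}^{\sched}_s[\mathds{1}_{\Diamond\goal}\cdot(\wgt+b)]=\mathit{PE}^{\sched}_s[b]$, and $\mathfrak{Max}$ attains $g(s)$.

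\noindent
Justifying the optional-stopping step is where care is needed, since weights and hence $\Phi_n$ are unbounded and probability mass could a priori escape to infinity. Two facts resolve this: under the standing assumption that all end components have negative maximal mean payoff (Proposition \ref{prop:removingzeroECs}) every scheduler reaches $\goal$ or $\fail$ almost surely, so $\tau<\infty$ almost surely; and $\Phi_n$ is bounded below by the constant $\min(0,b)$, because $\mathit{PE}^{\mathfrak{Max}}_{\cdot}\ge0$ and $(b+W_n)p^{\max}_{\cdot}\ge\min(0,b)$ for non-negative $W_n$. Applying Fatou's lemma to the non-negative supermartingale $\Phi_{n\wedge\tau}-\min(0,b)$ then yields $\mathbb{E}^{\sched}_s[\Phi_\tau]\le\Phi_0$, completing the core claim. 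Finally, to see that the optimum is \emph{attained} by a scheduler that is memoryless deterministic above $\mathfrak{p}_R$, I would replace, in an arbitrary scheduler, the behaviour on each cone first exceeding weight $\mathfrak{p}_R$ by $\mathfrak{Max}$ (no loss, by the core claim) and observe that below the threshold only the finitely many weight values $0,1,\ldots,\lceil\mathfrak{p}_R\rceil$ occur; the induced finite MDP on state-weight pairs admits an optimal memoryless deterministic scheduler, which together with $\mathfrak{Max}$ witnesses that $\mathfrak{p}_R$ is an upper saturation point.
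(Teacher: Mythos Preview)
Your argument is correct and takes a genuinely different route from the paper. You prove the core claim---that $\mathfrak{Max}$ maximises $\mathit{PE}_s[b]$ for every $b$ at or above the supremum---by a direct supermartingale/Fatou argument: the process $\Phi_n=\mathit{PE}^{\mathfrak{Max}}_{X_n}+(b+W_n)\,p^{\max}_{X_n}$ is a supermartingale by the one-step comparison, it is bounded below, and the absorbing set is hit almost surely, so Fatou yields $\mathit{PE}^{\sched}_s[b]\le g(s)$. The paper instead first establishes a \emph{coarser} saturation point $\mathfrak{q}$ (its Proposition~\ref{prop:SaturationPoint1}) purely to conclude that the supremum is a maximum over finitely many weight-based schedulers; it then runs a ``maximal-gap'' contradiction: pick a state $t$ where $D_t=\sup_{\tsched}\mathbb{E}^{\tsched}_t(\oplus^{\mathfrak{p}_R+R}\goal)-\mathbb{E}^{\mathfrak{Max}}_t(\oplus^{\mathfrak{p}_R+R}\goal)$ is maximal and positive and where the optimal first action leads with positive probability to a state with strictly smaller gap, and derive $0<0$ from the definition of $\mathfrak{p}_R$. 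Your approach is self-contained and avoids the bootstrap through the auxiliary saturation point, at the price of invoking optional stopping/Fatou (which you justify correctly via the lower bound on $\Phi$ and almost-sure absorption). The paper's approach is purely combinatorial once finiteness is in hand and needs no measure-theoretic care; it also makes transparent that the saturation point is tight in the sense that any one-step deviation from $\mathfrak{Max}$ is already suboptimal at bias $\mathfrak{p}_R+R$. One cosmetic point: your closing sentence tacitly assumes $\mathfrak{p}_R>0$; if $\mathfrak{p}_R\le 0$ the core claim already says $\mathfrak{Max}$ is optimal from the start and there is nothing to do below the threshold.
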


The saturation point $\mathfrak{p}_R$ is chosen such that, as soon as the accumulated weight exceeds $\mathfrak{p}_R$, the scheduler $\mathfrak{Max}$ is better than any scheduler deviating from $\mathfrak{Max}$ for only one step.
So, the proposition states  that $\mathfrak{Max}$ is then also better than any other scheduler.

As all values involved in the computation can be determined by linear programming, the saturation point $\mathfrak{p}_R$ is computable in polynomial time.
This also means that the logarithmic length of $\mathfrak{p}_R$ is polynomial in the size of $\mathcal{M}$ and hence $\mathfrak{p}_R$ itself is at most exponential in the size of $\mathcal{M}$.

\begin{restatable}{myprop}{computationPE}
Let $R\in\mathbb{Q}$ and let $B_R$ be the least integer greater or equal to $\mathfrak{p}_R+ \max_{s\in S,\alpha\in Act(s)} wgt(s,\alpha)$ and let $S^\prime := S\setminus\{\mathit{goal},fail\}$. The values $(\mathit{PE}^{\sup}_{s_{\mathit{init}}}[r{+}R])_{s\in S^\prime, 0\leq r\leq B_R}$ form the unique solution to the following linear program  in the variables $(x_{s,r})_{s\in S^\prime, 0\leq r\leq B_R}$ (r ranges over integers): 

Minimize $\sum_{s\in S^\prime,0\leq r\leq B_R}x_{s,r}$ under the following constraints:
\begin{eqnarray*}
&&\text{For }r \geq \mathfrak{p}_R:  x_{s,r}= p_s^{\max} \cdot (r{+}R) + E^\mathfrak{Max}_s, \\
&&\text{for }r < \mathfrak{p}_R \text{ and }\alpha\in Act(s) :\\
&&x_{s,r}\geq P(s,\alpha,\mathit{goal}) \cdot (r{+}R{+}\mathit{wgt}(s,\alpha)) + \sum_{t\in S^\prime} P(s,\alpha,t) \cdot x_{t,r{+}\mathit{wgt}(s,\alpha).}
\end{eqnarray*}
\end{restatable}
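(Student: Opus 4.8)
The plan is to establish two facts: that the family $\bigl(\mathit{PE}^{\sup}_s[r{+}R]\bigr)_{s\in S^\prime,\,0\le r\le B_R}$ is a feasible point of the linear program, and that it lies pointwise below every feasible point. Together these show that it minimizes $\sum_{s,r}x_{s,r}$ and that this minimizer is unique, which is exactly the assertion (interpreting the indexed object as $\mathit{PE}^{\sup}_s[r{+}R]$, the optimal partial expectation started in $s$ with bias $r{+}R$). First I would verify that the program is well posed: since the weights are non-negative, $0\le r+\wgt(s,\alpha)$, and for $r<\mathfrak{p}_R$ we have $r+\wgt(s,\alpha)<\mathfrak{p}_R+\max_{s,\alpha}\wgt(s,\alpha)\le B_R$, so every variable $x_{t,r+\wgt(s,\alpha)}$ appearing in the interior constraints has an index in the admissible range $\{0,\dots,B_R\}$.

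For feasibility I would handle the two kinds of constraints separately. The equalities for $r\ge\mathfrak{p}_R$ follow from the saturation point property (Proposition \ref{SaturationPoint}): once the accumulated weight reaches $\mathfrak{p}_R$ the scheduler $\maxisched$ is optimal, so $\mathit{PE}^{\sup}_s[r{+}R]=\mathit{PE}^{\maxisched}_s[r{+}R]$; since $\maxisched$ reaches $\goal$ with probability $p^{\max}_s$, shifting the bias is linear and gives $\mathit{PE}^{\maxisched}_s[r{+}R]=p^{\max}_s\cdot(r{+}R)+\mathit{PE}^{\maxisched}_s=p^{\max}_s\cdot(r{+}R)+E^{\maxisched}_s$. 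For $r<\mathfrak{p}_R$ and a fixed $\alpha$ with $w=\wgt(s,\alpha)$, the scheduler that plays $\alpha$ in $s$ and then continues uniformly optimally (such optimal weight-based deterministic continuations exist for non-negative weights, cf. \cite{chen2013}) witnesses $\mathit{PE}^{\sup}_s[r{+}R]\ge P(s,\alpha,\goal)\cdot(r{+}R{+}w)+\sum_{t\in S^\prime}P(s,\alpha,t)\cdot\mathit{PE}^{\sup}_t[(r{+}w){+}R]$, because reaching $\goal$ in one step contributes the total weight $r{+}w$ plus bias $R$, moving to $t\in S^\prime$ contributes the optimal continuation value with new accumulated weight $r{+}w$, and moving to $\fail$ contributes $0$. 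This is precisely the interior constraint, so the optimal values are feasible.

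For minimality I would fix a uniformly optimal weight-based deterministic scheduler $\sched$ (it exists since $\mathit{PE}^{\sup}$ is finite and coincides with $\maxisched$ above $\mathfrak{p}_R$, cf. \cite{chen2013}), let $y=(y_{s,r})$ be any feasible point, and set $d_{s,r}=y_{s,r}-\mathit{PE}^{\sup}_s[r{+}R]$. On the boundary $r\ge\mathfrak{p}_R$ both values equal the pinned right-hand side, so $d_{s,r}=0$. For $r<\mathfrak{p}_R$, writing $\alpha=\sched(s,r)$ and $w=\wgt(s,\alpha)$, feasibility of $y$ for this $\alpha$ together with the one-step expansion of $\mathit{PE}^{\sup}$ under the optimal action $\alpha$ (which holds with equality by optimality of $\sched$ from every configuration) yields $d_{s,r}\ge\sum_{t\in S^\prime}P(s,\alpha,t)\cdot d_{t,r+w}$. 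Hence $d$ is a supermartingale along the configuration Markov chain $(t,r^\prime)$ induced by $\sched$, whose terminal $d$-values (reaching $\goal$, reaching $\fail$, or crossing the level $r^\prime\ge\mathfrak{p}_R$) are all $0$. Because the weights are non-negative and, by the standing assumption, every end component of $S^\prime$ has negative maximal mean payoff — which for non-negative weights forces $S^\prime$ to be free of end components — this chain reaches an absorbing configuration almost surely. Iterating the supermartingale inequality and passing to the almost surely finite termination time then gives $d_{s,r}\ge\mathbb{E}_{\sched}[d_{\text{terminal}}]=0$, i.e. $y_{s,r}\ge\mathit{PE}^{\sup}_s[r{+}R]$ for all $s,r$.

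Finally, pointwise minimality and feasibility together give $\sum_{s,r}\mathit{PE}^{\sup}_s[r{+}R]\le\sum_{s,r}y_{s,r}$ for every feasible $y$, with equality forcing $y=\mathit{PE}^{\sup}$ pointwise; hence the optimal partial expectations are the unique optimal solution. The step I expect to be the main obstacle is minimality: making the supermartingale stopping argument rigorous requires almost-sure termination of the configuration chain, which rests on translating ``every end component has negative maximal mean payoff'' into ``$S^\prime$ is transient'' in the non-negative-weight setting (using Proposition \ref{prop:removingzeroECs} to rule out $0$-weight end components and finiteness of $\mathit{PE}^{\sup}$ to rule out positively weight-divergent ones), and on the boundedness of $d$ over the finitely many configurations to justify exchanging limit and expectation.
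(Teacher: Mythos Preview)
Your argument is correct, but it follows a genuinely different route from the paper. The paper proves unique solvability via a Banach fixed-point argument: it introduces the Bellman operator $T^R$ associated with the constraints, constructs a weighted maximum norm on $\mathbb{R}^{S^\prime\times\{0,\dots,B_R\}}$ (using the Veinott--Tseng layering $S_0,S_1,\dots$ of $S$ by ``distance'' to $\{\goal,\fail\}$ and weights $w_s=1-\delta^{2i}$), shows that $T^R$ is a $\gamma$-contraction for some $\gamma<1$, and concludes that $T^R$ has a unique fixed point; it then argues that any feasible $z$ satisfies $z\ge T^R(z)$ and that any coordinate with $z_{s,r}>T^R_{s,r}(z)$ can be decreased while preserving feasibility, so the fixed point is the unique LP optimum; finally it checks that the optimal partial expectations are a fixed point. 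Your approach instead establishes pointwise minimality directly by an optional-stopping argument for the bounded supermartingale $d$ along the Markov chain induced by a uniformly optimal scheduler, using that $S^\prime$ has no end components (the standing assumption for this section) so the chain is absorbed almost surely. Your proof is more elementary in that it avoids the weighted-norm machinery; the paper's contraction argument, on the other hand, yields as a by-product that value iteration $x^{(n+1)}=T^R(x^{(n)})$ converges geometrically, which is of independent algorithmic interest.
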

From a solution $x$ to the linear program, we can easily extract an optimal weight-based deterministic scheduler. This scheduler only needs finite memory because the accumulated weight increases monotonically along paths and as soon as the saturation point is reached $\mathfrak{Max}$ provides the optimal decisions.
 As $B_R$ is exponential in the size of $\mathcal{M}$, the computation of the optimal partial expectation via this linear program runs in time exponential in the size of $\mathcal{M}$.



\section{Existence of Optimal Schedulers} \label{sec:optimalWD}


We prove that there are optimal weight-based deterministic schedulers for partial and conditional expectations. After showing that, if finite, $\mathit{PE}^{\sup}_{s_{\mathit{init}}}$ is equal to $\sup_{\mathfrak{S}\in \mathit{WD}^{\mathcal{M}}} \mathit{PE}^\mathfrak{S}_{s_{\mathit{init}}}$, we take an analytic approach to show that there is indeed a weight-based deterministic scheduler maximizing the partial expectation. We define a metric on  $\mathit{WD}^\mathcal{M}$ turning it into a compact space. Then, we prove that the function assigning the partial expectation to schedulers is upper semi-continuous. We conclude that there is a weight-based deterministic scheduler obtaining the maximum.
Proofs to this section can be found in Appendix \ref{app:optimalWD}.

\begin{restatable}{myprop} {WDscheduler} \label{prop:supWD}
Let  $\mathcal{M}$ be an MDP with $\mathit{PE}^{\sup}_{s_{\mathit{init}}}<\infty$. Then we have  $\mathit{PE}^{\sup}_{s_{\mathit{init}}}=\sup_{\mathfrak{S}\in \mathit{WD}^\mathcal{M}} \mathit{PE}^{\mathfrak{S}}_{s_{\mathit{init}}}$.
\end{restatable}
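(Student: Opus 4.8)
The goal is to show that, whenever the optimal partial expectation is finite, it can be approximated arbitrarily well by weight-based deterministic schedulers, so nothing is lost by restricting from the full class $\mathit{HR}^{\mathcal{M}}$ to $\mathit{WD}^{\mathcal{M}}$. Since $\mathit{WD}^{\mathcal{M}} \subseteq \mathit{HR}^{\mathcal{M}}$ gives the inequality $\sup_{\mathfrak{S}\in \mathit{WD}^{\mathcal{M}}} \mathit{PE}^{\mathfrak{S}}_{s_{\mathit{init}}} \leq \mathit{PE}^{\sup}_{s_{\mathit{init}}}$ for free, the real content is the reverse inequality: given an arbitrary history-dependent randomized scheduler $\mathfrak{S}$ and an $\epsilon > 0$, I want to produce a weight-based deterministic scheduler $\mathfrak{T}$ with $\mathit{PE}^{\mathfrak{T}}_{s_{\mathit{init}}} \geq \mathit{PE}^{\mathfrak{S}}_{s_{\mathit{init}}} - \epsilon$.

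Let me think about how to strip away the two extra features, randomization and history-dependence beyond the accumulated weight, separately.

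**Attack plan: two reductions.**

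First, derandomization. A standard fact is that for an expectation-maximization objective that is linear in the scheduler's randomization, there is always a deterministic scheduler that does at least as well — one can view a randomized choice at a history as a convex combination of Dirac choices and pick the best branch. The subtlety here is that $\oplus^b \mathit{goal}$ depends on whether the goal is ever reached, so I need the partial expectation to be expressible as an expectation over a branching process where each local randomized decision can be replaced by its best pure alternative without decreasing the value. I expect this to go through via a standard argument: fix everything except the decision at histories of a given length, observe that $\mathit{PE}^{\mathfrak{S}}_{s_{\mathit{init}}}$ is an average (over the distribution chosen at that history) of continuation values, and replace the distribution by a point mass on an optimal action; then take a limit / appeal to the existence of deterministic optimal schedulers for the expectation objective. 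So after this step I may reduce to deterministic (but still fully history-dependent) schedulers.

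**The harder step: collapsing history to weight.** This is where the integer-weight setting bites. For a deterministic history-dependent scheduler I want to argue that only the pair $(\last(\pi), \wgt(\pi))$ matters up to arbitrarily small loss. The natural route is a truncation-plus-weight-classing argument. Since $\mathit{PE}^{\sup}_{s_{\mathit{init}}}[b]$ is finite, Proposition on finiteness tells me there are no positively weight-divergent end components, and by Proposition~\ref{prop:removingzeroECs} I may assume every end component has negative maximal expected mean payoff. This gives decay: the contribution to the partial expectation from paths that are still in $S \setminus \{\mathit{goal},\mathit{fail}\}$ after $N$ steps, or whose accumulated weight has magnitude exceeding some bound $W$, tends to $0$ as $N, W \to \infty$. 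Concretely, I would show $\sum_{\text{paths }\pi, |\pi| = N} P(\pi)\cdot|\wgt(\pi)+b|\cdot \mathbf{1}[\last(\pi)\notin\{\mathit{goal},\mathit{fail}\}] \to 0$, using the negative-mean-payoff bound to control both the probability of remaining outside the absorbing states and the growth of the accumulated weight (a large-deviations / supermartingale estimate on $\wgt$ along runs). The main obstacle is exactly this tail estimate: because weights oscillate, I cannot use monotonicity, and I must genuinely bound the expected contribution of long heavy prefixes.

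Once the tail is controlled, the plan is: choose $N$ and $W$ so that all paths reaching weight outside $[-W,W]$ or surviving past step $N$ contribute less than $\epsilon$ in total. On the finitely many relevant weight-values and states, I can then define $\mathfrak{T}$ to agree, at each reachable pair $(s,w)$ with $|w|\le W$, with the choice that a well-chosen finite-horizon optimal deterministic scheduler makes — and since a weight-based scheduler is free to choose independently for each $(s,w)$, I can in fact take $\mathfrak{T}$ to be the pointwise-optimal weight-based deterministic scheduler on the relevant region and argue it dominates $\mathfrak{S}$ up to the discarded tail. I expect the cleanest packaging to be via a backward-induction / optimality argument on the finite region $\{(s,w): s\in S', |w|\le W\}$, where on the region I solve the finite weight-indexed optimization exactly (this is deterministic and weight-based by construction) and absorb the boundary error into $\epsilon$. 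Letting $\epsilon \to 0$ then yields $\mathit{PE}^{\sup}_{s_{\mathit{init}}} \leq \sup_{\mathfrak{S}\in \mathit{WD}^{\mathcal{M}}} \mathit{PE}^{\mathfrak{S}}_{s_{\mathit{init}}}$, completing the equality.
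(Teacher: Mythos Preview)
Your plan is viable but takes a genuinely different route from the paper, and a more laborious one.

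The paper reverses the order of your two reductions. It first goes $\mathit{HR}\to\mathit{WR}$ \emph{exactly} via an occupation-measure construction: given an arbitrary $\mathfrak{S}$, let $\theta_{s,w}$ be the expected number of visits to state $s$ with accumulated weight $w$, and $\theta_{s,w,\alpha}$ the expected number of times $\alpha$ is chosen in that situation. After the preprocessing of Proposition~\ref{prop:removingzeroECs} all end components have negative maximal mean payoff, so these quantities are finite. The weight-based randomized scheduler $\mathfrak{T}(s,w)(\alpha):=\theta_{s,w,\alpha}/\theta_{s,w}$ then reproduces the same occupation measures, hence $\mathit{PE}^{\mathfrak{T}}_{s_{\mathit{init}}}=\mathit{PE}^{\mathfrak{S}}_{s_{\mathit{init}}}$ and $\Pr^{\mathfrak{T}}_{s_{\mathit{init}}}(\Diamond\mathit{goal})=\Pr^{\mathfrak{S}}_{s_{\mathit{init}}}(\Diamond\mathit{goal})$. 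The second step $\mathit{WR}\to\mathit{WD}$ is then the easy local derandomization you describe, applied at each pair $(s,w)$.

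Your route instead derandomizes first ($\mathit{HR}\to\mathit{HD}$) and then collapses history to weight by truncation to a finite weight window and a finite-horizon optimization. This can be made to work, but (i) it yields only $\epsilon$-closeness at each stage rather than equality, (ii) the tail estimates you need are essentially the supermartingale bounds the paper develops later in Section~\ref{sec:approximation} for a different purpose, so you are importing nontrivial machinery, and (iii) your $\mathit{HR}\to\mathit{HD}$ step over infinitely many histories itself needs a limiting/compactness argument that you leave as a hand-wave. The main idea you are missing is that the partial expectation is a linear functional of the occupation measure $(\theta_{s,w,\alpha})$, which already lives on the $(s,w)$-indexed space; once you see that, the reduction to $\mathit{WR}$ is a one-line construction with no truncation and no error term.
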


\begin{proof}[Proof sketch] We can assume that all end components have negative maximal expected mean payoff (see Proposition \ref{prop:removingzeroECs}).
Given a scheduler $\mathfrak{S}\in \mathit{HR}^\mathcal{M}$, we take the expected number of times $\theta_{s,w}$ that $s$ is visited with accumulated weight $w$ under $\mathfrak{S}$ for each state-weight pair $(s,w)$, and the expected number of times $\theta_{s,w,\alpha}$ that  $\mathfrak{S}$ then chooses $\alpha$. These values are finite due to the negative maximal mean payoff in end components. We define the scheduler $\mathfrak{T}\in \mathit{WR}^\mathcal{M}$ choosing $\alpha$ in $s$ with probability $\theta_{s,w,\alpha}/\theta_{s,w}$ when weight $w$ has been accumulated. Then, we show by standard arguments that we can replace all probability distributions that $\mathfrak{T}$ chooses by Dirac distributions to obtain a scheduler $\mathfrak{T}^\prime\in \mathit{WD}^\mathcal{M}$ such that $\mathit{PE}^{\mathfrak{T}^\prime}_{s_{\mathit{init}}}\geq \mathit{PE}^{\mathfrak{S}}_{s_{\mathit{init}}}$.
\end{proof}

It remains to show that the supremum is obtained by a weight-based deterministic scheduler.
Given an MDP $\mathcal{M}$ with arbitrary integer weights, we define the following metric $d^\mathcal{M}$ on the set of weight-based deterministic schedulers, i.e. on the set of functions from $S\times \mathbb{Z} \to Act$:
For two such schedulers $\mathfrak{S}$ and $\mathfrak{T}$, we let  $d^\mathcal{M}(\mathfrak{S},\mathfrak{T}) := 2^{-R}$ where $R$ is the greatest natural number such that $\mathfrak{S}\upharpoonright S\times \{-(R{-}1), \dots , R{-}1\} = \mathfrak{T}~\upharpoonright~S\times \{-(R{-}1), \dots , R{-}1\}$ or $\infty$ if there is no greatest such natural number.

\begin{mylem}
The metric space $(Act^{S\times \mathbb{Z}},d^\mathcal{M})$ is compact.
\end{mylem}

Having defined this compact space of schedulers, we can  rely on the analytic notion of upper semi-continuity.

\begin{mylem}[Upper Semi-Continuity of Partial Expectations]
If $\mathit{PE}^{\sup}_{s_{\mathit{init}}}$ is finite in $\mathcal{M}$, then the function $\mathit{PE}: (\mathit{WD}, d^{\mathit{WD}}) \to (\mathbb{R}_\infty,d^{euclid})$ assigning $\mathit{PE}^\mathfrak{S}_{s_{\mathit{init}}}$ to a weight-based deterministic scheduler $\mathfrak{S}$ is upper semi-continuous. 
\end{mylem}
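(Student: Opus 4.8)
\emph{Proof proposal.}
The plan is to verify the sequential form of upper semi-continuity directly: fixing a weight-based deterministic scheduler $\mathfrak{S}$ and a sequence $\mathfrak{T}_n\to\mathfrak{S}$ in $(Act^{S\times\mathbb{Z}},d^\mathcal{M})$, I would show $\limsup_n \mathit{PE}^{\mathfrak{T}_n}_{s_{\mathit{init}}}\le \mathit{PE}^{\mathfrak{S}}_{s_{\mathit{init}}}$. By Proposition \ref{prop:removingzeroECs} I may assume all end components have negative maximal expected mean payoff, so that along almost every infinite path the accumulated weight tends to $-\infty$; in particular every run missing $\mathit{goal}$ almost surely leaves any fixed weight window, and the expected number of visits to each state--weight pair is finite. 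Convergence $\mathfrak{T}_n\to\mathfrak{S}$ means precisely that for every $R$ there is $n_0$ with $\mathfrak{T}_n\upharpoonright S\times W_R=\mathfrak{S}\upharpoonright S\times W_R$ for all $n\ge n_0$, where $W_R=\{-(R{-}1),\dots,R{-}1\}$.

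The core is a decomposition of the partial expectation at the first time the accumulated weight leaves $W_R$. For any scheduler $\mathfrak{U}$, I split the runs into those reaching $\mathit{goal}$ while the running weight stays inside $W_R$ and those whose weight first exits $W_R$ (via a first-exit path $\rho$, ending in $\last(\rho)$ with $\wgt(\rho)\notin W_R$) before reaching $\mathit{goal}$. This yields $\mathit{PE}^{\mathfrak{U}}_{s_{\mathit{init}}}=C_R+\sum_\rho \Pr^{\mathfrak{U}}(\rho)\cdot \mathit{PE}^{\mathfrak{U}_\rho}_{\last(\rho)}[\wgt(\rho)]$, where $C_R$ collects the first group and $\mathfrak{U}_\rho$ is the residual scheduler after $\rho$. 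For $\mathfrak{U}=\mathfrak{T}_n$ with $n$ so large that $\mathfrak{T}_n$ agrees with $\mathfrak{S}$ on $W_R$, the quantity $C_R$ and all first-exit probabilities $\Pr(\rho)$ coincide with those of $\mathfrak{S}$, so the only difference lies in the residual terms. Bounding each by the optimum, $\mathit{PE}^{(\mathfrak{T}_n)_\rho}_{\last(\rho)}[\wgt(\rho)]\le \mathit{PE}^{\sup}_{\last(\rho)}[\wgt(\rho)]$, reduces the claim to two facts: first, $C_R\to \mathit{PE}^{\mathfrak{S}}_{s_{\mathit{init}}}$ as $R\to\infty$; and second, $\limsup_{R\to\infty}\sum_\rho \Pr^{\mathfrak{S}}(\rho)\,\mathit{PE}^{\sup}_{\last(\rho)}[\wgt(\rho)]\le 0$, with $\rho$ ranging over first-exit paths from $W_R$.

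The first fact follows from dominated/monotone convergence: every run reaching $\mathit{goal}$ does so along a finite prefix of bounded running weight, hence lies in the ``inside $W_R$'' group for all large $R$, so $C_R$ converges to $\mathbb{E}^{\mathfrak{S}}(\oplus^0\mathit{goal})=\mathit{PE}^{\mathfrak{S}}_{s_{\mathit{init}}}$ (the positive part uniformly integrable by finiteness of $\mathit{PE}^{\sup}$, the negative part by monotone convergence, which also covers the case $\mathit{PE}^{\mathfrak{S}}_{s_{\mathit{init}}}=-\infty$). For the second fact I split the first-exit paths into those leaving below ($\wgt(\rho)\le -R$) and above ($\wgt(\rho)\ge R$). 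At the lower boundary I use that $\mathit{PE}^{\sup}_{s}[w]=\sup_{\mathfrak{U}}\big(\mathit{PE}^{\mathfrak{U}}_{s}[0]+w\cdot\Pr^{\mathfrak{U}}_{s}(\Diamond\mathit{goal})\big)$ is a non-decreasing convex function of $w$ whose limiting slope as $w\to-\infty$ is $p^{\min}_s$; since from a very negative accumulated weight no scheduler can beat the value $0$ obtained by avoiding $\mathit{goal}$, one gets $\lim_{w\to-\infty}\mathit{PE}^{\sup}_s[w]\le 0$, so the below-contribution is at most $\max_s\sup_{w\le -R}\mathit{PE}^{\sup}_s[w]$, whose limit is non-positive. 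At the upper boundary I bound $\mathit{PE}^{\sup}_s[w]\le w+\max_{s'\in S}\mathit{PE}^{\sup}_{s'}[0]$ and use that the maximal positive running weight $M^+=\sup_k \wgt(\pi_{\le k})^+$ has finite expectation under $\mathfrak{S}$ (a consequence of the absence of positively weight-divergent end components); since the overshoot at the exit step is at most $\max_{s,\alpha}|\wgt(s,\alpha)|$, the above-contribution is $O\big((R{+}1)\cdot\Pr^{\mathfrak{S}}(M^+\ge R)\big)\to 0$.

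The main obstacle is the second fact, and within it the tail control at the two boundaries: turning the finiteness of $\mathit{PE}^{\sup}$ into a genuinely uniform integrability statement. Concretely, one must show that the expected contribution of high positive excursions vanishes (so that $\mathbb{E}^{\mathfrak{S}}(M^+)<\infty$ and $R\cdot\Pr^{\mathfrak{S}}(M^+\ge R)\to 0$) and, symmetrically, that a scheduler started at a very negative accumulated weight cannot extract positive partial expectation in the limit. Both estimates reduce to the absence of positively weight-divergent end components together with the negative-mean-payoff normal form of Proposition \ref{prop:removingzeroECs}; I expect this excursion/uniform-integrability argument, rather than the bookkeeping of the decomposition, to be the technically delicate part.
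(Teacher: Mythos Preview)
Your approach is correct and is essentially the same as the paper's: both decompose at the first time the running weight leaves a window $[-R,R]$, and both ultimately rest on the uniform exponential tail bound $\Pr^{\max}_s(\Diamond\,wgt\ge nW)\le p^n$ that follows from the negative-mean-payoff normal form of Proposition~\ref{prop:removingzeroECs}. The paper phrases this as an $\epsilon$--$\delta$ argument, working directly with the difference $\mathit{PE}^{\mathfrak{T}}-\mathit{PE}^{\mathfrak{S}}=\sum_{\pi\in\mathfrak{P}^+\cup\mathfrak{P}^-}\Pr^{\mathfrak{S}}(\pi)\bigl(\mathit{PE}^{\mathfrak{T}\uparrow\pi}[\wgt(\pi)]-\mathit{PE}^{\mathfrak{S}\uparrow\pi}[\wgt(\pi)]\bigr)$ and bounding the two boundary sums separately; you instead bound $\mathit{PE}^{\mathfrak{T}_n}$ above by $C_R+\sum_\rho\Pr^{\mathfrak{S}}(\rho)\,\mathit{PE}^{\sup}_{\last(\rho)}[\wgt(\rho)]$ and let $R\to\infty$. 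Your packaging is arguably cleaner, since the $\mathfrak{S}$-residual at both boundaries is absorbed into the convergence $C_R\to\mathit{PE}^{\mathfrak{S}}$, whereas the paper has to track it explicitly (this is the role of the constant $H$ and the tail parameter $l^{\mathfrak{S}}_\epsilon$ there).

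One point to tighten: the sentence ``from a very negative accumulated weight no scheduler can beat the value $0$ obtained by avoiding $\mathit{goal}$'' is not yet a proof that $\lim_{w\to-\infty}\mathit{PE}^{\sup}_s[w]\le 0$; avoiding $\mathit{goal}$ may be impossible ($p^{\min}_s>0$), and even when possible this only gives the lower bound $\ge 0$. The actual reason is the one you name at the end: for any $\mathfrak{U}$ one has $\mathit{PE}^{\mathfrak{U}}_s[w]\le \mathbb{E}^{\mathfrak{U}}_s\bigl[\oplus^0\mathit{goal}\cdot\mathbb{1}\{\oplus^0\mathit{goal}>-w\}\bigr]$, and the right-hand side is dominated by $\mathbb{E}\bigl[M^+\cdot\mathbb{1}\{M^+>-w\}\bigr]$, which tends to $0$ uniformly in the scheduler by the exponential tail of $M^+$. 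So your outline stands, but this step should be argued via the tail bound rather than by appeal to the goal-avoiding scheduler.
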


The technical proof of this lemma can be found in Appendix \ref{app:optimalWD}. We arrive at the main result of this section.

\begin{mythm}[Existence of Optimal Schedulers for Partial Expectations]
If $\mathit{PE}^{\sup}_{s_{\mathit{init}}}$ is finite in an MDP $\mathcal{M}$, then there is a weight-based deterministic scheduler $\mathfrak{S}$ with 
$\mathit{PE}^{\sup}_{s_{\mathit{init}}}=\mathit{PE}^\mathfrak{S}_{s_{\mathit{init}}}$.
\end{mythm}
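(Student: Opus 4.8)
The plan is to assemble the three preceding results — Proposition~\ref{prop:supWD}, the compactness lemma for $(Act^{S\times\mathbb{Z}}, d^\mathcal{M})$, and the upper semi-continuity lemma for $\mathit{PE}$ — through the standard topological fact that an upper semi-continuous function on a compact metric space attains its supremum. The theorem is really the payoff of the machinery just set up, so the argument itself is short.

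First I would use Proposition~\ref{prop:supWD} to pass from arbitrary history-dependent randomized schedulers to weight-based deterministic ones: since $\mathit{PE}^{\sup}_{\sinit}$ is assumed finite, that proposition gives $\mathit{PE}^{\sup}_{\sinit} = \sup_{\sched \in \mathit{WD}^\mathcal{M}} \mathit{PE}^{\sched}_{\sinit}$. Hence it suffices to produce a single scheduler in $\mathit{WD}^\mathcal{M}$ realizing the supremum taken over $\mathit{WD}^\mathcal{M}$ alone. Identifying $\mathit{WD}^\mathcal{M}$ with the function space $Act^{S\times\mathbb{Z}}$ carrying the metric $d^\mathcal{M}$, the compactness lemma makes this a compact metric space, and the upper semi-continuity lemma tells us that the map $\sched \mapsto \mathit{PE}^{\sched}_{\sinit}$ into $(\mathbb{R}_\infty, d^{euclid})$ is upper semi-continuous.

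The attainment step then runs as follows. I would choose a maximizing sequence $(\sched_n)_n$ in $\mathit{WD}^\mathcal{M}$ with $\mathit{PE}^{\sched_n}_{\sinit} \to \mathit{PE}^{\sup}_{\sinit}$. By compactness, some subsequence $(\sched_{n_k})_k$ converges in $d^\mathcal{M}$ to a limit scheduler $\sched \in \mathit{WD}^\mathcal{M}$. Upper semi-continuity yields
\[
\mathit{PE}^{\sched}_{\sinit} \;\geq\; \limsup_{k\to\infty} \mathit{PE}^{\sched_{n_k}}_{\sinit} \;=\; \mathit{PE}^{\sup}_{\sinit},
\]
while $\mathit{PE}^{\sched}_{\sinit} \leq \mathit{PE}^{\sup}_{\sinit}$ holds trivially because $\sched$ is one of the schedulers over which the supremum is taken. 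Combining the two inequalities gives $\mathit{PE}^{\sched}_{\sinit} = \mathit{PE}^{\sup}_{\sinit}$, exhibiting the desired optimal weight-based deterministic scheduler.

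Since the substance is carried by the cited lemmas, there is no serious obstacle here; the only point deserving care is the extended-real codomain $\mathbb{R}_\infty$. The finiteness hypothesis on $\mathit{PE}^{\sup}_{\sinit}$ is precisely what rules out a $+\infty$ pathology: it guarantees the right-hand side above is a genuine real number, so that the upper semi-continuity inequality is an inequality in $\mathbb{R}$ and forces $\mathit{PE}^{\sched}_{\sinit}$ to be finite and exactly maximal rather than merely bounded below by a divergent limit.
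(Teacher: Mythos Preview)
Your proposal is correct and follows exactly the paper's approach: invoke compactness of $(\mathit{WD},d^{\mathcal{M}})$ and upper semi-continuity of $\mathit{PE}$ to conclude that the supremum over $\mathit{WD}$ is attained, then use Proposition~\ref{prop:supWD} to identify this with $\mathit{PE}^{\sup}_{\sinit}$. The paper compresses the attainment step into the single sentence ``this map has a maximum because $(\mathit{WD},d^{\mathit{WD}})$ is a compact metric space,'' whereas you spell out the maximizing-sequence-plus-subsequential-limit argument, but the content is identical.
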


\begin{proof}
If $\mathit{PE}^{\sup}_{s_{\mathit{init}}}$ is finite, then the map $\mathit{PE}: (\mathit{WD}, d^{\mathit{WD}}) \to (\mathbb{R}_\infty,d^{euclid})$ is upper semi-continuous. So, this map has a maximum because $(\mathit{WD},d^{\mathit{WD}})$ is a compact metric space. 
\end{proof}

\begin{mycor}[Existence of Optimal Schedulers for Conditional Expectations]
If $\mathit{CE}^{\sup}_{s_{\mathit{init}}}$ is finite in an MDP $\mathcal{M}$, then there is a weight-based deterministic scheduler $\mathfrak{S}$ with 
$\mathit{CE}^{\sup}_{s_{\mathit{init}}}=\mathit{CE}^\mathfrak{S}_{s_{\mathit{init}}}$.
\end{mycor}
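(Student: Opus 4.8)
The plan is to reduce the optimization of the conditional expectation to the optimization of a partial expectation with a cleverly chosen bias, and then to invoke the existence theorem for partial expectations just proved. First I would use Proposition~\ref{prop:reduction} to replace $\mathcal{M}$ by the MDP $\mathcal{N}$, for which $\Pr^{\min}_{\mathcal{N},t_{\mathit{init}}}(\Diamond\goal)>0$ and $\mathit{CE}^{\sup}_{\mathcal{N},t_{\mathit{init}}}=\mathit{CE}^{\sup}_{\mathcal{M},\sinit}$; working in $\mathcal{N}$ is what makes the supremum version of Proposition~\ref{prop:threshold} available. Set $\theta=\mathit{CE}^{\sup}_{\mathcal{N},t_{\mathit{init}}}$, which is finite by hypothesis. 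Since finiteness of the conditional expectation rules out positively weight-divergent end components, the finiteness criterion for partial expectations (which holds for every bias) guarantees that $\mathit{PE}^{\sup}_{\mathcal{N},t_{\mathit{init}}}[-\theta]$ is finite as well.

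Next I would pin down the exact value $\mathit{PE}^{\sup}_{\mathcal{N},t_{\mathit{init}}}[-\theta]=0$. Because $\Pr^{\min}>0$, the second part of Proposition~\ref{prop:threshold} applies: instantiating $\bowtie$ as $\geq$ and using $\mathit{CE}^{\sup}=\theta$ gives $\mathit{PE}^{\sup}[-\theta]\geq 0$, while instantiating $\bowtie$ as $>$ and using that $\mathit{CE}^{\sup}>\theta$ fails gives $\mathit{PE}^{\sup}[-\theta]\leq 0$. Together these force the optimal partial expectation with bias $-\theta$ to be exactly $0$.

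I would then apply the existence theorem for partial expectations. Although that theorem is stated for bias $0$, the compactness of $(Act^{S\times\mathbb{Z}},d^{\mathcal{N}})$ and the upper semi-continuity argument of Section~\ref{sec:optimalWD} carry over unchanged to arbitrary bias, since $\mathit{PE}^{\mathfrak{S}}[-\theta]=\mathit{PE}^{\mathfrak{S}}[0]-\theta\cdot\Pr^{\mathfrak{S}}_{t_{\mathit{init}}}(\Diamond\goal)$ and the reachability term is continuous on this scheduler space. This yields a weight-based deterministic scheduler $\mathfrak{S}$ with $\mathit{PE}^{\mathfrak{S}}_{\mathcal{N},t_{\mathit{init}}}[-\theta]=0$. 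Since $\Pr^{\min}_{\mathcal{N},t_{\mathit{init}}}(\Diamond\goal)>0$ forces $\Pr^{\mathfrak{S}}_{t_{\mathit{init}}}(\Diamond\goal)>0$, the first part of Proposition~\ref{prop:threshold} is available for $\mathfrak{S}$; reading it with $\bowtie$ as $\geq$ and as $\leq$ gives $\mathit{CE}^{\mathfrak{S}}_{\mathcal{N},t_{\mathit{init}}}=\theta=\mathit{CE}^{\sup}$, so $\mathfrak{S}$ is optimal in $\mathcal{N}$.

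The step I expect to be the real obstacle is not this chain of equalities but the two places where the positivity $\Pr^{\min}>0$ is indispensable. Without the reduction to $\mathcal{N}$, a scheduler maximizing $\mathit{PE}[-\theta]$ could be one that avoids $\goal$ almost surely (its partial expectation is then trivially $0$), for which the conditional expectation is undefined, so one cannot read off $\mathit{CE}^{\mathfrak{S}}=\theta$; the detour through $\mathcal{N}$ is precisely what excludes this degenerate maximizer. The remaining point to verify carefully is the transfer of the optimal scheduler from $\mathcal{N}$ back to $\mathcal{M}$: by construction $\mathfrak{S}$ commits at $t_{\mathit{init}}$ to leaving the minimal-probability region $S_0$ at some state-action pair after accumulating the maximal weight $w_s$, and this finite prefix must be realized in $\mathcal{M}$ by a weight-based deterministic scheduler following a maximal-weight path through $(S_0,Act^{\min})$ before handing over to $\mathfrak{S}$; confirming that this translation preserves the conditional expectation (using that $(S_0,Act^{\min})$ has no positive cycles) is where the bookkeeping lies.
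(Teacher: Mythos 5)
Your proposal follows the paper's own proof almost step for step: reduce to the case $\Pr^{\min}_{t_{\mathit{init}}}(\Diamond\goal)>0$ via Proposition~\ref{prop:reduction}, establish $\mathit{PE}^{\sup}[-\mathit{CE}^{\sup}]=0$, invoke the existence theorem of Section~\ref{sec:optimalWD} at bias $-\mathit{CE}^{\sup}$ to obtain a weight-based deterministic maximizer, and translate back through Proposition~\ref{prop:threshold}. The paper's proof is three lines long and silent on exactly the two points you elaborate: why the bias-$0$ existence theorem applies at nonzero bias, and how the optimal scheduler for $\mathcal{N}$ becomes one for $\mathcal{M}$. On the second point your caution is well placed (the scheduler built in the proof of Proposition~\ref{prop:reduction} follows a fixed maximal-weight path and is history-dependent, so turning it into a weight-based deterministic scheduler of equal conditional expectation does require the bookkeeping you describe); since you flag this for verification rather than claim it, it is a shared gap with the paper, not an error of yours.

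The genuine flaw is in the first point: your justification that the upper semi-continuity argument extends to bias $-\theta$ rests on the claim that $\mathfrak{S}\mapsto\Pr^{\mathfrak{S}}_{t_{\mathit{init}}}(\Diamond\goal)$ is continuous on $(Act^{S\times\mathbb{Z}},d^{\mathcal{N}})$, and that claim is false. Take a state $s$ with a self-loop $\sigma$ of weight $-1$ and an action $\tau$ leading to $\goal$ with weight $0$ (preceded by a fair coin flip into $\goal$ if you want $\Pr^{\min}>0$ to hold). The scheduler $\mathfrak{S}$ with $\mathfrak{S}(s,w)=\sigma$ for all $w$ never reaches $\goal$ from $s$, whereas the schedulers $\mathfrak{T}_R$ with $\mathfrak{T}_R(s,w)=\sigma$ for $w>-R$ and $\mathfrak{T}_R(s,w)=\tau$ otherwise reach it from $s$ with probability $1$; yet $d^{\mathcal{N}}(\mathfrak{T}_R,\mathfrak{S})\le 2^{-R}\to 0$. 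So reachability is only lower semi-continuous on this space, not continuous. Consequently your decomposition $\mathit{PE}^{\mathfrak{S}}[-\theta]=\mathit{PE}^{\mathfrak{S}}[0]-\theta\cdot\Pr^{\mathfrak{S}}_{t_{\mathit{init}}}(\Diamond\goal)$ proves upper semi-continuity of $\mathit{PE}[-\theta]$ only when $\theta\ge 0$, where $-\theta\cdot\Pr(\Diamond\goal)$ is a non-positive multiple of an LSC function and hence USC; when $\mathit{CE}^{\sup}<0$ (perfectly possible with integer weights, e.g.\ when all weights are negative) the sum of a USC and an LSC function need not be USC and the argument collapses. The robust repair is not the decomposition but rerunning the appendix proof of the upper semi-continuity lemma with the bias carried through the window estimates, where it only shifts additive constants; this is evidently what the paper takes for granted when it applies its bias-$0$ theorem at bias $-\mathit{CE}^{\sup}$.
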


\begin{proof}
By  Proposition \ref{prop:reduction}, we can assume that $\Pr^{\min}_{s_{\mathit{init}}}(\Diamond \mathit{goal})>0$. We know that $\mathit{PE}^{\sup}_{s_{\mathit{init}}}[-\mathit{CE}^{\sup}_{s_{\mathit{init}}}]=0$ and that there is a weight-based deterministic scheduler $\mathfrak{S}$ with $\mathit{PE}^{\mathfrak{S}}_{s_{\mathit{init}}}[-\mathit{CE}^{\sup}_{s_{\mathit{init}}}]=0$. By Proposition \ref{prop:threshold}, $\mathfrak{S}$ maximizes the conditional expectation as it reaches $\mathit{goal}$ with positive probability.
\end{proof}

In MDPs with non-negative weights, the optimal decision in a state $s$ only depends on $s$ as soon as the accumulated weight exceeds a saturation point. In MDPs with arbitrary integer weights, it is possible that the optimal choice of action does not become stable for increasing values of accumulated weight as we see in the following example.

\FloatBarrier

\begin{figure} 
\begin{center}
\resizebox{0.75\textwidth}{!}{%
\begin{tikzpicture}[scale=1,auto,node distance=8mm,>=latex,font=\small]
    \tikzstyle{round}=[thick,draw=black,circle]

    \node[round, draw=black,minimum size=0pt,minimum size=20pt] (sinit) {$s_{\mathit{init}}$};
    \node[round,right=13mm of sinit,minimum size=20pt] (s3) {$t$};
        
    \node[round,right=13mm of s3,minimum size=20pt] (goal) {$\mathit{goal}$};
    \node[round,below=15mm of goal,minimum size=20pt] (fail) {$\fail$};
    
    \node[above=18mm of s3] (text) {The MDP $\mathcal{N}$:};

 \draw[color=black ,->] (sinit) [out=40,in=100,loop,min distance=20mm] edge  node [very near start, anchor=center] (n5) {} node [pos=0.6,above] {} (sinit) ;
    \draw[color=black ,->] (sinit) edge node [near start, anchor=center] (n6) {} node [pos=0.7,above] {} (s3) ;
    \draw[color=black , very thick] (n5.center) edge [bend left=35] node [pos=0.3] {$\tau | +1$} (n6.center);

    \draw[color=blue ,->] (s3) [out=40,in=100,loop,min distance=20mm] edge  node [very near start, anchor=center] (n5) {} node [pos=0.6,above] {} (s3) ;
    \draw[color=blue ,->] (s3) edge node [near start, anchor=center] (n6) {} node [pos=0.7,above] {} (goal) ;
    \draw[color=blue , very thick] (n5.center) edge [bend left=35] node [pos=0.3] {$\beta | -1$} (n6.center);
    
    \draw[color=red ,->] (s3) edge [bend right= 30]  node [ anchor=center] (m5) {} node [pos=0.7,above] {} (goal) ;
    \draw[color=red ,->] (s3) edge node [near start, anchor=center] (m6) {} node [pos=0.7,above] {} (fail) ;
 \draw[color=red , very thick] (m5.center) edge [bend left=35] node [pos=0.3] {$\alpha | 0$} (m6.center);

\end{tikzpicture}
\hspace{40pt}
\begin{tikzpicture}[scale=1,auto,node distance=8mm,>=latex,font=\small]
    \tikzstyle{round}=[thick,draw=black,circle]

    \node[round, draw=black,minimum size=0pt,minimum size=10pt] (sinit) {$s_{\mathit{init}}$};
    \node[round,right=15mm of sinit,minimum size=20pt] (s3) {$t$};
    
    \node[round,above=18mm of s3,minimum size=20pt] (q0) {$q_0$};
     \node[round,right=15mm of q0,minimum size=20pt] (q1) {$q_1$};

    \node[round,above=15mm of q1,minimum size=20pt] (goal2) {$\mathit{goal}$};
    \node[round,below=16mm of q1,minimum size=20pt] (fail2) {$\fail$};
    
     \node[left=20mm of goal2] (text) {The MDP $\mathcal{M}$:};

 \draw[color=black ,->] (sinit) [out=40,in=100,loop,min distance=20mm] edge  node [very near start, anchor=center] (n5) {} node [pos=0.6,above] {} (sinit) ;
    \draw[color=black ,->] (sinit) edge node [near start, anchor=center] (n6) {} node [pos=0.7,above] {} (s3) ;
    \draw[color=black , very thick] (n5.center) edge [bend left=35] node [pos=0.3] {$\tau | +1$} (n6.center);

 \draw[color=darkgray,->] (s3)  edge  node [pos=0.4,right] {$\gamma / 0$} (q0) ;
 \draw[color=purple,->] (s3)  edge  node [pos=0.4,right] {$\delta / 0$} (q1) ;

        \draw[color=red,->] (q1)  edge node [near start, anchor=center] (l3) {} node [pos=0.6,below] {} (goal2) ;
    \draw[color=red,->] (q1) edge node [  near start,  anchor=center] (l4) {} node [pos=0.6,left] {}(fail2) ;
    \draw[color=red, very thick] (l3.center) [bend left=90] edge node [pos=0.5] {$\alpha | 0$}  (l4.center);
    
    \draw[color=blue ,->] (q0) [out=220,in=190,loop,min distance=20mm] edge  node [very near start, anchor=center] (m5) {} node [pos=0.6,above] {} (goal2) ;
    \draw[color=blue ,->] (q0) [bend right=15] edge node [near start, anchor=center] (m6) {} node [pos=0.7,above] {} (q1) ;
    \draw[color=blue , very thick] (m6.center) edge [bend left=35] node [pos=0.3] {$\beta | -1$} (m5.center);
    
      \draw[color=blue ,->] (q1) [out=140,in=250,loop,min distance=10mm] edge  node [very near start, anchor=center] (m7) {} node [pos=0.6,above] {} (goal2) ;
    \draw[color=blue ,->] (q1) [bend right=15] edge node [near start, anchor=center] (m8) {} node [pos=0.7,above] {} (q0) ;
    \draw[color=blue , very thick] (m8.center) edge [bend left=35] node [pos=0.3] {$\beta | -1$} (m7.center);
\end{tikzpicture}
}
\end{center}%
\caption{All non-trivial transition probabilities are $1/2$. In the MDP $\mathcal{M}$, the optimal choice to maximize the partial expectation in $t$ depends on the parity of the accumulated weight.}
\label{fig:nonstationary}
\end{figure}
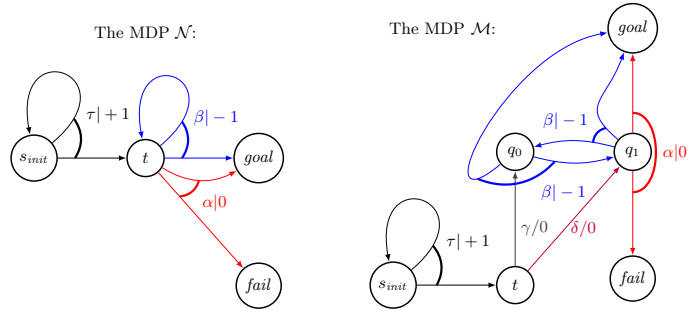

\begin{myex}
Let us first consider the MDP $\mathcal{N}$ depicted in Figure \ref{fig:nonstationary}. 
Let $\pi$ be a path reaching $t$ for the first time with accumulated weight $r$. Consider a scheduler which chooses $\beta$ for the first $k$ times and then $\alpha$. In this situation, the partial expectation from this point on is:
\[
\frac{1}{2^{k+1}}\left(  r{-}k  \right) + \sum_{i=1}^k \frac{1}{2^i} (r{-}i)= \frac{1}{2^{k+1}}+\sum_{i=1}^{k+1} \frac{1}{2^i} (r{-}i) = \frac {k{-}r+4}{2^{k+1}}+r{-}2.
\] 
For $r\geq2$, this partial expectation has its unique maximum for the choice $k=r{-}2$. This already shows that an optimal scheduler needs infinite memory. No matter how much weight $r$ has been accumulated when reaching $t$, the optimal scheduler has to count the $r{-}2$ times it chooses $\beta$.

Furthermore, we can transfer the optimal scheduler for the MDP $\mathcal{N}$ to the MDP $\mathcal{M}$. In state $t$, we have to make a nondeterministic choice between two action leading to the states $q_0$ and $q_1$, respectively. In both of these states, action $\beta$ is enabled which  behaves like the same action in the MDP $\mathcal{N}$ except that it moves between the two states if $\mathit{goal}$ is not reached. So, the action $\alpha$ is only enabled every other step. As in $\mathcal{N}$,  we want to choose $\alpha$ after choosing $\beta$ $r{-}2$ times if we arrived in $t$ with accumulated weight $r\geq 2$. So, the choice in $t$ depends on the parity of $r$: For $r=1$ or $r$ even, we choose $\delta$. For odd $r\geq 3$, we choose $\gamma$. This shows that the optimal scheduler in the MDP $\mathcal{M}$  needs  specific information about the accumulated weight, in this case the parity, no matter how much weight has been accumulated.

In the example, the optimal scheduler has a periodic behavior when fixing a state and looking at optimal decisions for increasing values of accumulated weight. The question whether an optimal scheduler always has such a periodic behavior remains open.
\end{myex}

\section{Approximation} \label{sec:approximation}


As the optimal values for partial and conditional expectation can be irrational, there is no hope to compute these values by linear programming as in the case of non-negative weights. In this section, we show how we can nevertheless approximate the values. The main result is the following.

\begin{mythm} \label{thm:approximation}
Let $\mathcal{M}$ be an MDP with $\mathit{PE}^{\sup}_{\mathcal{M},s_{\mathit{init}}}<\infty$ and $\epsilon>0$.
The maximal partial expectation $\mathit{PE}^{\sup}_{\mathcal{M},s_{\mathit{init}}}$ can be approximated up to an absolute error of $\epsilon$ in time exponential in the size of $\mathcal{M}$ and polynomial in $\log(1/\epsilon)$.
If further, $\mathit{CE}^{\sup}_{\mathcal{M},s_{\mathit{init}}}<\infty$, also $\mathit{CE}^{\sup}_{\mathcal{M},s_{\mathit{init}}}$ can be approximated up to an absolute error of $\epsilon$ in time exponential in the size of $\mathcal{M}$ and polynomial in $\log(1/\epsilon)$.
\end{mythm}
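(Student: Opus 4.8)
The plan is to reduce the computation of $\mathit{PE}^{\sup}_{\sinit}$ to the exact solution of a \emph{finite} Bellman system over a bounded window of accumulated weights, and to control the truncation error. By Proposition~\ref{prop:removingzeroECs} I may assume that every end component has negative maximal expected mean payoff, so that $p^{\max}_s$, $p^{\min}_s$, the memoryless schedulers $\mathfrak{Max}$, $\mathfrak{Min}$ of Proposition~\ref{Scheduler Max}, and the values $\mathit{PE}^{\mathfrak{Max}}_s$, $\mathit{PE}^{\mathfrak{Min}}_s$ are all available in polynomial time. Writing $g(s,w):=\mathit{PE}^{\sup}_s[w]$, the identity $\mathit{PE}^{\mathfrak{S}}_s[w]=\mathit{PE}^{\mathfrak{S}}_s[0]+w\cdot\Pr^{\mathfrak{S}}_s(\Diamond\goal)$ shows that $w\mapsto g(s,w)$ is a supremum of affine maps, hence convex with slopes in $[p^{\min}_s,p^{\max}_s]$; its two asymptotes are the lines $w\mapsto p^{\max}_s w+\mathit{PE}^{\mathfrak{Max}}_s$ (as $w\to+\infty$) and $w\mapsto p^{\min}_s w+\mathit{PE}^{\mathfrak{Min}}_s$ (as $w\to-\infty$), both lying below $g$.

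The technical heart is to show that $g$ approaches these asymptotes geometrically. I would first prove a \emph{decay lemma}: since no end component is positively weight-divergent and all have negative mean payoff, there are computable constants $C>0$ and $\lambda\in(0,1)$ with $\Pr^{\max}_s(\Diamond(\wgt\geq k))\leq C\lambda^k$ for all states $s$ and all $k\in\Nat$. From this one bounds the \emph{gap} $0\leq g(s,w)-(p^{\max}_s w+\mathit{PE}^{\mathfrak{Max}}_s)\leq\delta(w)$ by a quantity $\delta(w)$ decaying geometrically in $w$: any scheduler beating $\mathfrak{Max}$ from a large weight $w$ must reach still larger weights before $\goal$, an event of probability $O(\lambda^{w})$, so the attainable surplus over the $\mathfrak{Max}$-line is $O(\lambda^{w})$. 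A symmetric argument bounds the gap to the $\mathfrak{Min}$-line for $w\to-\infty$. Since $\log(1/\lambda)$ may be as small as $2^{-\poly(\Size(\cM))}$, forcing both gaps below $\epsilon$ requires a window bound $N=2^{\poly(\Size(\cM))}\cdot\log(1/\epsilon)$. This quantitative decay estimate is the main obstacle, as it both legitimizes the truncation and fixes the (exponential) window size.

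With $N$ fixed I would solve the finite system in the variables $(x_{s,w})_{s\in S',\,|w|\leq N}$, $S'=S\setminus\{\goal,\fail\}$, given by the Bellman equations
\[ x_{s,w}=\max_{\act\in\Act(s)}\Big(P(s,\act,\goal)\,(w{+}\wgt(s,\act))+\sum_{t\in S'}P(s,\act,t)\,x_{t,\,w+\wgt(s,\act)}\Big), \]
where any reference with index $w'>N$ is replaced by $p^{\max}_t w'+\mathit{PE}^{\mathfrak{Max}}_t$ and any with $w'<-N$ by $p^{\min}_t w'+\mathit{PE}^{\mathfrak{Min}}_t$ (the boundary strip has width $\rmax$ so that single steps cannot jump across it). This is solvable by linear programming exactly as in the non-negative setting of Section~\ref{sec:PE}; negative mean payoff guarantees that from every state--weight pair the process almost surely leaves the window or reaches $\goal/\fail$, so the solution is unique. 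Monotonicity of the Bellman operator sandwiches the computed value between the solutions obtained from the asymptotic boundary data and from that data raised by $\delta$, and the true $g$ (a fixed point with boundary data lying in between) is caught in the same sandwich; hence $|x_{\sinit,0}-g(\sinit,0)|$ is at most the maximal boundary gap, i.e.\ below $\epsilon$. The window has $|S|\cdot O(N)$ cells, so the program is solved in time polynomial in $N$ and in $\Size(\cM)$, that is, exponential in $\Size(\cM)$ and polynomial in $\log(1/\epsilon)$.

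For the conditional expectation, Proposition~\ref{prop:reduction} lets me assume $\Pr^{\min}_{\sinit}(\Diamond\goal)>0$. By Proposition~\ref{prop:threshold}, $\mathit{CE}^{\sup}_{\sinit}$ is the unique $\theta$ with $\mathit{PE}^{\sup}_{\sinit}[-\theta]=0$, and $\theta\mapsto\mathit{PE}^{\sup}_{\sinit}[-\theta]$ is convex and strictly decreasing with slope at most $-p^{\min}_{\sinit}<0$. A binary search over a polynomially bounded interval of candidate values, at each step $\epsilon'$-approximating $\mathit{PE}^{\sup}_{\sinit}[-\theta]$ by the procedure above with $\epsilon'=\epsilon\cdot p^{\min}_{\sinit}$, therefore locates $\theta$ up to error $\epsilon$; the $O(\log(1/\epsilon))$ search steps preserve the overall complexity.
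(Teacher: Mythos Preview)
Your overall architecture---truncate to a weight window, use the $\mathfrak{Max}$- and $\mathfrak{Min}$-lines as boundary data, solve by linear programming, then binary-search for the conditional case---is exactly the paper's approach. The decay lemma you posit is what the paper proves via Hordijk--Kallenberg super-potentials and Blackwell's inequality (Corollary~\ref{cor:boundEC}), so that part is fine.

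There is, however, a real problem with how you bound the truncation error. On the upper side your reasoning ``any scheduler beating $\mathfrak{Max}$ from a large weight $w$ must reach still larger weights, an event of probability $O(\lambda^w)$'' is not right: from bias $w$ the process starts with accumulated weight $0$, so reaching additional weight $k$ has probability $O(\lambda^k)$, not $O(\lambda^w)$. The paper does \emph{not} need the gap $g(s,w)-(p^{\max}_s w+\mathit{PE}^{\mathfrak{Max}}_s)$ to decay in $w$; it simply bounds it by the constant $D:=\mathit{PE}^{ub}-\min_s\mathit{PE}^{\mathfrak{Max}}_s$ (using $\Pr^{\mathfrak S}_s\le p^{\max}_s$ and $w\ge 0$) and then multiplies by the probability of ever hitting the upper boundary $R^+$ from $s_{\init}$, which \emph{is} $O(\lambda^{R^+})$.

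The lower side is the more serious gap. The argument is \emph{not} symmetric: the probability of hitting the lower boundary from $s_{\init}$ can be close to $1$, so you really need the gap at $R^-$ itself to be small. The paper obtains this via a separate ingredient you omit: a computable \emph{lower saturation point} $\mathfrak q$ such that every optimal weight-based scheduler chooses only actions in $Act^{\min}(s)$ whenever the accumulated weight is below $\mathfrak q$. From a path at weight $w\le R^-:=\mathfrak q - R^+$, the optimal scheduler can therefore outperform $\mathfrak{Min}$ only after climbing at least $R^+$ units of weight back up to $\mathfrak q$; that climb has probability $O(\lambda^{R^+})$, and afterwards the advantage is at most $D$. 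Without this saturation-point step your ``symmetric argument'' has no force.
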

We first prove that upper bounds for  $\mathit{PE}^{\sup}_{\mathcal{M},s_{\mathit{init}}}$  and  $\mathit{CE}^{\sup}_{\mathcal{M},s_{\mathit{init}}}$ can be computed in polynomial time. Then, we show that there are $\epsilon$-optimal schedulers for the partial expectation which become memoryless as soon as the accumulated weight leaves a sufficiently large weight window around $0$. We compute the optimal partial expectation of such a scheduler  by linear programming. The result can then be extended to conditional expectations.

\tudparagraph{1ex}{Upper Bounds.} Let $\mathcal{M}$ be an MDP in which all end components have negative maximal mean payoff. 
Let $\delta$ be the minimal non-zero transition probability in $\mathcal{M}$ and $W:= \max_{s\in S, \alpha\in Act(s)} |\wgt(s,\alpha)|$. Moving through the MEC-quotient, the probability to reach an accumulated weight of $|S|\cdot W$ is bounded by $1-\delta^{|S|}$ as $\mathit{goal}$ or $\fail$ is reached within $S$ steps with probability at least $1-\delta^{|S|}$. It remains to show similar bounds inside an end component. 

We will use the characterization of the maximal mean payoff in terms of super-harmonic vectors due to Hordijk and Kallenberg \cite{hordijk1979} 
to define a supermartingale controlling the growth of the accumulated weight in an end component under any scheduler.
As the value vector for the maximal mean payoff in an end component is constant and negative in our case, the results of \cite{hordijk1979} yield:

\begin{myprop}[Hordijk, Kallenberg] \label{prop:MP}
Let $\mathcal{E}=(S,Act)$ be an end component with maximal mean payoff $-t$ for some $t>0$. Then there is a vector $(u_s)_{s\in S}$ such that
$ -t + u_s \geq wgt(s,\alpha) + \sum_{s^\prime\in S} P(s,\alpha,s^\prime) \cdot u_{s^\prime}$.

Furthermore, let $v$ be the vector (-t,\dots,-t) in $\mathbb{R}^S$. Then, $(v,u)$ is the solution to a linear program with $2|S|$ variables, $2|S||Act|$ inequalities, and coefficients formed from the transition probabilities and weights in $\mathcal{E}$.
\end{myprop}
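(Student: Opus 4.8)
The plan is to recognize the statement as the specialization of the Hordijk--Kallenberg linear-programming characterization of the maximal mean payoff to a \emph{strongly connected} MDP, where the gain is constant. First I would recall the average-reward framework: Hordijk and Kallenberg characterize the maximal gain $g_s$ attainable from each state $s$ as the componentwise value of the program that minimizes $\sum_{s\in S} g_s$ subject to the \emph{harmonicity} constraints $g_s \geq \sum_{s'\in S} P(s,\alpha,s')\, g_{s'}$ and the \emph{bias} constraints $g_s + u_s \geq \wgt(s,\alpha) + \sum_{s'\in S} P(s,\alpha,s')\, u_{s'}$, one of each for every pair $(s,\alpha)$. These constraints involve exactly the $2|S|$ variables $(g_s)_s,(u_s)_s$ and the $2|S||Act|$ inequalities named in the statement.

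Next I would invoke two facts: by the Hordijk--Kallenberg theorem the optimal gain component of this program equals the maximal mean payoff attainable from each state, and, as recorded in the preliminaries, in a strongly connected MDP that maximal mean payoff is independent of the starting state. Hence at an optimal solution the gain component is the constant vector $v=(-t,\dots,-t)$ (the harmonicity constraints then hold with equality, since $\sum_{s'} P(s,\alpha,s')(-t)=-t$). Substituting $g_s=-t$ into the bias constraints yields precisely $-t + u_s \geq \wgt(s,\alpha) + \sum_{s'} P(s,\alpha,s')\, u_{s'}$ for all $s,\alpha$, with $u$ the bias component of that optimal solution; this establishes the existence claim, and since $(v,u)$ is by construction an optimal solution of the displayed program, it also yields the ``furthermore'' part with the stated counts of variables and inequalities.

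The only substantive ingredient is the existence of a bias vector $u$ feasible at $g=-t$, i.e.\ the solvability of the average-reward optimality equations in a communicating MDP. Were I not to cite Hordijk--Kallenberg as a black box, I would supply this via the vanishing-discount argument: for discount factor $\lambda\uparrow 1$ the optimal discounted values obey the Bellman equation $V_\lambda(s)=\max_{\alpha}\bigl[\wgt(s,\alpha)+\lambda\sum_{s'} P(s,\alpha,s')\,V_\lambda(s')\bigr]$, one has $(1-\lambda)V_\lambda(s)\to -t$, and the normalized relative values $u_\lambda(s)=V_\lambda(s)-V_\lambda(s_0)$ stay bounded and converge along a subsequence to some $u$. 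Passing to the limit gives $-t + u_s = \max_\alpha[\wgt(s,\alpha)+\sum_{s'} P(s,\alpha,s')\,u_{s'}]$, of which the asserted inequality is the per-action weakening. I expect the main obstacle to be the boundedness of the relative values $u_\lambda$: this is exactly where strong connectivity of the end component is essential, since it bounds expected hitting times between states, and it is the step demanding the most care.
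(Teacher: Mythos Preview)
The paper does not prove this proposition: it is stated as a direct consequence of the results of Hordijk and Kallenberg \cite{hordijk1979}, with the only additional observation being that in an end component (a strongly connected sub-MDP) the maximal mean payoff is state-independent, so the gain vector is the constant $(-t,\dots,-t)$. Your proposal is correct and in fact supplies considerably more detail than the paper: you correctly identify the Hordijk--Kallenberg linear program (with the harmonicity constraints on the gain and the bias constraints), explain why the gain component collapses to a constant in the strongly connected case, and read off the super-potential inequality. Your optional vanishing-discount derivation of the bias vector is a standard and sound alternative route, and your remark that strong connectivity is exactly what is needed to bound the relative discounted values is accurate. In short, your approach matches the paper's intent and goes further; there is no gap.
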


We will call the vector $u$ a \emph{super-potential} because the expected accumulated weight after $i$ steps is at most $u_s -\min_{t \in S} u_t - i\cdot t$ when starting in state $s$.
Let $\mathfrak{S}$ be a scheduler for $\mathcal{E}$ starting in some state $s$. We define the following random variables on $\mathfrak{S}$-runs in $\mathcal{E}$: let $s(i)\in S$ be the state after $i$ steps, let $\alpha(i)$ be the action chosen after $i$ steps, let $w(i)$ be the accumulated weight after $i$ steps, and let $\pi(i)$ be the history, i.e. the finite path after $i$ steps. 

\begin{mylem}
The sequence $m(i):=w(i)+u_{s(i)} $ satisfies  $\mathbb{E}(m(i+1)|\pi(0),\dots,\pi(i))$ $\leq m(i)-t$ for all $i$.%
\footnote{This means that $m(i)+i\cdot t$ is a super-martingale with respect to the history $\pi(i)$.}
\end{mylem}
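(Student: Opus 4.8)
The plan is to verify the super-martingale inequality by conditioning on the history and applying the Hordijk--Kallenberg inequality of Proposition~\ref{prop:MP} action-by-action; no appeal to determinism or memorylessness of $\mathfrak{S}$ is needed.

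First I would unfold the definition. Since $w(i{+}1) = w(i) + \wgt(s(i),\alpha(i))$, we have $m(i{+}1) = w(i) + \wgt(s(i),\alpha(i)) + u_{s(i{+}1)}$. Conditioned on the full history $\pi(0),\dots,\pi(i)$, the current state $s(i)$ and the accumulated weight $w(i)$ are both determined, so the only randomness left in $m(i{+}1)$ comes from the scheduler's choice of $\alpha(i)$ and the subsequent probabilistic transition to $s(i{+}1)$. Writing $q_\alpha := \mathfrak{S}(\pi(i))(\alpha)$ for the probability that the (history-dependent, possibly randomized) scheduler picks the enabled action $\alpha$ in this situation, taking the conditional expectation yields
\[
\mathbb{E}\bigl(m(i{+}1)\mid \pi(0),\dots,\pi(i)\bigr)
= w(i) + \sum_{\alpha\in Act(s(i))} q_\alpha\Bigl(\wgt(s(i),\alpha) + \sum_{s'\in S} P(s(i),\alpha,s')\cdot u_{s'}\Bigr).
\]

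Next I would invoke Proposition~\ref{prop:MP}: the super-potential $u$ satisfies, for \emph{every} action $\alpha$ enabled in $s(i)$, the inequality $\wgt(s(i),\alpha) + \sum_{s'} P(s(i),\alpha,s')\cdot u_{s'} \le -t + u_{s(i)}$. Crucially, the right-hand bound $-t + u_{s(i)}$ does not depend on $\alpha$. Since the $q_\alpha$ are non-negative and sum to one, averaging over the scheduler's action distribution preserves this bound and the convex combination collapses, giving
\[
\mathbb{E}\bigl(m(i{+}1)\mid \pi(0),\dots,\pi(i)\bigr)
\le w(i) + \bigl(-t + u_{s(i)}\bigr) = m(i) - t,
\]
as claimed. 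The equivalent footnote statement that $m(i) + i\cdot t$ is a super-martingale then follows immediately by rearranging.

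This lemma is essentially a routine super-martingale verification, so there is no genuine obstacle; the only point requiring care is that $\mathfrak{S}$ is allowed to be randomized and history-dependent. This is handled precisely because the Hordijk--Kallenberg inequality holds \emph{uniformly} over all actions, so it survives averaging against any action distribution the scheduler may prescribe, and because the bound it delivers is independent of the chosen action. Thus the argument is valid for the full class $\mathit{HR}^{\mathcal{M}}$ without restriction.
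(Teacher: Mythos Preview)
Your proof is correct and follows the same approach as the paper: compute $\mathbb{E}(m(i{+}1)\mid\pi(0),\dots,\pi(i))-m(i)$ and bound it by $-t$ using the Hordijk--Kallenberg inequality of Proposition~\ref{prop:MP}. The paper's proof is terser and writes $\mathfrak{S}(\pi(i))$ as a single action, implicitly treating the scheduler as deterministic; your version is more careful in handling the randomized case by averaging over the action distribution, but since the bound from Proposition~\ref{prop:MP} is uniform in $\alpha$ this makes no substantive difference.
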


\begin{proof}
By  Proposition \ref{prop:MP},
$\mathbb{E}(m(i+1)|\pi(0),\dots,\pi(i))-m(i) 
= wgt(s(i),\mathfrak{S}(\pi(i))) + \sum_{s^\prime\in S} P(s(i),\mathfrak{S}(\pi(i)),s^\prime)\cdot u_{s^\prime}  -u_{s(i)} 
\leq -t $.
 \end{proof}

We are going to  apply the following theorem by Blackwell \cite{blackwell1954}.

\begin{mythm}[Blackwell  \cite{blackwell1954}] \label{thm:blackwell}
Let $X_1, X_2, \dots $ be random variables, and let $S_n:= \sum_{k=1}^n X_k$. Assume that $|X_i|\leq 1$ for all $i$ and that there is a $u>0$ such that $\mathbb{E}(X_{n+1}|X_1,\dots,X_n) \leq -u$. Then,
$\Pr(\sup_{n\in\mathbb{N}} S_n \geq t) \leq \left( \frac{1-u}{1+u}\right)^t$.
\end{mythm}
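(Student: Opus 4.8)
The plan is to reduce the tail bound to the maximal inequality for nonnegative supermartingales by an exponential change of variable. I would set $\lambda := \frac{1+u}{1-u}$, which satisfies $\lambda > 1$ because $0 < u < 1$ (the degenerate case $u=1$ forces $X_i \equiv -1$, so both sides vanish for $t>0$). With $S_0 := 0$, I would study the process $Y_n := \lambda^{S_n}$ adapted to the filtration $\mathcal{F}_n := \sigma(X_1,\dots,X_n)$, noting $Y_0 = 1$ and $Y_n \geq 0$ throughout.

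First I would establish that $(Y_n)$ is a supermartingale. Since $\lambda > 1$, the map $x \mapsto \lambda^x$ is convex, hence on $[-1,1]$ it lies below the chord joining $(-1,\lambda^{-1})$ and $(1,\lambda)$; that is, $\lambda^x \leq \frac{1-x}{2}\lambda^{-1} + \frac{1+x}{2}\lambda$ for all $x \in [-1,1]$. Applying this with $x = X_{n+1}$ (legitimate because $|X_{n+1}| \leq 1$) and taking conditional expectations yields
\[
\mathbb{E}(\lambda^{X_{n+1}} \mid \mathcal{F}_n) \;\leq\; \frac{\lambda + \lambda^{-1}}{2} + \mathbb{E}(X_{n+1}\mid\mathcal{F}_n)\cdot \frac{\lambda - \lambda^{-1}}{2}.
\]
The coefficient $\frac{\lambda-\lambda^{-1}}{2}$ is positive, so the right-hand side is increasing in $\mathbb{E}(X_{n+1}\mid\mathcal{F}_n)$; substituting the hypothesis $\mathbb{E}(X_{n+1}\mid\mathcal{F}_n) \leq -u$ and inserting $\lambda = \frac{1+u}{1-u}$ makes it collapse to exactly $1$. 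Therefore $\mathbb{E}(Y_{n+1}\mid\mathcal{F}_n) = Y_n \cdot \mathbb{E}(\lambda^{X_{n+1}}\mid\mathcal{F}_n) \leq Y_n$, as required.

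With the supermartingale property secured, I would invoke the maximal inequality for nonnegative supermartingales, $\Pr(\sup_n Y_n \geq a) \leq \mathbb{E}(Y_0)/a$, taken at $a = \lambda^t$, to get $\Pr(\sup_n Y_n \geq \lambda^t) \leq \lambda^{-t} = \left(\frac{1-u}{1+u}\right)^t$. Finally, since $x \mapsto \lambda^x$ is continuous and strictly increasing, $\sup_n \lambda^{S_n} = \lambda^{\sup_n S_n}$, so the events $\{\sup_n S_n \geq t\}$ and $\{\sup_n Y_n \geq \lambda^t\}$ coincide, delivering the claimed bound.

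I expect the heart of the argument—and the one genuinely clever step—to be the choice $\lambda = \frac{1+u}{1-u}$ together with the convexity (chord) estimate: this is precisely the value that makes the linearized exponential moment equal $1$ and thereby turns $\lambda^{S_n}$ into a supermartingale. Everything downstream is then a routine application of the maximal inequality; the only additional care needed is the boundary case $u=1$ and verifying that the supremum commutes with the increasing exponential.
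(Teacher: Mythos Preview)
The paper does not prove this theorem; it is stated as a cited result of Blackwell \cite{blackwell1954} and used as a black box in the approximation section. There is therefore no ``paper's own proof'' to compare against.

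Your argument is correct and is essentially the classical proof: choosing $\lambda=\frac{1+u}{1-u}$ so that the chord bound on $[-1,1]$ for the convex map $x\mapsto\lambda^x$ combined with $\mathbb{E}(X_{n+1}\mid\mathcal{F}_n)\le -u$ yields $\mathbb{E}(\lambda^{X_{n+1}}\mid\mathcal{F}_n)\le 1$, hence $(\lambda^{S_n})$ is a nonnegative supermartingale, and the maximal inequality finishes. The algebra checks out (the linearized bound collapses to exactly $1$ at this $\lambda$), and your handling of the edge case $u=1$ and of the identity $\sup_n \lambda^{S_n}=\lambda^{\sup_n S_n}$ via monotonicity is fine. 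This is precisely the exponential-supermartingale route one finds in Blackwell's original paper and in standard treatments of such drift inequalities.
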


We denote $\max_{s^\prime \in S} u_{s^\prime} - \min_{s^\prime \in S} u_{s^\prime}$ by $\|u\|$. Observe that $|m(i+1)-m(i)|\leq \|u\| +W=:c_{\mathcal{E}}$. We can rescale the sequence $m(i)$ by defining $m^\prime(i) := (m(i)-m(0))/c_\mathcal{E}$. This ensures that $m^\prime(0)=0$, $|m^\prime(i+1)-m^\prime(i)|\leq 1$ and $\mathbb{E}(m^\prime(i+1)|m^\prime(0),\dots,m^\prime(i))\leq -t/c_\mathcal{E}$ for all $i$. In this way, we arrive at the following conclusion, putting $\lambda_{\mathcal{E}}:=\frac{1-t/c_{\mathcal{E}}}{1+t/c_{\mathcal{E}}}$.

\begin{mycor} \label{cor:boundEC} For any scheduler $\mathfrak{S}$ and any starting state $s$ in $\mathcal{E}$, we have
$\Pr^\mathfrak{S}_s (\Diamond wgt \geq  (k{+}1)\cdot c_{\mathcal{E}})  \leq \lambda_{\mathcal{E}}^{k}$.
\end{mycor}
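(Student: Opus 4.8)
The plan is to recast the rescaled, drift-corrected process $m'$ as a sequence of bounded increments with uniform negative drift, apply Blackwell's tail inequality (Theorem~\ref{thm:blackwell}), and then translate the resulting bound on partial sums back into a bound on the accumulated weight. First I would put $X_k := m'(k) - m'(k-1)$ for $k \geq 1$, so that the partial sums $S_n := \sum_{k=1}^n X_k$ equal $m'(n) - m'(0) = m'(n)$. The rescaling performed just before the corollary guarantees $|X_k| \leq 1$, and the preceding lemma, together with the identity $m'(i) = (m(i)-m(0))/c_{\mathcal{E}}$, gives $\mathbb{E}(X_{k+1} \mid \pi(0),\dots,\pi(k)) \leq -t/c_{\mathcal{E}}$. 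Since each increment $X_j$ with $j \leq k$ is measurable with respect to the history $\pi(0),\dots,\pi(k)$, the tower property turns this into the coarser condition $\mathbb{E}(X_{k+1} \mid X_1,\dots,X_k) \leq -t/c_{\mathcal{E}}$ required by Blackwell's theorem. Applying the theorem with $u := t/c_{\mathcal{E}}$ then yields $\Pr^{\mathfrak{S}}_s(\sup_n S_n \geq k) \leq \lambda_{\mathcal{E}}^{k}$.

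It remains to relate the event $\Diamond(\wgt \geq (k{+}1) c_{\mathcal{E}})$ to $\{\sup_n S_n \geq k\}$. Starting in $s$ with weight $0$ we have $m(0) = u_s$, hence $w(n) = m(n) - u_{s(n)} = u_s + c_{\mathcal{E}} S_n - u_{s(n)}$, i.e. $S_n = (w(n) - u_s + u_{s(n)})/c_{\mathcal{E}}$. If some prefix of length $n$ satisfies $w(n) \geq (k{+}1) c_{\mathcal{E}}$, then $S_n \geq (k{+}1) + (u_{s(n)} - u_s)/c_{\mathcal{E}} \geq (k{+}1) - \|u\|/c_{\mathcal{E}} \geq k$, where I use $u_{s(n)} - u_s \geq -\|u\|$ together with $\|u\| \leq \|u\| + W = c_{\mathcal{E}}$. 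Thus $\Diamond(\wgt \geq (k{+}1) c_{\mathcal{E}})$ implies $\sup_n S_n \geq k$, and composing this inclusion with the Blackwell bound gives $\Pr^{\mathfrak{S}}_s(\Diamond(\wgt \geq (k{+}1) c_{\mathcal{E}})) \leq \lambda_{\mathcal{E}}^{k}$, as claimed.

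The steps needing care are the passage from conditioning on full histories to conditioning on the increments (legitimate by the tower property, and needed because $\mathfrak{S}$ may be history-dependent) and the bookkeeping of the correction terms $u_{s(n)} - u_s$ coming from the super-potential. The latter is precisely why the threshold is stated as $(k{+}1)c_{\mathcal{E}}$ rather than $(k{+}1)\|u\|$: the extra slack of one unit of $c_{\mathcal{E}}$ absorbs the worst-case fluctuation $\|u\|$ of the super-potential across states, so that weight reaching $(k{+}1)c_{\mathcal{E}}$ still forces the drift-free sum $S_n$ up to $k$. I expect no genuinely hard estimate beyond invoking Theorem~\ref{thm:blackwell}; the main obstacle is simply to organize these correction terms so that the clean geometric bound $\lambda_{\mathcal{E}}^k$ survives the translation back from $S_n$ to $\wgt$.
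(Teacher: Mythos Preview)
Your proposal is correct and follows essentially the same route as the paper: both reduce the event $\Diamond(\wgt \geq (k{+}1)c_{\mathcal{E}})$ to $\sup_n m'(n) \geq k$ via the bound $u_{s(n)} - u_s \geq -\|u\|$ together with $\|u\| \leq c_{\mathcal{E}}$, and then invoke Theorem~\ref{thm:blackwell} with $u = t/c_{\mathcal{E}}$. Your treatment is more explicit about the passage from conditioning on full histories to conditioning on the increments via the tower property, which the paper leaves implicit.
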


\begin{proof} By Theorem \ref{thm:blackwell},
$\Pr^\mathfrak{S}_s (\Diamond wgt \geq (k+1)\cdot c_{\mathcal{E}} ) \leq  \Pr^\mathfrak{S}_s (\Diamond wgt \geq \|u\| + k\cdot c_{\mathcal{E}}) 
\leq \Pr^\mathfrak{S}_s (\exists i : m(i)-m(0)\geq k\cdot c_{\mathcal{E}}) 
 = \Pr^\mathfrak{S}_s (\sup_{i\in\mathbb{N}} m^\prime(i)\geq k) 
  \leq \left(\frac{1-t/c_{\mathcal{E}}}{1+t/c_{\mathcal{E}}}\right)^{k}. $
\end{proof}

Let  $\mathit{MEC}$ be the set of maximal end components in $\mathcal{M}$. For each $\mathcal{E}\in \mathit{MEC}$, let $\lambda_\mathcal{E}$ and $c_\mathcal{E}$ be as in Corollary \ref{cor:boundEC}. 
Define $\lambda_{\mathcal{M}} := 1- (\delta^{|S|}\cdot \prod_{\mathcal{E}\in \mathit{MEC}} (1- \lambda_\mathcal{E}))$, and $c_{\mathcal{M}}:=|S|\cdot W + \sum_{\mathcal{E}\in \mathit{MEC}} c_\mathcal{E}$. Then an accumulated weight of $c_\mathcal{M}$ cannot be reached with a probability greater than $\lambda_\mathcal{M}$ because reaching accumulated weight $c_\mathcal{M}$ would require reaching weight $c_\mathcal{E}$ in some end component $\mathcal{E}$ or reaching weight $|S|\cdot W$ in the MEC-quotient and $1-\lambda_\mathcal{M}$ is a lower bound on the probability that none of this happens (under any scheduler).

\begin{myprop} Let $\mathcal{M}$ be an MDP with $\mathit{PE}^{\sup}_{s_{\mathit{init}}}<\infty$. 
There is an upper bound $\mathit{PE}^{ub}$ for the partial expectation in $\mathcal{M}$ computable in polynomial time.
\end{myprop}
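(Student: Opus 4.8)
The plan is to turn the probabilistic bounds established just above into an explicit, polynomial-time-computable geometric bound on the partial expectation. The key quantities are $\lambda_\mathcal{M}\in(0,1)$ and $c_\mathcal{M}$ from Corollary~\ref{cor:boundEC} and the surrounding discussion, both of which are computable in polynomial time: $c_\mathcal{M}$ is a sum of $|S|\cdot W$ and the $c_\mathcal{E}=\|u\|+W$ over maximal end components, and the super-potentials $u$ come from the linear program of Proposition~\ref{prop:MP}, so all logarithmic lengths stay polynomial in $\Size(\mathcal{M})$. The central observation is that the result ``accumulated weight $c_\mathcal{M}$ is not reached with probability exceeding $\lambda_\mathcal{M}$'' can be \emph{iterated}: under any scheduler $\mathfrak{S}$, the probability of ever reaching accumulated weight at least $k\cdot c_\mathcal{M}$ is at most $\lambda_\mathcal{M}^{\,k}$. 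This follows by decomposing a path that reaches weight $k\cdot c_\mathcal{M}$ into segments, each of which must gain a further $c_\mathcal{M}$ of weight starting from the state first reached at the previous level; applying the one-level bound along each segment (using that the bound holds under \emph{any} scheduler and from \emph{any} starting state) and multiplying gives the factor $\lambda_\mathcal{M}^{\,k}$.

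With this tail bound in hand, I would bound the partial expectation directly from its definition. Only paths reaching $\goal$ contribute, and a path contributing weight in the interval $[k\cdot c_\mathcal{M},(k{+}1)\cdot c_\mathcal{M})$ occurs with probability at most $\lambda_\mathcal{M}^{\,k}$ and contributes weight at most $(k{+}1)\cdot c_\mathcal{M}$ (plus the bias $b$, which is a fixed constant absorbed into the estimate). Summing over the levels $k\ge 0$ yields
\[
\mathit{PE}^{\sup}_{s_{\mathit{init}}} \ \le\ \sum_{k=0}^{\infty} \lambda_\mathcal{M}^{\,k}\cdot (k{+}1)\cdot c_\mathcal{M}
\ =\ \frac{c_\mathcal{M}}{(1-\lambda_\mathcal{M})^{2}}\,,
\]
using the standard identity $\sum_{k\ge 0}(k{+}1)x^{k}=1/(1-x)^2$ for $|x|<1$. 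One must also account for negatively-weighted contributing paths, but since these only \emph{decrease} the partial expectation they can be safely ignored for an \emph{upper} bound; a symmetric argument applied to $-\wgt$ (or reusing the same estimate) handles the lower side if a two-sided bound $\mathit{PE}^{ub}$ is wanted. Defining $\mathit{PE}^{ub}:= c_\mathcal{M}/(1-\lambda_\mathcal{M})^{2}$ (shifted by the bias term when $b\ne 0$) gives an explicit upper bound, and since both $c_\mathcal{M}$ and $1-\lambda_\mathcal{M}=\delta^{|S|}\prod_{\mathcal{E}}(1-\lambda_\mathcal{E})$ are computable in polynomial time, so is $\mathit{PE}^{ub}$.

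The main obstacle I anticipate is making the iteration of the tail bound rigorous, since the one-level statement of Corollary~\ref{cor:boundEC} is phrased \emph{inside} a single end component, whereas the global bound combines end-component excursions with progress through the MEC-quotient. The clean way around this is to argue at the level of the whole MDP using the quantity $\lambda_\mathcal{M}$ already introduced: the accompanying text asserts that $c_\mathcal{M}$ cannot be reached with probability exceeding $\lambda_\mathcal{M}$ \emph{under any scheduler and from any state}, and it is precisely this uniform, memoryless-in-the-weight character that licenses the multiplicative iteration to $\lambda_\mathcal{M}^{\,k}$. A secondary subtlety is that the logarithmic length of $\mathit{PE}^{ub}$ must remain polynomial: since $1-\lambda_\mathcal{M}$ can be exponentially small, $\mathit{PE}^{ub}$ may be exponentially large, but its \emph{bit-length} is only polynomial, which is exactly what ``computable in polynomial time'' requires, so no contradiction arises.
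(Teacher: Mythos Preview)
Your approach is essentially the same as the paper's: iterate the one-level tail bound to get $\Pr^{\max}(\Diamond\,\wgt\ge k\cdot\text{level})\le\lambda_\mathcal{M}^{\,k}$ and sum the resulting geometric series to obtain a bound of the form $\text{level}/(1-\lambda_\mathcal{M})^2$, noting polynomial-time computability of all ingredients.

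There is one small technical slip in your iteration. When the accumulated weight first reaches $c_\mathcal{M}$, it may actually have overshot to as much as $c_\mathcal{M}+W$ (the last action can add up to $W$), so from that point one only needs to gain $c_\mathcal{M}-W$ more to reach level $2c_\mathcal{M}$, and the one-level bound does not directly apply. The paper fixes this by taking the level increment to be $c_\mathcal{M}+W$ rather than $c_\mathcal{M}$: after first reaching weight $\ge c_\mathcal{M}$ one is at weight $\le c_\mathcal{M}+W$, and reaching the next level $2(c_\mathcal{M}+W)$ then genuinely requires gaining at least $c_\mathcal{M}+W\ge c_\mathcal{M}$, so the bound $\lambda_\mathcal{M}$ applies at each stage. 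This yields $\mathit{PE}^{ub}=(c_\mathcal{M}+W)/(1-\lambda_\mathcal{M})^2$ instead of your $c_\mathcal{M}/(1-\lambda_\mathcal{M})^2$; otherwise your argument matches the paper's.
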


\begin{proof}
In any end component $\mathcal{E}$, the maximal mean payoff $-t$ and the super-potential $u$ are computable in polynomial time. Hence, $c_\mathcal{E}$ and $\lambda_\mathcal{E}$, and in turn also $c_\mathcal{M}$ and $\lambda_\mathcal{M}$ are also computable in polynomial time. 
When we reach accumulated weight $c_\mathcal{M}$ for the first time, the actual accumulated weight is at most $c_\mathcal{M}+W$. So, we conclude that $\Pr^{\max}_s(\Diamond wgt\geq k\cdot(c_\mathcal{M}+W))\leq \lambda_\mathcal{M}^k$.
The partial expectation can now be bounded by
$\sum_{k=0}^\infty (k+1)\cdot (c_\mathcal{M}+W)\cdot \lambda_\mathcal{M}^k= \frac{c_\mathcal{M}+W}{(1-\lambda_\mathcal{M})^2}$. 
\end{proof}

\begin{mycor} Let $\mathcal{M}$ be an MDP with $\mathit{CE}^{\sup}_{\mathcal{M},s_{\mathit{init}}}<\infty$. 
There is an upper bound $\mathit{CE}^{ub}$ for the conditional expectation in $\mathcal{M}$ computable in polynomial time.
\end{mycor}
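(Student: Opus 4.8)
The plan is to reduce the conditional expectation to the partial expectation and then invoke the polynomial-time upper bound $\mathit{PE}^{ub}$ from the preceding proposition, divided by a positive lower bound on the probability of reaching $\mathit{goal}$. The starting point is the identity $\mathit{CE}^{\mathfrak{S}}_{s_{\mathit{init}}} = \mathit{PE}^{\mathfrak{S}}_{s_{\mathit{init}}}/\Pr^{\mathfrak{S}}_{s_{\mathit{init}}}(\Diamond \mathit{goal})$, valid for every scheduler $\mathfrak{S}$ with $\Pr^{\mathfrak{S}}_{s_{\mathit{init}}}(\Diamond \mathit{goal})>0$. The key enabling observation is that, since $\mathit{CE}^{\sup}_{s_{\mathit{init}}}<\infty$, there are no critical schedulers and no positively weight-divergent end components, so Proposition \ref{prop:reduction} applies: I would first pass to the MDP $\mathcal{N}$ with new initial state $t_{\mathit{init}}$, which is constructed in polynomial time, satisfies $\mathit{CE}^{\sup}_{\mathcal{N},t_{\mathit{init}}}=\mathit{CE}^{\sup}_{\mathcal{M},s_{\mathit{init}}}$, and crucially guarantees $p^{\min}:=\Pr^{\min}_{\mathcal{N},t_{\mathit{init}}}(\Diamond \mathit{goal})>0$. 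This value $p^{\min}$ is computable in polynomial time by standard reachability analysis.

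Next, finiteness of $\mathit{CE}^{\sup}$ entails finiteness of $\mathit{PE}^{\sup}$, so the preceding proposition supplies a polynomial-time computable upper bound $\mathit{PE}^{ub}$ with $\mathit{PE}^{\mathfrak{S}}_{t_{\mathit{init}}}\leq \mathit{PE}^{ub}$ for all schedulers, and moreover $\mathit{PE}^{ub}=\frac{c_{\mathcal{N}}+W}{(1-\lambda_{\mathcal{N}})^2}\geq 0$. I would then set $\mathit{CE}^{ub}:=\mathit{PE}^{ub}/p^{\min}$ and verify the bound by a short case distinction on the sign of $\mathit{PE}^{\mathfrak{S}}_{t_{\mathit{init}}}$. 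If $\mathit{PE}^{\mathfrak{S}}_{t_{\mathit{init}}}<0$ then $\mathit{CE}^{\mathfrak{S}}_{t_{\mathit{init}}}<0\leq \mathit{CE}^{ub}$; if $\mathit{PE}^{\mathfrak{S}}_{t_{\mathit{init}}}\geq 0$ then, using $\Pr^{\mathfrak{S}}_{t_{\mathit{init}}}(\Diamond \mathit{goal})\geq p^{\min}>0$ together with $\mathit{PE}^{\mathfrak{S}}_{t_{\mathit{init}}}\leq \mathit{PE}^{ub}$, one gets $\mathit{CE}^{\mathfrak{S}}_{t_{\mathit{init}}}=\mathit{PE}^{\mathfrak{S}}_{t_{\mathit{init}}}/\Pr^{\mathfrak{S}}_{t_{\mathit{init}}}(\Diamond \mathit{goal})\leq \mathit{PE}^{ub}/p^{\min}=\mathit{CE}^{ub}$. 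Taking the supremum over schedulers and using $\mathit{CE}^{\sup}_{\mathcal{N},t_{\mathit{init}}}=\mathit{CE}^{\sup}_{\mathcal{M},s_{\mathit{init}}}$ yields $\mathit{CE}^{\sup}_{\mathcal{M},s_{\mathit{init}}}\leq \mathit{CE}^{ub}$.

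The delicate points here are organizational rather than technical. I must invoke Proposition \ref{prop:reduction} precisely so that the denominator $\Pr^{\mathfrak{S}}(\Diamond \mathit{goal})$ is uniformly bounded away from $0$; this is exactly what fails in the presence of critical schedulers and is what would otherwise make the ratio unbounded. I must also ensure the upper-bound proposition for $\mathit{PE}$ is applied to an MDP all of whose end components have negative maximal mean payoff, which is arranged by first applying the transformation of Proposition \ref{prop:removingzeroECs} (it preserves both $\mathit{PE}$ and the reachability probabilities, hence $\mathit{CE}$). Since each ingredient — the two model transformations, the computation of $p^{\min}$, and the computation of $\mathit{PE}^{ub}$ — runs in polynomial time, so does the computation of $\mathit{CE}^{ub}=\mathit{PE}^{ub}/p^{\min}$. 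I do not anticipate a genuine obstacle; the entire content lies in correctly chaining the earlier reductions so that the elementary ratio estimate is valid.
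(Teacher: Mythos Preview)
Your proposal is correct and follows essentially the same route as the paper: apply Proposition~\ref{prop:reduction} to obtain an MDP $\mathcal{N}$ with $\Pr^{\min}_{\mathcal{N},t_{\mathit{init}}}(\Diamond\mathit{goal})=:q>0$ and the same optimal conditional expectation, then set $\mathit{CE}^{ub}:=\mathit{PE}^{ub}/q$. Your added care about the sign of $\mathit{PE}^{\mathfrak{S}}$ and the explicit invocation of Proposition~\ref{prop:removingzeroECs} before applying the $\mathit{PE}^{ub}$ bound are useful points of rigor that the paper's terse proof leaves implicit.
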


\begin{proof}
By Proposition \ref{prop:reduction}, we can construct an MDP $\mathcal{N}$ in which $\mathit{goal}$ is reached with probability $q>0$ in polynomial time with $\mathit{CE}^{\sup}_{\mathcal{M},s_{\mathit{init}}}=\mathit{CE}^{\sup}_{\mathcal{N},s_{\mathit{init}}}$.  Now, $\mathit{CE}^{ub}:= \mathit{PE}^{ub}/q$ is an upper bound for the conditional expectation in $\mathcal{M}$. 
\end{proof}


\tudparagraph{1ex}{Approximating optimal partial expectations.}
The idea for the approximation is to assume that the partial expectation is $\mathit{PE}^{\mathfrak{Max}}_{s_{\mathit{init}}}+w\cdot p^{\max}_s$ if a high weight $w$ has been accumulated in state $s$. Similarly, for small weights $w^\prime$, we use the value $\mathit{PE}^{\mathfrak{Min}}_{s_{\mathit{init}}}+w\cdot p^{\min}_s$.
We will first provide a lower ``saturation point" making sure that only actions minimizing the probability to reach the goal are used by an optimal scheduler as soon as the accumulated weight drops below this saturation point. Proofs to this section can be found in Appendix \ref{app:approximationPE}

\begin{myprop} Let $\mathcal{M}$ be an MDP with $\mathit{PE}^{\sup}_{s_{\mathit{init}}}<\infty$. 
Let $s\in S$ and let $\mathfrak{q}_s:=\frac{\mathit{PE}^{ub}-\mathit{PE}^{\mathfrak{Min}}_s}{p_s^{\min} - \min\limits_{\alpha\not\in Act^{\min}(s)} p_{s,\alpha}^{\min}}$. Then any weight-based deterministic scheduler $\mathfrak{S}$ maximizing the partial expectation in $\mathcal{M}$ satisfies $\mathfrak{S}(s,w)\in Act^{\min}(s)$ if $w\leq \mathfrak{q}_s$.
\end{myprop}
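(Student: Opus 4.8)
The plan is to run a one-step exchange argument comparing, at a configuration $(s,w)$ with $w\leq\mathfrak{q}_s$, the behaviour of the memoryless scheduler $\mathfrak{Min}$ against any scheduler whose first choice in $s$ is an action $\alpha\notin Act^{\min}(s)$. For a weight-based scheduler $\mathfrak{U}$ let $V^{\mathfrak{U}}(s,w)$ denote the partial expectation obtained from $s$ when the already-accumulated weight is $w$. Splitting a run according to whether it reaches $\goal$ gives $V^{\mathfrak{U}}(s,w)=A+w\cdot B$, where $A=\mathbb{E}^{\mathfrak{U}}_s(\wgt\cdot\mathds{1}_{\Diamond\goal})$ is the bias-$0$ partial expectation of the continuation and $B=\Pr^{\mathfrak{U}}_s(\Diamond\goal)$ its goal probability. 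Because $\mathfrak{Min}$ is memoryless, $V^{\mathfrak{Min}}(s,w)=\mathit{PE}^{\mathfrak{Min}}_s+w\cdot p_s^{\min}$ is exactly affine in $w$; this is the reference value against which everything is measured.

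First I would bound any competitor from above. If $\mathfrak{S}(s,w)=\alpha\notin Act^{\min}(s)$, then the continuation reaches the goal with probability $B\geq p_{s,\alpha}^{\min}\geq\min_{\beta\notin Act^{\min}(s)}p_{s,\beta}^{\min}$, while its bias-$0$ partial expectation obeys $A\leq\mathit{PE}^{ub}$ by the uniform upper bound of the previous proposition. Since $w\leq\mathfrak{q}_s\leq 0$ (the numerator of $\mathfrak{q}_s$ is non-negative and its denominator is negative), a larger goal probability only decreases $w\cdot B$, so the two estimates combine to $V^{\mathfrak{S}}(s,w)\leq\mathit{PE}^{ub}+w\cdot\min_{\beta\notin Act^{\min}(s)}p_{s,\beta}^{\min}$. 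Subtracting this from $V^{\mathfrak{Min}}(s,w)$ and using the definition of $\mathfrak{q}_s$ to rewrite the constant, I obtain
$$V^{\mathfrak{Min}}(s,w)-V^{\mathfrak{S}}(s,w)\ \geq\ \bigl(p_s^{\min}-\textstyle\min_{\beta\notin Act^{\min}(s)}p_{s,\beta}^{\min}\bigr)\,(w-\mathfrak{q}_s),$$
whose right-hand side is non-negative precisely when $w\leq\mathfrak{q}_s$, the leading factor being negative. Hence below the threshold $\mathfrak{Min}$ is at least as good from $(s,w)$ as any scheduler starting with an action outside $Act^{\min}(s)$.

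To upgrade this local inequality into the stated property, I would argue by contradiction: given a maximizing $\mathfrak{S}$ with $\mathfrak{S}(s,w)=\alpha\notin Act^{\min}(s)$ at some $(s,w)$ reachable with positive probability and $w\leq\mathfrak{q}_s$, form the scheduler that follows $\mathfrak{S}$ until the first visit to $(s,w)$ and plays $\mathfrak{Min}$ afterwards. Conditioning on this first visit, whose probability $f>0$ is unchanged, the partial expectation from $s_{\mathit{init}}$ changes only by $f\cdot(V^{\mathfrak{Min}}(s,w)-V^{\mathfrak{S}}(s,w))\geq 0$. As $\mathfrak{S}$ attains $\mathit{PE}^{\sup}_{s_{\mathit{init}}}$, which is the supremum over \emph{all} schedulers, this difference must vanish; but the inequality above is strict for $w<\mathfrak{q}_s$, contradicting optimality and forcing $\mathfrak{S}(s,w)\in Act^{\min}(s)$.

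The main obstacle is the non-affinity of $V^{\mathfrak{S}}(\cdot,\cdot)$ in the weight coordinate: unlike $\mathfrak{Min}$, a general weight-based scheduler changes its decisions with $w$, so its continuation value is not linear and can only be controlled through the two monotone quantities $A$ and $B$ together with the sign of $w$; getting every inequality to point the right way—in particular using $w\leq 0$ to pass from $w\cdot B$ to $w\cdot\min_{\beta}p_{s,\beta}^{\min}$—is the delicate part. The remaining subtlety is the integer boundary $w=\mathfrak{q}_s$, which arises only when $\mathfrak{q}_s\in\mathbb{Z}$ and where the displayed bound is merely $\geq 0$; here one exploits that the crude estimate $\mathit{PE}^{ub}$ strictly overshoots the bias-$0$ partial expectation of any genuine continuation, so $A<\mathit{PE}^{ub}$ and the improvement is again strict.
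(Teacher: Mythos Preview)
Your argument is correct and is precisely the one-step exchange against $\mathfrak{Min}$ that the paper intends: its own proof consists of the single sentence ``analogously to the proof of Proposition~\ref{prop:SaturationPoint1}'', and your chain $V^{\mathfrak{S}}(s,w)=A+wB\leq \mathit{PE}^{ub}+w\cdot\min_{\beta\notin Act^{\min}(s)}p^{\min}_{s,\beta}$ (using $w\leq\mathfrak{q}_s\leq 0$) followed by comparison with $V^{\mathfrak{Min}}(s,w)=\mathit{PE}^{\mathfrak{Min}}_s+w\,p_s^{\min}$ is the exact mirror of the inequalities in that proof, with $\mathit{PE}^{ub}$ playing the role of $F$ and $\mathfrak{Min}$ that of $\mathfrak{Max}$. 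Your explicit treatment of the boundary $w=\mathfrak{q}_s$ via the strictness $A<\mathit{PE}^{ub}$ and the remark that the claim is only meaningful at reachable pairs $(s,w)$ are refinements the paper leaves implicit.
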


Let $\mathfrak{q}:=\min_{s\in S} \mathfrak{q}_s$ and let $D:=\mathit{PE}^{ub}-\min \{ \mathit{PE}^\mathfrak{Max}_s , \mathit{PE}^\mathfrak{Min}_s |s\in S\}$.
Given $\epsilon>0$, we define $R^+_\epsilon := (c_\mathcal{M}+W)\cdot \left\lceil \frac{\log(2D)+\log(1/\epsilon)}{\log(1/\lambda_\mathcal{M})}\right\rceil$ and $R^-_\epsilon:= \mathfrak{q}-R^+_\epsilon$.

\begin{mythm}
There is a weight-based deterministic scheduler $\mathfrak{S}$ such that the scheduler $\mathfrak{T}$ defined by
\[\mathfrak{T} (\pi) = 
\begin{cases}
\mathfrak{S}(\pi) & \text{if any prefix }\pi^\prime\text{ of }\pi\text{ satisfies }R^-_\epsilon\leq wgt(\pi^\prime)\leq R^+_\epsilon,\\
\mathfrak{Max}(\pi) & \text{if the shortest prefix }\pi^\prime\text{ of }\pi\text{ with } wgt(\pi^\prime)\not\in [R^-_\epsilon,R^+_\epsilon] \\
& \text{satisfies }wgt(\pi^\prime)>R^+_\epsilon, \\
\mathfrak{Min}(\pi) & \text{otherwise,}
\end{cases}\]
satisfies $\mathit{PE}^\mathfrak{T}_{s_{\mathit{init}}}\geq \mathit{PE}^{\sup}_{s_{\mathit{init}}} - \epsilon$. 
\end{mythm}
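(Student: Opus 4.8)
The plan is to take for $\mathfrak{S}$ an optimal weight-based deterministic scheduler $\mathfrak{S}^\ast$ (which exists since $\mathit{PE}^{\sup}_{s_{\mathit{init}}}<\infty$, by the existence theorem for weight-based deterministic schedulers in Section~\ref{sec:optimalWD}), and then to bound the loss caused by replacing $\mathfrak{S}^\ast$ by $\mathfrak{Max}$ resp.\ $\mathfrak{Min}$ once the accumulated weight leaves the window $[R^-_\epsilon,R^+_\epsilon]$. For a scheduler $\mathfrak{U}$ and a state-weight pair $(s,w)$, write $\mathrm{res}^{\mathfrak{U}}(s,w):=w\cdot\Pr^{\mathfrak{U}}_s(\Diamond\mathit{goal})+\mathit{PE}^{\mathfrak{U}}_s$ for the contribution of the continuation from $(s,w)$ to the partial expectation. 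Since $\mathfrak{T}$ and $\mathfrak{S}^\ast$ make the same decisions on every history all of whose prefixes stay in the window, they induce the same distribution on the first pair $(s,w)$ at which the weight leaves the window (and never leaving has probability $0$, as negative maximal mean payoff in all end components drives the weight to $-\infty$ almost surely, cf.\ Corollary~\ref{cor:boundEC}). Conditioning on this first exit, and noting that histories staying inside contribute equally to both schedulers, I obtain
\[\mathit{PE}^{\mathfrak{S}^\ast}_{s_{\mathit{init}}}-\mathit{PE}^{\mathfrak{T}}_{s_{\mathit{init}}}=\sum_{\text{up exits}}\Pr\cdot\big(\mathrm{res}^{\mathfrak{S}^\ast}-\mathrm{res}^{\mathfrak{Max}}\big)+\sum_{\text{down exits}}\Pr\cdot\big(\mathrm{res}^{\mathfrak{S}^\ast}-\mathrm{res}^{\mathfrak{Min}}\big),\]
where the sums range over first-exit points and $\Pr$ denotes the common first-exit probabilities.

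At an up-exit we have $w>R^+_\epsilon>0$, so $\Pr_s(\Diamond\mathit{goal})\le p^{\max}_s$ gives $\mathrm{res}^{\mathfrak{S}^\ast}(s,w)\le w\,p^{\max}_s+\mathit{PE}^{ub}$, whereas $\mathrm{res}^{\mathfrak{Max}}(s,w)=w\,p^{\max}_s+\mathit{PE}^{\mathfrak{Max}}_s$; hence the bracket is at most $\mathit{PE}^{ub}-\mathit{PE}^{\mathfrak{Max}}_s\le D$. As the first-exit probabilities agree for $\mathfrak{S}^\ast$ and $\mathfrak{T}$, the up-sum is at most $D\cdot\Pr^{\mathfrak{S}^\ast}_{s_{\mathit{init}}}(\Diamond(\mathit{wgt}\ge R^+_\epsilon))$. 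By the tail estimate $\Pr^{\max}(\Diamond(\mathit{wgt}\ge k(c_{\mathcal{M}}+W)))\le\lambda_{\mathcal{M}}^{k}$ behind Corollary~\ref{cor:boundEC} and the choice of $R^+_\epsilon$, this probability is at most $\epsilon/(2D)$, so the up-sum is at most $\epsilon/2$. This is precisely where the genuine approximation error enters: lacking an upper saturation point, I do not claim $\mathfrak{Max}$ is optimal above $R^+_\epsilon$, only that such weights are reached with tiny probability.

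At a down-exit we have $w<R^-_\epsilon=\mathfrak{q}-R^+_\epsilon\le\mathfrak{q}_s$, so the lower-saturation proposition forces $\mathfrak{S}^\ast$ to play $\Act^{\min}$-actions at every pair of weight below $\mathfrak{q}$. I would compare the continuation of $\mathfrak{S}^\ast$ from $(s,w)$ with the auxiliary scheduler $\mathfrak{V}$ that follows it until the weight first reaches $\mathfrak{q}$ and then switches to $\mathfrak{Min}$. Since $\mathfrak{Min}$ uses only $\Act^{\min}$-actions and, on the event that $\mathfrak{q}$ is never reached, so does $\mathfrak{S}^\ast$, the scheduler $\mathfrak{V}$ uses $\Act^{\min}$-actions throughout; hence it reaches the goal with probability $p^{\min}_s$, and by optimality of $\mathfrak{Min}$ among all minimal-probability schedulers, $\mathrm{res}^{\mathfrak{V}}(s,w)\le\mathrm{res}^{\mathfrak{Min}}(s,w)$. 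The continuations of $\mathfrak{S}^\ast$ and $\mathfrak{V}$ differ only on the event $G$ that the weight climbs from below $R^-_\epsilon$ up to $\mathfrak{q}$, which needs a relative increase exceeding $R^+_\epsilon$ and so has probability at most $\epsilon/(2D)$ by the same tail bound, while the residual gap on $G$ stays bounded (the overshoot of $\mathfrak{q}$ is at most $W$). Thus $\mathrm{res}^{\mathfrak{S}^\ast}(s,w)-\mathrm{res}^{\mathfrak{Min}}(s,w)\le\epsilon/2$, and since the down-exit probabilities sum to at most $1$ the down-sum is at most $\epsilon/2$. Adding both contributions yields $\mathit{PE}^{\mathfrak{S}^\ast}_{s_{\mathit{init}}}-\mathit{PE}^{\mathfrak{T}}_{s_{\mathit{init}}}\le\epsilon$, which is the assertion.

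I expect the down-exit estimate to be the main obstacle. The lower-saturation proposition only controls the optimal action \emph{pointwise}, not the whole continuation, so one must rule out that $\mathfrak{S}^\ast$ outperforms $\mathfrak{Min}$ by steering the weight back above $\mathfrak{q}$; the buffer of width $R^+_\epsilon$ between $R^-_\epsilon$ and $\mathfrak{q}$, together with the supermartingale tail estimate, is exactly what makes this climb improbable. The delicate point is to keep the residual gap on the climb event bounded by a quantity of order $D$ (using that $\mathfrak{q}$ is overshot by at most $W$ and that, once the $w$-offset is matched, residuals differ by at most $\mathit{PE}^{ub}-\min\{\mathit{PE}^{\mathfrak{Max}}_s,\mathit{PE}^{\mathfrak{Min}}_s\}=D$), so that the factor $\epsilon/(2D)$ coming from the tail bound can absorb it; verifying that the constant hidden in $R^+_\epsilon$ suffices for \emph{both} the up- and the down-direction is the routine but careful bookkeeping that remains.
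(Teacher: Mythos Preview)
Your proposal is correct and follows essentially the same approach as the paper's proof: both take $\mathfrak{S}$ to be an optimal weight-based deterministic scheduler, decompose the loss into up-exit and down-exit contributions, bound the up-sum by $D\cdot\Pr(\Diamond\,\mathit{wgt}\ge R^+_\epsilon)\le\epsilon/2$, and for the down-sum introduce exactly your auxiliary scheduler $\mathfrak{V}$ (the paper calls it $\mathfrak{S}'$) that follows the optimal scheduler until the weight returns to $\mathfrak{q}$ and then switches to $\mathfrak{Min}$, using optimality of $\mathfrak{Min}$ among $\Act^{\min}$-schedulers together with the tail bound on the climb of height $R^+_\epsilon$. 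One small inaccuracy: the probability of never leaving the window need not be zero (the path may reach $\mathit{goal}$ or $\mathit{fail}$ inside the window), but as you implicitly use, those runs contribute identically to $\mathfrak{S}^\ast$ and $\mathfrak{T}$, so this does not affect the argument.
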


This result now allows us to compute an $\epsilon$-approximation and an $\epsilon$-optimal scheduler with finite memory by linear programming, 
similar to the case of non-negative weights, 
in a linear program with $R^+_\epsilon + R^-_\epsilon$ many variables and $|Act|$-times as many inequalities. 

\begin{mycor}
 $\mathit{PE}^{\sup}_{s_{\mathit{init}}}$ can be approximated up to an absolute error of $\epsilon$ in time exponential in the size of $\mathcal{M}$ and polynomial in $\log(1/\epsilon)$.
\end{mycor}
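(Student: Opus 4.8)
The plan is to turn the $\epsilon$-optimality guarantee of the preceding theorem into an effective computation by a single linear program. The theorem exhibits an $\epsilon$-optimal scheduler $\mathfrak{T}$ that acts as an arbitrary weight-based deterministic scheduler $\mathfrak{S}$ only while the accumulated weight stays in the window $[R^-_\epsilon,R^+_\epsilon]$, and that switches to the memoryless schedulers $\mathfrak{Max}$ above the window and $\mathfrak{Min}$ below it. Since $\mathfrak{Max}$ and $\mathfrak{Min}$ together with their partial expectations $\mathit{PE}^{\mathfrak{Max}}_s$ and $\mathit{PE}^{\mathfrak{Min}}_s$ are computable in polynomial time (Proposition~\ref{Scheduler Max}), the only remaining freedom lies in the finitely many state-weight pairs inside the window, which I would capture by a linear program analogous to the one used in the non-negative case.

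Concretely, I would introduce a variable $x_{s,w}$ for every state $s\in S^\prime:=S\setminus\{\goal,\fail\}$ and every integer weight $w$ with $R^-_\epsilon\leq w\leq R^+_\epsilon$, intended to equal the optimal partial expectation reachable from the configuration $(s,w)$ among schedulers of the form of $\mathfrak{T}$. For weights leaving the window I substitute the closed forms of the fixed schedulers, namely $w\cdot p^{\max}_s+\mathit{PE}^{\mathfrak{Max}}_s$ for $w>R^+_\epsilon$ and $w\cdot p^{\min}_s+\mathit{PE}^{\mathfrak{Min}}_s$ for $w<R^-_\epsilon$. I then impose, for each such $s$ and $w$ and each $\alpha\in Act(s)$, the Bellman inequality
\[
x_{s,w}\ \geq\ P(s,\alpha,\goal)\cdot\bigl(w+wgt(s,\alpha)\bigr)+\sum_{t\in S^\prime}P(s,\alpha,t)\cdot x_{t,\,w+wgt(s,\alpha)},
\]
with $x_{t,w'}$ read off from the boundary values whenever $w'$ lies outside the window, and I minimize $\sum_{s,w}x_{s,w}$. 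The least solution equals the optimal partial expectation over all schedulers of the $\mathfrak{T}$-form, and I would return $x_{s_{\mathit{init}},0}$ as the approximation. Correctness has two directions: every feasible scheduler of $\mathfrak{T}$-form is a genuine scheduler, so $x_{s_{\mathit{init}},0}\leq\mathit{PE}^{\sup}_{s_{\mathit{init}}}$; and since the program optimizes over all such schedulers while the theorem supplies one with partial expectation at least $\mathit{PE}^{\sup}_{s_{\mathit{init}}}-\epsilon$, also $x_{s_{\mathit{init}},0}\geq\mathit{PE}^{\sup}_{s_{\mathit{init}}}-\epsilon$, whence $x_{s_{\mathit{init}},0}$ lies within $\epsilon$ of the true optimum.

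For the running time I would bound the window. The factor $c_\mathcal{M}+W$ is at most exponential in $\Size(\mathcal{M})$, while in the ceiling defining $R^+_\epsilon$ the numerator $\log(2D)+\log(1/\epsilon)$ is polynomial in $\Size(\mathcal{M})$ plus $\log(1/\epsilon)$, and the reciprocal of the denominator $\log(1/\lambda_\mathcal{M})$ is at most exponential in $\Size(\mathcal{M})$, since $1-\lambda_\mathcal{M}=\delta^{|S|}\prod_{\mathcal{E}}(1-\lambda_\mathcal{E})$ is inverse-exponential. Thus $R^+_\epsilon$, and likewise $R^-_\epsilon=\mathfrak{q}-R^+_\epsilon$ with $\mathfrak{q}$ at most exponential, have magnitude exponential in $\Size(\mathcal{M})$ and polynomial in $\log(1/\epsilon)$. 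The program therefore has $|S^\prime|\cdot(R^+_\epsilon-R^-_\epsilon+1)$ variables and $|Act|$ times as many inequalities, all with coefficients of polynomial bit-length, so solving it takes time exponential in $\Size(\mathcal{M})$ and polynomial in $\log(1/\epsilon)$, as claimed. I expect the genuine obstacle to be exactly this bookkeeping: verifying that $1/\log(1/\lambda_\mathcal{M})$, $\mathfrak{q}$ and $c_\mathcal{M}+W$ are only exponential (not worse) in $\Size(\mathcal{M})$ and that $\epsilon$ enters solely and polynomially through $\log(1/\epsilon)$; the LP construction and its correctness are routine once the preceding theorem is in hand.
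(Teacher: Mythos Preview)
Your proposal is correct and follows essentially the same approach as the paper: both set up a linear program with one variable per state-weight pair inside the window $[R^-_\epsilon,R^+_\epsilon]$ (the paper extends the range to $[R^-_\epsilon-W,R^+_\epsilon+W]$ to absorb overshoot, fixing those extra variables by equality constraints rather than substituting them as you do), impose the Bellman inequalities with boundary values supplied by $\mathfrak{Max}$ and $\mathfrak{Min}$, and read off the approximation as $x_{s_{\mathit{init}},0}$. Your running-time bookkeeping is in fact more explicit than the paper's; the only point the paper adds is a one-line remark that unique solvability of the LP follows, as in Proposition~\ref{prop:linearprogram}, from the absence of end components in the product MDP over $S\times\{R^-_\epsilon-W,\dots,R^+_\epsilon+W\}$, which in turn uses that all end components of $\mathcal{M}$ have negative maximal mean payoff.
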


If the logarithmic length of $\theta\in\mathbb{Q}$ is polynomial in the size of $\mathcal{M}$, we can also approximate $\mathit{PE}^{\sup}_{s_{\mathit{init}}}[\theta]$ up to an absolute error of $\epsilon$ in time exponential in the size of $\mathcal{M}$ and polynomial in $\log(1/\epsilon)$: We can  add a new initial state $s$ with a transition to $s_{\mathit{init}}$ with weight $\theta$ and approximate $\mathit{PE}^{\sup}_{s}$ in the new MDP.


\tudparagraph{1ex}{Transfer to conditional expectations.}
Let $\mathcal{M}$ be an MDP with $\mathit{CE}^{sup}_{s_{\mathit{init}}}<\infty$ and $\epsilon>0$. By Proposition \ref{prop:reduction}, we can assume that $\Pr^{\min}_{\mathcal{M},s_{\mathit{init}}}(\Diamond \mathit{goal})=:p$ is positive. Clearly, $\mathit{CE}^{sup}_{s_{\mathit{init}}}\in [\mathit{CE}^{\mathfrak{Max}}_{s_{\mathit{init}}},\mathit{CE}^{ub}]$. We perform a binary search to approximate $\mathit{CE}^{sup}_{s_{\mathit{init}}}$:
We put $A_0:=\mathit{CE}^{\mathfrak{Max}}_{s_{\mathit{init}}}$ and $B_0:=\mathit{CE}^{ub}$. Given $A_i$ and $B_i$, let $\theta_i := (A_i+B_i)/2$. Then, we approximate $\mathit{PE}^{\sup}_{s_{\mathit{init}}}[-\theta_i]$ up to an absolute error of $p\cdot \epsilon$. 
 Let $E_i$ be the value of this approximation. If $E_i \in [-2p\cdot \epsilon, 2p\cdot \epsilon]$, terminate and return $\theta_i$ as the approximation for $\mathit{CE}^{\sup}_{s_{\mathit{init}}}$. If $E_i<-2p\cdot \epsilon$, put $A_{i+1}:=A_i$ and $B_{i+1}:=\theta_i$, and repeat. If $E_i>2p\cdot\epsilon$, put $A_{i+1}:=\theta_i$ and $B_{i+1}:= B_i$, and repeat.

\begin{myprop}
The procedure terminates after at most $\lceil \log((A_0-B_0)/(p\cdot\epsilon))\rceil$ iterations and returns an $3\epsilon$-approximation of $\mathit{CE}^{\sup}_{s_{\mathit{init}}}$ in time  exponential in the size of $\mathcal{M}$ and polynomial in $\log(1/\epsilon)$.
\end{myprop}
\noindent The proof can be found in Appendix \ref{app:approximationCE}. This finishes the proof of Theorem \ref{thm:approximation}.


\section{Conclusion} \label{sec:conclusion}

Compared to the setting of non-negative weights, the optimization of partial and conditional expectations faces substantial new difficulties in the setting of integer weights. The optimal values can be irrational showing that the linear programming approaches from the setting of non-negative weights cannot be applied for the computation of optimal values. We showed that this approach can nevertheless be adapted for approximation algorithms.
Further, we were able to show that  there are  optimal weight-based deterministic schedulers. These schedulers, however, can require infinite memory and it remains open whether we can further restrict the class of schedulers necessary for the optimization. In examples, we have seen that optimal schedulers can switch periodically between actions they choose for increasing values of accumulated weight. 
Further insights on the behavior of optimal schedulers would be helpful to address threshold problems (``Is $\mathit{PE}^{\sup}_{s_{\mathit{init}}} \geq \theta$?'').


\bibliographystyle{abbrv}
\bibliography{references}


\chapter*{Appendix}
\appendix
\section{Partial and Conditional Expectations in Markov Decision Processes}


In this section, we give  proofs to the claims of Section \ref{sec:PE}.

\subsection{Finiteness and Preprocessing} \label{app:finiteness}


The finiteness of maximal partial expectations depends on the existence of positively weight-divergent end components.
Using the construction from \cite{lics2018} which removes end components only containing $0$-weight cycles, we can show the following:
\removingzeroECs*

\begin{proof}
In an end component which has non-negative maximal expected mean payoff and which is not positively weight-divergent, all cycles have weight $0$ (see \cite{lics2018}). 
We will use the so-called spider construction from \cite{lics2018} with a small modification to remove such end components: 
So, let $\cM$ be an MDP and let $\cE=(E,\ActEC)$ be an end component of $\cM$ in which all cycles have weight $0$.
The spider construction successively flattens sub-end components which contain exactly one action per state:
So, let $E^\prime \subseteq E$ and for each $s\in E^\prime$, let $\alpha_s\in \ActEC(s)$ 
such that $\cE^\prime = \{(s,\alpha_s) | s\in E^\prime\}$ is an end component.
We pick a state $s_0\in E^\prime$. As all cycles in $\cE^\prime$ have weight $0$, 
there is a unique weight $w_s$ for each $s\in E^\prime$ such that all paths from $s$ to $s_0$ in $\cE^\prime$ have weight $w_s$.
The spider construction now does the following:
\begin{enumerate}
\item Disable the action $\alpha_s$ in $s$ for all $s\in E^\prime$.
\item Enable a new action $\tau$ in $s$ with $P(s,\tau,s_0)=1$ and $\wgt(s,\tau)=w_s$ for all $s\not = s_0$ in $E^\prime$.
\item For each state $s\not = s_0$ in $E^\prime$ and each action $\beta \in \Act_\cM (s)\setminus \{\alpha_s\}$, disable $\beta$ in $s$ and 
instead enable a new action $\beta_s$ in $s_0$ with $P(s_0,\beta_s, t) = P(s,\beta,t)$ for all states $t$ in $\cM$ and $\wgt(s_0,\beta_s) = \wgt(s,\beta) - w_s $.
\end{enumerate}
We extend the construction by adding an absorbing state $\fail$ and additionally enabling a new action $\tau$ in $s_0$ with $P(s_0,\tau,\fail)=1$ and $\wgt(s_0,\tau)=0$.
We call the resulting MDP after one application of the construction $\cN^\prime$. In \cite{lics2018}, 
it is shown that polynomially many applications (in polynomial time in total)
 of the construction result in an MDP $\cN$ satisfying the first requirement in the statement.

Hence, it is sufficient to show the correspondence between schedulers claimed 
in the second requirement for the MDPs $\cM$ and $\cN^\prime$.
Given a scheduler $\sched$ for $\cM$, we construct the following scheduler $\sched^\prime$ for $\cN^\prime$: 
Whenever a run in $\cM$ under $\sched$ reaches $\cE^\prime$, 
let $p_\fail$ be the probability that $\sched$ will never leave $\cE^\prime$ again. 
Further, for each state $s$ in $\cE^\prime$ and each action $\beta\in Act(s)$ not belonging to $\cE$, 
let $p_{s,\beta}$ be the probability that $\sched$ leaves $\cE^\prime$ from $s$ via $\beta$. 
This behavior can now be mimicked in $\cN^\prime$: $\sched^\prime$ goes to $\fail$ 
with probability $p_\fail$ and takes the action $\beta_s$ in $s_0$ with probability $p_{s,\beta}$. 
It is straightforward to check that this does not affect the partial expectation or the probability to reach $\goal$.

Conversely, a scheduler $\sched^\prime$ for $\cN^\prime$ can easily be transformed to a scheduler $\sched$ for $\cM$: Whenever $\sched^\prime$ moves to $\fail$ from $s_0$, 
the scheduler $\sched$ stays in $\cE^\prime$ forever. If $\sched^\prime$ chooses $\beta_s$ in $s_0$, $\sched$ moves through $\cE^\prime$ until it reaches $s$. This happens almost surely. Then, $\sched$ chooses $\beta$. Again, it is easy to check that the partial expectation and the probability to reach $\goal$ are preserved.
\end{proof}

\finitenessPE*

\begin{proof}
Suppose there is a positively weight-divergent end component $\mathcal{E}$. 
Since $\mathcal{E}$ is reachable and we can accumulated arbitrarily high weights inside $\mathcal{E}$ with probability $1$, 
we can easily construct a sequence of schedulers whose partial expectation diverges to $+\infty$ by letting the schedulers stay 
in a positively weight divergent end component until an arbitrarily high weight has been accumulated, before they try to reach the goal.

Now, suppose that there are no positively weight-divergent end components. 
We can assume that all end components have negative maximal mean payoff (see Proposition \ref{prop:removingzeroECs}).  
We claim that there is a natural number $W$ such that $\max_s \Pr^{\max}_{s} (\Diamond wgt\geq W):=p<1$.  
Let $M:=\max_{s,\alpha} |wgt(s,\alpha)|$. Then, the claim follows as follows: 
For all $n\in\mathbb{N}$ we get that $\max_{s\in S} \Pr^{\max}_{\mathcal{M},s} (\Diamond wgt\geq n\cdot W+ M)\leq p^n$. 
Then the partial expectation of any scheduler is bounded by $\sum_{n=0}^\infty (n+1)\cdot W \cdot p^n =  \frac{W}{(1-p)^2}$.

For each end component $\mathcal{E}$, there is a number $W_\mathcal{E}$ and a probability $p_\mathcal{E}$ 
such that in $\mathcal{E}$ we have $\max_{s\in \mathcal{E}} \Pr^{\max}_{\mathcal{E},s} (\Diamond wgt\geq W_\mathcal{E}):=p_\mathcal{E}<1$. 
On the other hand, in the MEC-quotient of $\mathcal{M}$  the probability to reach $\mathit{goal}$ or $fail$ in $|S|$ steps is at least $\delta^{|S|}$ 
where $\delta$ is the minimal transition probability. 
Then we can conclude that 
\[\max_s \Pr^{\max}_{\mathit{MEC}(\mathcal{M}),s}(\Diamond wgt>M\cdot |S|)\leq 1-\delta ^{|S|}.\]
All in all, it is impossible for a scheduler to almost surely reach an accumulated weight above $ M \cdot |S|+\sum_{\mathcal{E}\text{ is an end component}} W_\mathcal{E}$.
\end{proof}

Recall that we define a \emph{critical scheduler} to be a scheduler $\mathfrak{S}$, for which there is a path containing a positive cycle, and for which $\Pr^{\mathfrak{S}}_{s_{\mathit{init}}}(\Diamond \mathit{goal})=0$

\begin{myprop}
Let $\mathcal{M}$ be an MDP. The optimal conditional expectation $\mathit{CE}^{\sup}_{s_{\mathit{init}}}=\infty$ if  $\mathit{PE}^{\sup}_{s_{\mathit{init}}}=\infty$ or if there is a critical scheduler $\mathfrak{S}$.
\end{myprop}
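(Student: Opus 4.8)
The plan is to treat the two sufficient conditions separately, using the identity $\mathit{CE}^{\mathfrak{S}}_{\sinit} = \mathit{PE}^{\mathfrak{S}}_{\sinit}/\Pr^{\mathfrak{S}}_{\sinit}(\Diamond \goal)$ recorded above. \textbf{Case 1} ($\mathit{PE}^{\sup}_{\sinit}=\infty$) should be immediate: choose schedulers $\mathfrak{S}_n$ with $\mathit{PE}^{\mathfrak{S}_n}_{\sinit}\to\infty$. For all large $n$ we have $\mathit{PE}^{\mathfrak{S}_n}_{\sinit}>0$, and since every path missing $\goal$ contributes $0$ to the partial expectation, a positive partial expectation forces $\Pr^{\mathfrak{S}_n}_{\sinit}(\Diamond\goal)>0$. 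Thus $\mathit{CE}^{\mathfrak{S}_n}_{\sinit}$ is defined, and as the denominator lies in $(0,1]$ we get $\mathit{CE}^{\mathfrak{S}_n}_{\sinit}=\mathit{PE}^{\mathfrak{S}_n}_{\sinit}/\Pr^{\mathfrak{S}_n}_{\sinit}(\Diamond\goal)\geq \mathit{PE}^{\mathfrak{S}_n}_{\sinit}\to\infty$, whence $\mathit{CE}^{\sup}_{\sinit}=\infty$.

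\textbf{Case 2} (a critical scheduler $\mathfrak{S}$ exists) is the substantial part. The idea is to pump the positive cycle to accumulate arbitrarily large weight while still reaching $\goal$ only at high weight, so that $\mathbb{E}^{\mathfrak{T}_k}_{\sinit}[\wgt\mid\Diamond\goal]$ grows without bound. I would first establish the structural fact that the positive cycle lies inside the goal-avoidable region $Z:=\{s\in S: p_s^{\min}=0\}$: any state on a $\mathfrak{S}$-path is reached with positive probability, and since $\Pr^{\mathfrak{S}}_{\sinit}(\Diamond\goal)=0$ its $\mathfrak{S}$-continuation avoids $\goal$ almost surely, so $p_s^{\min}=0$. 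Moreover every action $\mathfrak{S}$ uses is \emph{safe}, i.e. all its successors lie in $Z$; otherwise a successor with $p^{\min}>0$, reached with positive probability, would contribute positive goal-probability, contradicting criticality. Consequently there is a safe path $\rho$ from $\sinit$ to a state $s^*$ on the positive cycle $\cycle$ (of weight $c>0$), with $\rho$ and $\cycle$ lying entirely in $Z$.

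Then I would define $\mathfrak{T}_k$ to follow $\rho$ and afterwards $\cycle$ exactly $k$ times using these safe actions. Whenever the run branches off the intended path during this pumping phase it remains in $Z$ (safe actions keep all successors in $Z$), and there $\mathfrak{T}_k$ continues with safe actions forever, so such runs \emph{never} reach $\goal$. Hence the only goal-reaching runs are those completing all $k$ loops, which arrive at $s^*$ with weight $\wgt(\rho)+kc$; from there $\mathfrak{T}_k$ switches to an exit strategy that repeatedly follows a shortest path to $\goal$ from the current non-$\fail$ state. Each attempt has length $<|S|$, succeeds with probability at least $\delta^{|S|}$, and changes the weight by at most $|S|\cdot W$ in absolute value, so the weight accumulated in the exit phase is bounded below by a constant $-C$ in conditional expectation, with $C$ independent of $k$. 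Since the conditioning event is carried exclusively by runs completing the $k$ loops, I would conclude $\mathit{CE}^{\mathfrak{T}_k}_{\sinit}=\mathbb{E}^{\mathfrak{T}_k}_{\sinit}[\wgt\mid\Diamond\goal]\geq \wgt(\rho)+kc-C\to\infty$, so $\mathit{CE}^{\sup}_{\sinit}=\infty$.

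The main obstacle is exactly the structural step in Case 2: ensuring that while we pump the cycle every deviation can be steered away from $\goal$, so that the event $\Diamond\goal$ is dominated by high-weight runs rather than by cheap goal-reaching escapes that would depress the conditional expectation. Isolating the goal-avoidable region $Z$ and checking that it is closed under the actions the critical scheduler uses is what makes the cycle-pumping construction yield a conditional expectation growing linearly in $k$; the exit-phase bookkeeping that produces the $k$-independent correction $C$ is then routine.
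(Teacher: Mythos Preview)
Your proof is correct and follows essentially the same approach as the paper: both split into the trivial case $\mathit{PE}^{\sup}=\infty$ and the cycle-pumping construction for a critical scheduler, where one follows a path to the positive cycle, loops $k$ times, avoids $\goal$ on every deviation, and finally switches to an exit strategy, yielding conditional expectation growing linearly in $k$. Your treatment is in fact more detailed than the paper's terse proof---the explicit identification of the goal-avoidable region $Z$ and the observation that all actions used by the critical scheduler are safe nicely formalize what the paper leaves implicit; the exit-phase bound can be made even cleaner by simply taking any memoryless deterministic scheduler reaching $\goal$ from $s^*$, since its partial (hence conditional) expectation in the induced finite Markov chain is automatically finite and independent of $k$.
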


\begin{proof}
If $\mathit{PE}^{\sup}_{s_{\mathit{init}}}=\infty$ clearly also $\mathit{CE}^{\sup}_{s_{\mathit{init}}}=\infty$.
So, let $\mathfrak{S}$ be a scheduler which can reach a positive cycle but almost surely does not reach $\mathit{goal}$. 
Then, for any $n$, we can construct the following scheduler $\mathfrak{S}_n$. 
The scheduler $\mathfrak{S}_n$ attempts to reach the positive cycle directly, i.e. without visiting a state twice before. 
Then, it attempts to take the cycle $n$ times in a row. 
Only if  $\mathfrak{S}_n$ succeeds to do so, it maximizes the probability to reach the goal from then on. 
Otherwise, it avoids the $\mathit{goal}$. The scheduler $\mathfrak{S}_n$ reaches the goal with positive probability and  $\mathit{CE}^{\mathfrak{S}_n}_{s_{\mathit{init}}} \to \infty$ for $n\to \infty$.
\end{proof}

In Section \ref{sec:PE}, we gave the following construction:
Let $\mathcal{M}$ be an MDP with $\Pr^{\min}_{\mathcal{M},s_{\mathit{init}}}(\Diamond \mathit{goal})=0$ and $\mathit{CE}^{\sup}_{\mathcal{M},s_{\mathit{init}}}<\infty$. 
In particular, this means that there are no critical schedulers for $\mathcal{M}$. 
Let $S_0$ be the set of all states reachable from $s_{\mathit{init}}$ while only choosing actions in $Act^{\min}$. As there are no critical schedulers, $(S_0,Act^{\min})$ does not contain positive cycles. 
So, there is a unique maximal weight $w_s$ of paths leading from $s_{\mathit{init}}$ to  $s$ in  $S_0$.
Consider the following MDP $\mathcal{N}$: It contains the MDP $\mathcal{M}$ and a new initial state $t_{\mathit{init}}$.
For each $s\in S_0$ and each $\alpha\in Act(s)\setminus Act^{\min}(s)$, $\mathcal{N}$ also contains a new state $t_{s,\alpha}$ 
which is reachable from $t_{\mathit{init}}$ via an action $\beta_{s,\alpha}$ with weight $w_s$ and probability $1$. 
In $t_{s,\alpha}$, only action $\alpha$ with the same probability distribution over successors and the same weight as in $s$ is enabled. 
In this way, we ensure that $\Pr^{\min}_{\mathcal{N}, t_{\mathit{init}}} (\Diamond \mathit{goal}) >0$. 

\reductionPosPr*

\begin{proof} 
For each pair $(s,\alpha)$ with $s\in S_0$ and $\alpha\in Act(s)\setminus Act^{\min}(s)$, let $c_{s,\alpha}:=\sup_\mathfrak{S} \mathit{CE}^\mathfrak{S}_{\mathcal{N},t_{\mathit{init}}}$
 where the supremum is taken over all schedulers $\mathfrak{S}$ for $\mathcal{N}$ which assign probability $1$ to the action $\beta_{s,\alpha}$ in $t_{\mathit{init}}$.
Then, $\mathit{CE}^{\sup}_{\mathcal{N},t_{\mathit{init}}}=\max_{s,\alpha} c_{s,\alpha}=:c$.

A scheduler reaching the $\mathit{goal}$ with positive probability has to choose an action not in $Act^{\min}$ after at least one path.
Let $s\in S_0$ and $\alpha \in Act(t)\setminus Act^{\min}(s)$ be such that $c=c_{s,\alpha}$. For any scheduler $\mathfrak{T}$ for $\mathcal{N}$ 
starting with $\beta_{s,\alpha}$, we define the following scheduler $\mathfrak{T}^\prime$: 
$\mathfrak{T}^\prime$ starts by following a path with maximal accumulated weight from $s_{\mathit{init}}$ to $s$. 
If it reaches $s$ with accumulated weight $w_s$ it chooses $\alpha$ and follows the choices of $\mathfrak{T}$ from then on. 
If it does not reach $s$ with accumulated weight $w_s$, $\mathfrak{T}^\prime$ just picks actions in $Act^{\min}$, in this way making sure that the $\mathit{goal}$ will not be reached. 
In this way, $\mathit{CE}^{\mathfrak{T}^\prime}_{\mathcal{M},s_{\mathit{init}}}=\mathit{CE}^{\mathfrak{T}}_{\mathcal{N},t_{\mathit{init}}}$. So, $\mathit{CE}^{\sup}_{\mathcal{M},s_{\mathit{init}}} \geq c$.

Before we show the other direction, we define, given a finite path $\pi$, a finite path $\rho$ starting in $last(\pi)$, 
and a scheduler $\mathfrak{Q}$,  the scheduler $\mathfrak{Q}\uparrow\pi$ by
\[\mathfrak{Q}\uparrow\pi \,(\rho) := \mathfrak{Q}(\pi;\rho)\] where $\pi;\rho$ denotes the concatenation of the paths $\pi$ and $\rho$.

To show that for any scheduler $\mathfrak{S}$ for $\mathcal{M}$ with $\Pr^\mathfrak{S}_{s_{\mathit{init}}}>0$ 
we have $\mathit{CE}^\mathfrak{S}_{\mathcal{M},s_{\mathit{init}}}\leq c$, let $\mathfrak{S}$ be such a scheduler
 and consider the set $\Pi$ of finite $\mathfrak{S}$-paths $\pi$ in $\mathcal{M}^{\min}$ such that $\mathfrak{S}(\pi)\in Act(last(\pi))\setminus Act^{\min}$. 
 We know that for each $\pi\in\Pi$, 
 \[\frac{wgt(\pi)+ \mathit{PE}^{\mathfrak{S} \uparrow \pi}_{last(\pi)}}{\Pr^{\mathfrak{T}\uparrow \pi}_{last(\pi)}(\Diamond \mathit{goal})}\leq c.\] 
 We conclude that also
 \[
 \mathit{CE}^\mathfrak{S}_{s_{\mathit{init}}} = \frac{\sum_{\pi\in\Pi} \Pr^\mathfrak{S}_{s_{\mathit{init}}}(\pi)\cdot (wgt(\pi)+ \mathit{PE}^{\mathfrak{S} \uparrow \pi}_{last(\pi)})}{\sum_{\pi\in\Pi} \Pr^\mathfrak{S}_{s_{\mathit{init}}} (\pi) \cdot \Pr^{\mathfrak{S}\uparrow \pi}_{last(\pi)}(\Diamond \mathit{goal})}\leq c
 \]
 as all denominators are positive.
\end{proof}

\finiteCE*

\begin{proof}
We have seen that  $\mathit{CE}^{\sup}_{s_{\mathit{init}}}=\infty$  if there is a positively weight-divergent end component or a critical scheduler. 
On the other hand, we can rely on the reduction from the previous proposition if there are no critical scheduler. 
In $\mathcal{N}$, the maximal partial expectation is finite as there are no positively weight-divergent end components. 
As the minimal probability to reach $\mathit{goal}$ is furthermore positive, the maximal conditional expectation is finite as well. 
Hence, $\mathit{CE}^{\sup}_{\mathcal{M},s_{\mathit{init}}}=\mathit{CE}^{\sup}_{\mathcal{N},t_{\mathit{init}}}<\infty$.
\end{proof}

\subsection{Partial Expectations in MDPs with Non-Negative Weights} \label{app:non-negative}

Let $R\in\mathbb{Q}$ be arbitrary.
In this section, we  consider an MDP $\mathcal{M}$ in which all weights are non-negative, and we assume:
\begin{enumerate}
\item
$\mathit{PE}^{\sup}_{s_{init}} < \infty$,
\item
the only end components are the two distinct absorbing states $\mathit{goal}$ and $fail$,
\item
$\mathit{goal}$ can be reached from any state $s\in S\setminus \{fail\}$.
\end{enumerate}

Assumption 2. is justified as all weight are non-negative and hence the maximal expected mean payoff of an end component cannot be negative.

\tudparagraph{1ex.}{Saturation point.}
Recall that a saturation point for bias $R$ is a natural number $\mathfrak{p}$ such that there is a scheduler $\mathfrak{S}$ 
with  $\mathit{PE}^{\mathfrak{S}}_{s_{\mathit{init}}} [R] =\mathit{PE}^{{\sup}}_{s_{\mathit{init}}} [R] $ 
which is memoryless and deterministic as soon as the accumulated weight reaches $\mathfrak{p}$. 
I.e. for any two paths $\pi$ and $\pi^\prime$, with $last(\pi)=last(\pi^\prime)$ and $wgt(\pi),wgt(\pi^\prime)>\mathfrak{p}$, $\mathfrak{S}(\pi)=\mathfrak{S}(\pi^\prime)$. 
We first provide the following saturation point which we need in the proof of the smaller saturation point given in Section \ref{sec:PE}.

\begin{myprop} \label{prop:SaturationPoint1}
Let \[F:=\max_{s\in S} \mathit{PE}^{\sup}_s\] and \[E:= \mathrm{min}_{s\in S} \mathit{PE}^\mathfrak{Max}_s.\] 
Further, let \[\delta:=\mathrm{min}_{s\in S, \alpha\in Act(s)\setminus Act^{\max} (s)} p^{\max}_s - p^{\max}_{s,\alpha},\] where $p_{s,\alpha}^{\max}:= \sum_{t\in S} P(s,\alpha,t)\cdot p_t^{\max}$.
Then, \[\mathfrak{q}:= \frac{F-E}{\delta}-R\] is a saturation point for bias $R$.

(If the minimum in the definition of $\delta$ is taken over an empty set, $\mathfrak{Max}$ is already the optimal scheduler and hence any value is an upper saturation point.)
\end{myprop}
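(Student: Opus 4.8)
The plan is to prove that every scheduler can be transformed, without decreasing its partial expectation, into one that coincides with $\mathfrak{Max}$ as soon as the accumulated weight reaches $\mathfrak{q}$. Since weights are non-negative the accumulated weight starts at $0$ and is non-decreasing, so below $\mathfrak{q}$ only the finitely many integer weights in $[0,\mathfrak{q})$ occur; the residual optimization is then a finite state--weight MDP and its supremum is attained, which is exactly what a saturation point asserts. Throughout I use the linear decomposition $\mathit{PE}^{\mathfrak{T}}_s[w{+}R]=\mathit{PE}^{\mathfrak{T}}_s+(w{+}R)\cdot\Pr^{\mathfrak{T}}_s(\Diamond\mathit{goal})$, valid for every scheduler $\mathfrak{T}$ (this is the identity underlying Proposition~\ref{prop:threshold}), and in particular $\mathit{PE}^{\mathfrak{Max}}_s[w{+}R]=\mathit{PE}^{\mathfrak{Max}}_s+(w{+}R)\cdot p_s^{\max}$. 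Because weight is non-decreasing, a continuation from a pair $(s,w)$ contributes $\Pr(\text{reach }(s,w))\cdot\mathit{PE}^{\mathrm{cont}}_s[w{+}R]$ to the global partial expectation, so it suffices to compare continuation values at high-weight pairs.

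The heart of the argument is a local comparison at a pair $(s,w)$ with $w\geq\mathfrak{q}$. Suppose a continuation scheduler $\mathfrak{U}$ starts at $s$ by choosing an action $\alpha\notin Act^{\max}(s)$. Then $\Pr^{\mathfrak{U}}_s(\Diamond\mathit{goal})\leq p_{s,\alpha}^{\max}\leq p_s^{\max}-\delta$, while $\mathit{PE}^{\mathfrak{U}}_s\leq F$, whereas $\mathfrak{Max}$ attains value at least $E+(w{+}R)p_s^{\max}$. Hence
\[
\mathit{PE}^{\mathfrak{Max}}_s[w{+}R]-\mathit{PE}^{\mathfrak{U}}_s[w{+}R]\ \geq\ (E-F)+(w{+}R)\bigl(p_s^{\max}-\Pr^{\mathfrak{U}}_s(\Diamond\mathit{goal})\bigr)\ \geq\ (E-F)+(w{+}R)\,\delta\ \geq\ 0,
\]
where the last step uses precisely $w\geq\mathfrak{q}=\tfrac{F-E}{\delta}-R$ (note $w{+}R\geq\tfrac{F-E}{\delta}\geq0$ and $\delta>0$). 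Thus at any high-weight state, switching to $\mathfrak{Max}$ dominates any continuation whose first action leaves $Act^{\max}$.

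The improvement then proceeds in two steps. First, from an arbitrary scheduler $\mathfrak{S}$ define $\mathfrak{S}'$ to follow $\mathfrak{S}$ up to the first position at which the weight is $\geq\mathfrak{q}$ and $\mathfrak{S}$ chooses an action outside $Act^{\max}$, and to follow $\mathfrak{Max}$ from that position on. Decomposing $\mathit{PE}[R]$ according to this first ``bad'' position—both schedulers agree strictly before it, so the reaching probabilities coincide—the displayed inequality applies at each bad position (where the compared $\mathfrak{S}$-continuation really does begin with a non-$Act^{\max}$ action), yielding $\mathit{PE}^{\mathfrak{S}'}_{s_{\mathit{init}}}[R]\geq\mathit{PE}^{\mathfrak{S}}_{s_{\mathit{init}}}[R]$; moreover $\mathfrak{S}'$ now uses only $Act^{\max}$-actions once the weight is $\geq\mathfrak{q}$. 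Second, since from any pair $(s,w)$ with $w\geq\mathfrak{q}$ the scheduler $\mathfrak{S}'$ uses only $Act^{\max}$-actions, it reaches $\mathit{goal}$ with probability exactly $p_s^{\max}$ and is therefore a maximal-probability scheduler there; as $\mathfrak{Max}$ maximizes the partial expectation among maximal-probability schedulers (Proposition~\ref{Scheduler Max}), and hence for every bias, replacing the continuation of $\mathfrak{S}'$ at the first pair of weight $\geq\mathfrak{q}$ by $\mathfrak{Max}$ produces a scheduler $\mathfrak{S}^\star$ with $\mathit{PE}^{\mathfrak{S}^\star}_{s_{\mathit{init}}}[R]\geq\mathit{PE}^{\mathfrak{S}'}_{s_{\mathit{init}}}[R]$ that coincides with $\mathfrak{Max}$ above $\mathfrak{q}$.

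I expect the main obstacle to be exactly the point that forces this two-step structure. The naive attempt to replace, in one shot, every high-weight continuation of $\mathfrak{S}$ by $\mathfrak{Max}$ fails, because a continuation may begin with $Act^{\max}$-actions and deviate only later at a still higher weight, producing a reaching-probability deficit smaller than $\delta$ for which the local inequality gives nothing. Decomposing at the \emph{first actual deviation} above $\mathfrak{q}$ rather than at the first pair of weight $\geq\mathfrak{q}$ is what guarantees a full deficit of at least $\delta$ on each branch and makes the estimate close; the subsequent clean-up to $\mathfrak{Max}$ then only compares schedulers that are already maximal-probability, where no quantitative bound is required. The degenerate case where $Act(s)\setminus Act^{\max}(s)=\emptyset$ everywhere (empty minimum in $\delta$) is handled separately, as noted in the statement: $\mathfrak{Max}$ is globally optimal and any threshold works.
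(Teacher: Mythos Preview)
Your proposal is correct and follows essentially the same approach as the paper: both hinge on the identical local inequality that at any pair $(s,w)$ with $w\geq\mathfrak{q}$, a continuation whose first action lies outside $Act^{\max}$ has reaching probability at most $p_s^{\max}-\delta$ and hence biased partial expectation bounded by $F+(w{+}R)(p_s^{\max}-\delta)\leq E+(w{+}R)p_s^{\max}\leq\mathit{PE}^{\mathfrak{Max}}_s[w{+}R]$. The paper packages this as a proof by contradiction (assuming $\mathfrak{S}$ beats $\mathfrak{S}\lhd_\mathfrak{q}\mathfrak{Max}$ and invoking Proposition~\ref{Scheduler Max} to locate a deviating path where the inequality yields a contradiction), whereas your two-step constructive improvement---first swap at the earliest non-$Act^{\max}$ step above $\mathfrak{q}$, then appeal to Proposition~\ref{Scheduler Max} for the remaining $Act^{\max}$-only continuation---makes explicit exactly the decomposition that the paper's one-line ``By Proposition~\ref{Scheduler Max}'' leaves implicit.
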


\begin{proof}
Given two schedulers $\mathfrak{S}$ and $\mathfrak{T}$ and some $x\in\mathbb{R}$, we define the scheduler $\mathfrak{S}\lhd_x\mathfrak{T}$ via:
\[\mathfrak{S}\lhd_x\mathfrak{T} (\pi) := 
\begin{cases}
\mathfrak{S}(\pi) & \text{if }wgt(\pi) \leq x, \\
\mathfrak{T}(\pi) & \text{otherwise.}
\end{cases}
\]
Given a finite path $\pi$ and a path $\rho$ starting in $last(\pi)$, we further define the scheduler $\mathfrak{S}\uparrow\pi$ by
\[\mathfrak{S}\uparrow\pi \,(\rho) := \mathfrak{S}(\pi;\rho)\] where $\pi;\rho$ denotes the concatenation of the paths $\pi$ and $\rho$.

Suppose, there is a scheduler $\mathfrak{S}$ such that $\mathit{PE}^{\mathfrak{S}}_{s_{\mathit{init}}} [R] >\mathit{PE}^{\mathfrak{S}\lhd_\mathfrak{q}\mathfrak{Max}}_{s_{\mathit{init}}} [R]$. 
By Proposition \ref{Scheduler Max}, this means that there must be a path $\pi$ with $wgt(\pi)\geq \mathfrak{q}$ such that $\mathfrak{S}(\pi)\not\in Act^{\max}(last(\pi))$ and
\[\mathit{PE}^{\mathfrak{S}\uparrow\pi}_{last(\pi)} [R]+ \Pr^{\mathfrak{S}\uparrow\pi}_{last(\pi)}(\Diamond \mathit{goal}) \cdot wgt(\pi) > \mathit{PE}^{\mathfrak{Max}}_{last(\pi)} [R] + p^{\max}_{last(\pi)}\cdot wgt(\pi).\]
But on the other hand, we have
\begin{eqnarray*}
&& \mathit{PE}^{\mathfrak{S}\uparrow\pi}_{last(\pi)} [R] + \Pr^{\mathfrak{S}\uparrow\pi}_{last(\pi)}(\Diamond \mathit{goal}) \cdot wgt(\pi) \\
&=& \mathit{PE}^{\mathfrak{S}\uparrow\pi}_{last(\pi)}+ \Pr^{\mathfrak{S}\uparrow\pi}_{last(\pi)}(\Diamond \mathit{goal}) \cdot (wgt(\pi) + R)\\
&\leq& F + (p^{\max}_{last(\pi)}-\delta) \cdot (wgt(\pi)+R)\\
&\leq& F+ p^{\max}_{last(\pi)}\cdot (wgt(\pi)+R) - \delta \cdot (\mathfrak{q} +R) \\
&=&  p^{\max}_{last(\pi)}\cdot (wgt(\pi)+R) + E \\
&\leq& \mathit{PE}^\mathfrak{Max}_{last(\pi)} + p^{\max}_{last(\pi)}\cdot (wgt(\pi)+R)\\
&=& \mathit{PE}^{\mathfrak{Max}}_{last(\pi)} [R] + p^{\max}_{last(\pi)}\cdot wgt(\pi).
\end{eqnarray*}
contradicting the supposition. For the third line, note that $wgt(\pi)+R > 0 $.
\end{proof}

In the setting of non-negative weights, it has been shown in \cite{chen2013} that there is a weight-based deterministic scheduler maximizing the partial expectation. 
Of course, this also follows from our results in Section \ref{sec:optimalWD} for MDPs with arbitrary weights. This allows us to conclude the following.
\begin{mycor}\label{FinitelyManySchedulers}
The supremum in $\mathit{PE}^{{\max},R}_s:=\mathrm{sup}_\mathfrak{S} \mathit{PE}^{\mathfrak{S},R}_s$ can also be taken over weight-based schedulers 
which behave memoryless as soon as the accumulated weight reaches $\mathfrak{q}$. 
As there are only finitely many such schedulers, the supremum is furthermore in fact a maximum.
\end{mycor}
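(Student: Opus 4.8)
The plan is to combine two facts that are already available: the existence of an optimal weight-based deterministic scheduler, and the saturation-point property established in Proposition~\ref{prop:SaturationPoint1}. Fix a starting state $s\in S$; note that the threshold $\mathfrak{q}$ and the class of schedulers in question are defined through the global quantities $F$, $E$, $\delta$ and thus do not depend on the choice of $s$. First I would invoke the cited existence result (from \cite{chen2013}, or equivalently from the results of Section~\ref{sec:optimalWD}) to obtain a weight-based deterministic scheduler $\mathfrak{S}^\ast$ with $\mathit{PE}^{\mathfrak{S}^\ast}_{s}[R]=\mathit{PE}^{\sup}_{s}[R]$.

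The engine of the argument is an inequality implicit in the proof of Proposition~\ref{prop:SaturationPoint1}: that proof derives a contradiction from the assumption $\mathit{PE}^{\mathfrak{S}}_{s}[R]>\mathit{PE}^{\mathfrak{S}\lhd_\mathfrak{q}\mathfrak{Max}}_{s}[R]$, so in fact $\mathit{PE}^{\mathfrak{S}}_{s}[R]\leq \mathit{PE}^{\mathfrak{S}\lhd_\mathfrak{q}\mathfrak{Max}}_{s}[R]$ for \emph{every} scheduler $\mathfrak{S}$; that is, overwriting the behaviour beyond accumulated weight $\mathfrak{q}$ by $\mathfrak{Max}$ never decreases the partial expectation. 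Applying this to $\mathfrak{S}^\ast$ yields $\mathit{PE}^{\mathfrak{S}^\ast\lhd_\mathfrak{q}\mathfrak{Max}}_{s}[R]\geq \mathit{PE}^{\sup}_{s}[R]$, hence equality. I would then verify that $\mathfrak{S}^\ast\lhd_\mathfrak{q}\mathfrak{Max}$ lies in the announced class: the cut $\lhd_\mathfrak{q}$ is triggered solely by the accumulated weight and both constituents are weight-based, so the combined scheduler is weight-based and deterministic; moreover, on every path of weight exceeding $\mathfrak{q}$ it coincides with $\mathfrak{Max}$, which is memoryless deterministic (Proposition~\ref{Scheduler Max}), so it behaves memorylessly as soon as the weight exceeds $\mathfrak{q}$. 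This shows the supremum is attained within the class.

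It remains to see that the class is finite. Since all weights are non-negative integers, the accumulated weight along any path is a non-negative integer; hence the state--weight pairs $(t,w)$ with $w\leq \mathfrak{q}$ involve only the finitely many integer weights in $\{0,1,\dots,\lfloor\mathfrak{q}\rfloor\}$ (and none at all if $\mathfrak{q}<0$, in which case the scheduler is already memoryless everywhere), while for $w>\mathfrak{q}$ a memoryless choice fixes a single action per state. A member of the class is therefore specified by selecting one action from $\Act(t)$ for each of these finitely many configurations, so the class is finite. Being a supremum over a nonempty finite set that is moreover attained, $\mathit{PE}^{\sup}_{s}[R]$ is a maximum, as claimed.

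I expect the only point requiring care to be notational bookkeeping rather than any mathematical depth: one must check that $\mathfrak{S}^\ast\lhd_\mathfrak{q}\mathfrak{Max}$ is genuinely weight-based and memoryless-above-$\mathfrak{q}$ (and not merely history-dependent), which reduces to the observations that both $\mathfrak{S}^\ast$ and $\mathfrak{Max}$ are weight-based and that the switch depends only on the accumulated weight. Once this is settled, the finiteness count and the passage from supremum to maximum are immediate from the finiteness of $S$ and $\Act$.
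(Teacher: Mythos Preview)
Your proposal is correct and follows the same route as the paper. The paper does not give an explicit proof of this corollary; it simply precedes it with the remark that an optimal weight-based deterministic scheduler exists (by \cite{chen2013} or Section~\ref{sec:optimalWD}) and lets the corollary follow from Proposition~\ref{prop:SaturationPoint1}. Your argument spells out precisely this: take an optimal $\mathfrak{S}^\ast\in\mathit{WD}$, apply the cut $\lhd_\mathfrak{q}\mathfrak{Max}$ (which by the contradiction in Proposition~\ref{prop:SaturationPoint1} cannot decrease the partial expectation), observe the result lies in the finite class, and count. The observation that in the non-negative setting the accumulated weight is monotone---so ``current weight $>\mathfrak{q}$'' and ``weight has once reached $\mathfrak{q}$'' coincide---is exactly what makes the combined scheduler memoryless above $\mathfrak{q}$ and the class finite; you handle both correctly.
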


\begin{myprop} \label{SaturationPoint}
Let $\mathcal{M}$ be an MDP with non-negative weights satisfying the assumptions 1--3. Furthermore, let 
\[p_{s,\alpha}^{\max}:= \sum_{t\in S} \Pr(s,\alpha,t)\cdot p_t^{\max},\] and \[\mathit{PE}^\mathfrak{Max}_{s,\alpha}:=p_{s,\alpha}^{\max}\cdot wgt(s,\alpha) + \sum_{t\in S} \Pr(s,\alpha,t)\cdot  \mathit{PE}^\mathfrak{Max}_t.\]
Then, \[\mathfrak{p}_R:= \mathrm{sup} \left\{\left.\frac{\mathit{PE}^\mathfrak{Max}_{s,\alpha}-E^\mathfrak{Max}_{s}}{p^{\max}_s-p^{\max}_{s,\alpha}}\right|s\in S, \alpha\in Act(s)\setminus Act^{\max}(s)\right\}-R\] 
is an upper saturation point for $\mathcal{M}$.
\end{myprop}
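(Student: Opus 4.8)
The plan is to show that $\mathfrak{Max}$ is not merely optimal against one-step deviations but is genuinely optimal from every state-weight pair $(s,w)$ with $w\ge\mathfrak{p}_R$; once this is in hand, overwriting the decisions of any optimal weight-based deterministic scheduler by $\mathfrak{Max}$ on all paths of weight at least $\lceil\mathfrak{p}_R\rceil$ cannot decrease the partial expectation, which yields an optimal scheduler that is memoryless from $\mathfrak{p}_R$ onwards — exactly the saturation-point property. Throughout I write $c:=w+R$ for the effective bias seen by the residual scheduler after accumulating weight $w$, and abbreviate $\mathit{PE}^\mathfrak{Max}_s[c]=\mathit{PE}^\mathfrak{Max}_s+p^{\max}_s\cdot c$.

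First I would record the one-step inequality that the definition of $\mathfrak{p}_R$ is engineered to produce: for any $s$, any $\beta\in Act(s)$ and any $c\ge\mathfrak{p}_R+R$,
\[
\mathit{PE}^\mathfrak{Max}_{s,\beta}+p^{\max}_{s,\beta}\cdot c \;\le\; \mathit{PE}^\mathfrak{Max}_{s}+p^{\max}_{s}\cdot c .
\]
If $\beta\in Act^{\max}(s)$, then $p^{\max}_{s,\beta}=p^{\max}_s$ and the claim reduces to $\mathit{PE}^\mathfrak{Max}_{s,\beta}\le\mathit{PE}^\mathfrak{Max}_s$, which holds because choosing $\beta$ and then following $\mathfrak{Max}$ reaches $\goal$ with the maximal probability $p^{\max}_s$, while by Proposition~\ref{Scheduler Max} $\mathfrak{Max}$ is optimal among all schedulers reaching $\goal$ with maximal probability. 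If $\beta\notin Act^{\max}(s)$, then $p^{\max}_s-p^{\max}_{s,\beta}>0$ and the inequality is equivalent to $c\ge(\mathit{PE}^\mathfrak{Max}_{s,\beta}-\mathit{PE}^\mathfrak{Max}_s)/(p^{\max}_s-p^{\max}_{s,\beta})$, which is guaranteed by $c\ge\mathfrak{p}_R+R=\sup\{\dots\}$.

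Second — the heart of the argument — I would upgrade this local statement to global optimality by a supermartingale argument, exploiting that weights are non-negative, so the effective bias is non-decreasing along every run and stays in the region $c\ge\mathfrak{p}_R+R$ once it enters it. Fix any scheduler $\mathfrak{U}$ and any start $(s,w)$ with $w\ge\mathfrak{p}_R$, let $s(n),w(n)$ be the state and accumulated weight after $n$ steps, and set $M_n:=\mathit{PE}^\mathfrak{Max}_{s(n)}+p^{\max}_{s(n)}\cdot(w(n)+R)$. Writing the Bellman operator out and averaging the one-step inequality over $\mathfrak{U}$'s (possibly randomized) choice and the successor distribution gives $\mathbb{E}\big[M_{n+1}\mid\mathcal F_n\big]\le M_n$, so $M_n$ is a supermartingale under any scheduler, and a martingale under $\mathfrak{Max}$ since $\mathfrak{Max}$ only uses actions in $Act^{\max}$. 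By Assumption~2 the only end components are $\goal$ and $\fail$, so under any scheduler the run is absorbed in $\{\goal,\fail\}$ almost surely at some a.s.\ finite time $T$; there $M_T=w(T)+R$ on $\Diamond\goal$ and $M_T=0$ otherwise, i.e.\ $M_T$ equals the realised value of $\oplus^R\goal$. Optional stopping at the bounded times $T\wedge n$ gives $\mathbb{E}[M_{T\wedge n}]\le M_0=\mathit{PE}^\mathfrak{Max}_s[w+R]$, and letting $n\to\infty$ yields $\mathit{PE}^{\mathfrak{U}}_s[w+R]=\mathbb{E}[M_T]\le\mathit{PE}^\mathfrak{Max}_s[w+R]$, with equality for $\mathfrak{U}=\mathfrak{Max}$.

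The main obstacle I anticipate is justifying the passage to the limit in the optional-stopping step, since $T$ is almost surely finite but unbounded and $w(T)$ is unbounded. I would discharge this using the geometric tail bound underlying the finiteness results: because there are no positively weight-divergent end components, $\Pr^{\max}_s(\Diamond(\wgt\ge k))$ decays geometrically in $k$, hence $\sup_n w(n)$ has finite expectation and $\sup_n|M_n|$ is integrable, so dominated convergence applies to $M_{T\wedge n}\to M_T$. Having thereby shown $\mathfrak{Max}$ is optimal from every $(s,w)$ with $w\ge\mathfrak{p}_R$, the saturation-point conclusion follows by the gluing described above. (The degenerate case where $Act^{\max}(s)=Act(s)$ for all $s$, in which the supremum defining $\mathfrak{p}_R$ is empty, is precisely the case where $\mathfrak{Max}$ is already globally optimal, consistent with the parenthetical remark.)
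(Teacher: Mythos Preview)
Your argument is correct, and it takes a genuinely different route from the paper's. The paper proceeds by contradiction: assuming some scheduler beats $\mathfrak{Max}$ at bias $\mathfrak{p}_R+R$, it invokes Corollary~\ref{FinitelyManySchedulers} (which in turn rests on the coarser saturation point $\mathfrak{q}$ of Proposition~\ref{prop:SaturationPoint1}) to pick an \emph{optimal} deterministic weight-based scheduler $\mathfrak{S}$, defines the deviation $D_s=\mathit{PE}^{\mathfrak{S}}_s[\mathfrak{p}_R{+}R]-\mathit{PE}^{\mathfrak{Max}}_s[\mathfrak{p}_R{+}R]$, locates a state $t$ where $D_t>0$ is maximal and from which the first action of $\mathfrak{S}$ strictly decreases $D$, and then a one-step expansion together with the defining inequality of $\mathfrak{p}_R$ yields $0<0$. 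Your approach instead packages the one-step inequality directly into the supermartingale $M_n=\mathit{PE}^{\mathfrak{Max}}_{s(n)}+p^{\max}_{s(n)}(w(n){+}R)$ and applies optional stopping.

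What each approach buys: the paper's maximal-deviation argument is purely combinatorial and sidesteps any integrability issues, but it is not self-contained---it needs the prior existence of an optimal scheduler, obtained via the cruder saturation point $\mathfrak{q}$. Your supermartingale argument is self-contained and conceptually cleaner (it never needs $\mathfrak{q}$), at the price of the dominated-convergence step; that step is indeed available here, since the geometric tail bound on $\Pr^{\max}_s(\Diamond\,\wgt\ge k)$ established for finiteness gives $\mathbb{E}[\sup_n w(n)]<\infty$ and hence $\sup_n|M_n|\in L^1$.
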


\begin{proof}
It is enough to show that
\begin{equation} \label{eq:ProofSaturationPoint}
\mathbb{E}^\mathfrak{Max}_s(\oplus^{\mathfrak{p}_R+R} \mathit{goal}) \geq \mathbb{E}^\mathfrak{S}_s(\oplus^{\mathfrak{p}_R+R} \mathit{goal})
\end{equation}
 for all states $s$ and all schedulers $\mathfrak{S}$. Since $\mathfrak{Max}$ maximizes the probability of reaching the goal, 
 this  implies that $\mathbb{E}^\mathfrak{Max}_s(\oplus^{w+R} \mathit{goal}) \geq \mathbb{E}^\mathfrak{S}_s(\oplus^{w+R} \mathit{goal})$ for all $w\geq \mathfrak{p}_R$.

So, we show (\ref{eq:ProofSaturationPoint}). Suppose that
\[\mathbb{E}^\mathfrak{Max}_s(\oplus^{\mathfrak{p}_R+R} \mathit{goal}) < \mathbb{E}^\mathfrak{S}_s(\oplus^{\mathfrak{p}_R+R} \mathit{goal})\] 
for some state $s$ and some scheduler $\mathfrak{S}$. 

Now, let  \[D_s:=\mathrm{sup}_\mathfrak{T} \mathbb{E}^\mathfrak{T}_s (\oplus^{\mathfrak{p}_R+R} \mathit{goal}) - \mathbb{E}^\mathfrak{Max}_s (\oplus^{\mathfrak{p}_R+R} \mathit{goal})\] for each state $s$. 
As we know that there are only finitely many relevant schedulers for the supremum we can actually choose a (deterministic weight-based) scheduler $\mathfrak{S}$ 
such that $\mathrm{sup}_\mathfrak{T} \mathbb{E}^\mathfrak{T}_s  (\oplus^{\mathfrak{p}_R+R} \mathit{goal})=  \mathbb{E}^\mathfrak{S}_s (\oplus^{\mathfrak{p}_R+R} \mathit{goal})$ for all states $s$. 
Now, let $t$ be a state such that $D_t> 0$ is maximal, and such that the first action  $\alpha$ that $\mathfrak{S}$ chooses starting in $t$ leads to a state $r$ with $D_r < D_t$ with positive probability. 
As $D_{\mathit{goal}}= 0$ such a state exists. Then,
\begin{eqnarray*}
&& \mathbb{E}^\mathfrak{S}_t (\oplus^{\mathfrak{p}_R+R} \mathit{goal})  \\
&=& \Pr^\mathfrak{S}_t(\Diamond \mathit{goal}) \cdot wgt(t,\alpha) + \sum_s P(t,\alpha,s) \cdot \mathbb{E}^\mathfrak{S}_s (\oplus^{\mathfrak{p}_R+R} \mathit{goal}) \\
&\leq& p^{\max}_{t,\alpha} \cdot wgt(t,\alpha) + \sum_s P(t,\alpha,s) \cdot (\mathbb{E}^\mathfrak{Max}_s (\oplus^{\mathfrak{p}_R+R} \mathit{goal}) + D_s) \\
&<& p^{\max}_{t,\alpha} \cdot wgt(t,\alpha) + \left(\sum_s P(t,\alpha,s) \cdot \mathbb{E}^\mathfrak{Max}_s (\oplus^{\mathfrak{p}_R+R} \mathit{goal})\right) + D_t \\
&=& \mathit{PE}^\mathfrak{Max}_{t,\alpha} +  p^{\max}_{t,\alpha} \cdot (\mathfrak {p}_R+R) +D_t \\
&=& \mathit{PE}^\mathfrak{Max}_{t,\alpha} +  p^{\max}_{t,\alpha} \cdot (\mathfrak {p}_R+R) + \mathbb{E}^\mathfrak{S}_t (\oplus^{\mathfrak{p}_R+R} \mathit{goal}) - \mathbb{E}^\mathfrak{Max}_t (\oplus^{\mathfrak{p}_R+R} \mathit{goal})\\
&=& \mathit{PE}^\mathfrak{Max}_{t,\alpha} +  p^{\max}_{t,\alpha} \cdot (\mathfrak {p}_R+R) + \mathbb{E}^\mathfrak{S}_t (\oplus^{\mathfrak{p}_R+R} \mathit{goal}) - (E^\mathfrak{Max}_t + p^{\max}_t \cdot (\mathfrak{p}_R+R)).
\end{eqnarray*}

Subtracting $\mathbb{E}^\mathfrak{S}_t (\oplus^{\mathfrak{p}_R+R} \mathit{goal})$, we get 
\begin{eqnarray*}
0 &<& \mathit{PE}^\mathfrak{Max}_{t,\alpha} +  p^{\max}_{t,\alpha} \cdot (\mathfrak{p}_R+R) - (\mathit{PE}^\mathfrak{Max}_t + p^{\max}_t \cdot (\mathfrak{p}_R+R)) \\ 
 &=& (\mathit{PE}^\mathfrak{Max}_{t,\alpha}- \mathit{PE}^\mathfrak{Max}_t) - (p^{\max}_t -  p^{\max}_{t,\alpha}) \cdot (\mathfrak{p}_R+R) .
\end{eqnarray*}
But, the right hand side evaluates to $0$ by the definition of $\mathfrak{p}_R$ leading to a contradiction.
\end{proof}

\tudparagraph{1ex.}{Computation of the Partial Expectation.} 

\begin{myprop} \label{prop:linearprogram}
Let $R\in\mathbb{Q}$ and let $B_R$ be the least integer greater or equal to $\mathfrak{p}_R+ \max_{s\in S,\alpha\in Act(s)} wgt(s,\alpha)$ and let $S^\prime := S\setminus\{\mathit{goal},fail\}$. 
Consider the following linear program in the variables $(x_{s,r})_{s\in S^\prime, 0\leq r\leq B_R}$ (r ranges over integers): 
Minimize $\sum_{s\in S^\prime,0\leq r\leq B_R}x_{s,r}$ under the following constraints:
\begin{eqnarray*}
&&\text{For }r \geq \mathfrak{p}_R:  x_{s,r}= p_s^{\max} \cdot (r+R) + E^\mathfrak{Max}_s, \\
&&\text{for }r < \mathfrak{p}_R \text{ and }\alpha\in Act(s) :\\
&&x_{s,r}\geq P(s,\alpha,\mathit{goal}) \cdot (r+R+wgt(s,\alpha)) + \sum_{t\in S^\prime} P(s,\alpha,t) \cdot x_{t,r+wgt(s,\alpha).}
\end{eqnarray*}

The values $(\mathit{PE}^{\sup}_{s_{\mathit{init}}}[r+R])_{s\in S, 0\leq r\leq B_R}$ form the unique solution to this linear program. 
\end{myprop}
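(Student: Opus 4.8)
The plan is to show that the vector $y$ with entries $y_{s,r} := \mathit{PE}^{\sup}_s[r{+}R]$ (for $s \in S'$ and $0 \le r \le B_R$) is (i) a feasible point of the linear program and (ii) coordinatewise the smallest feasible point; together these make $y$ the unique minimizer of the objective $\sum_{s,r} x_{s,r}$, which is exactly the claim. The two structural facts I will lean on are the saturation-point property (Proposition~\ref{SaturationPoint}), which gives $\mathit{PE}^{\sup}_s[w{+}R] = \mathit{PE}^{\mathfrak{Max}}_s[w{+}R] = p_s^{\max}(w{+}R)+E^{\mathfrak{Max}}_s$ for every integer $w \ge \mathfrak{p}_R$, and the one-step (Bellman) optimality equation for partial expectations below the saturation point.

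For feasibility I would treat the two cases separately. For $r \ge \mathfrak{p}_R$ the equality constraint is immediate from the saturation-point property. For $r < \mathfrak{p}_R$ I unfold $\mathit{PE}^{\sup}_s[r{+}R]$ by one step: choosing $\alpha$ first splits the successor into $\mathit{goal}$ (contributing $r{+}R{+}\mathit{wgt}(s,\alpha)$ with probability $P(s,\alpha,\mathit{goal})$), $\fail$ (contributing $0$), and the remaining $t \in S'$ (each continuing with residual bias $r{+}\mathit{wgt}(s,\alpha){+}R$, i.e. contributing $\mathit{PE}^{\sup}_t[r{+}\mathit{wgt}(s,\alpha){+}R]$). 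Since weights are non-negative the continuation index satisfies $0 \le r \le r{+}\mathit{wgt}(s,\alpha) < \mathfrak{p}_R + \max_{s,\alpha}\mathit{wgt}(s,\alpha) \le B_R$, so every referenced entry lies inside the index range (and whenever $r{+}\mathit{wgt}(s,\alpha) \ge \mathfrak{p}_R$ it already equals its $\mathfrak{Max}$-value). As $\mathit{PE}^{\sup}_s[r{+}R]$ is the maximum over $\alpha$ of this expression, each individual $\alpha$ yields the required $\ge$ inequality, so $y$ is feasible.

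The main work is minimality: every feasible $x$ satisfies $x_{s,r} \ge y_{s,r}$. I would fix a weight-based deterministic scheduler $\mathfrak{T}$ that coincides with $\mathfrak{Max}$ once the accumulated weight reaches $\mathfrak{p}_R$; by Corollary~\ref{FinitelyManySchedulers} it suffices to dominate every such $\mathfrak{T}$. Extending $x$ above $B_R$ by $\tilde{x}_{s,w} := p_s^{\max}(w{+}R)+E^{\mathfrak{Max}}_s$, I track the run $s_0 s_1 \dots$ of $\mathfrak{T}$ together with its accumulated weight $w_n$ (started at $w_0 = r$) and let $\tau$ be the first time it hits $\mathit{goal}$ or $\fail$. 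Setting $M_n := \tilde{x}_{s_n,w_n}$ while $n < \tau$, $M_n := w_\tau + R$ once $\mathit{goal}$ is hit, and $M_n := 0$ once $\fail$ is hit, the constraint for the action chosen by $\mathfrak{T}$ (an inequality below $\mathfrak{p}_R$, and the defining $\mathfrak{Max}$-recursion, hence an equality, above it) shows that $(M_n)$ is a supermartingale. The supermartingale inequality gives $x_{s,r} = \mathbb{E}[M_0] \ge \mathbb{E}[M_n]$ for all $n$, and letting $n \to \infty$ yields $x_{s,r} \ge \mathbb{E}[M_\infty] = \mathit{PE}^{\mathfrak{T}}_s[r{+}R]$; taking the supremum over $\mathfrak{T}$ gives $x_{s,r} \ge \mathit{PE}^{\sup}_s[r{+}R] = y_{s,r}$.

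With feasibility and coordinatewise minimality in hand, $y$ is feasible and lies below every feasible point, so it uniquely minimizes $\sum_{s,r} x_{s,r}$ (any other feasible point is $\ge y$ in every coordinate and differs in at least one, hence has strictly larger objective). The step I expect to be most delicate is the passage $\mathbb{E}[M_n] \to \mathbb{E}[M_\infty]$: it requires $\tau < \infty$ almost surely and a dominated-convergence justification for the unbounded linear term $w_\tau + R$. Both follow from the finiteness hypothesis $\mathit{PE}^{\sup}_{s_{\mathit{init}}} < \infty$, which (through the absence of positively weight-divergent end components, as in the finiteness proof) forces a geometric tail bound of the form $\Pr^{\mathfrak{T}}_s(\Diamond\,\mathit{wgt} \ge n) \le c\,\rho^{n}$ with $\rho < 1$; this simultaneously yields almost-sure termination and an integrable dominating function, closing the argument.
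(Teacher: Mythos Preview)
Your proposal is correct, but it follows a genuinely different route from the paper. The paper proves the proposition by showing that the associated Bellman operator $T^R$ is a contraction with respect to a weighted maximum norm (the weights coming from the layered partition $S_0,\dots,S_k$ of Veinott/Tseng), then invokes the Banach fixed-point theorem to obtain a unique fixed point, argues that any LP-feasible $z$ satisfies $z\ge T^R(z)$ and hence the LP optimum coincides with that fixed point, and finally checks that the value vector $(\mathit{PE}^{\sup}_s[r{+}R])$ is a fixed point. Your argument instead verifies feasibility directly from the saturation-point property and the one-step optimality equation, and proves coordinatewise minimality by a supermartingale/optional-stopping argument along the run of an arbitrary scheduler $\mathfrak{T}$ that switches to $\mathfrak{Max}$ above $\mathfrak{p}_R$.

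Both routes are standard and both lean on the saturation point. The contraction proof additionally yields geometric convergence of value iteration for free and avoids any limit-interchange subtleties, at the price of setting up the weighted-norm machinery. Your probabilistic argument is more elementary and makes the ``LP encodes domination of every scheduler'' intuition explicit, but you do need the uniform-integrability step you flag at the end. In the present non-negative setting this is easy to justify directly from assumption~2 (the only end components are $\{\mathit{goal}\}$ and $\{\fail\}$): absorption happens within $|S|$ steps with probability at least $\delta^{|S|}$, so $\Pr(\tau>n|S|)\le(1-\delta^{|S|})^n$, and since $w_n\le W\tau$ the process is dominated by an integrable variable. Your appeal to the general ``no positively weight-divergent end components'' criterion is more than is needed here. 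One minor remark: the fact that it suffices to dominate schedulers that agree with $\mathfrak{Max}$ above $\mathfrak{p}_R$ really comes from Proposition~\ref{SaturationPoint} (that $\mathfrak{p}_R$ is a saturation point), not from Corollary~\ref{FinitelyManySchedulers} (which uses the cruder bound $\mathfrak{q}$).
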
 

We prove the unique solvability of this linear program in detail. Linear programs claimed to be uniquely solvable below can be treated analogously.

\begin{proof} 
Following a standard approach  by  Veinott \cite{Veinott1969}, 
we want to show that the linear program is uniquely solvable by defining a contraction mapping with respect to a weighted maximum norm whose fixed point is the optimal solution. 
The definition of the weights we use is made explicit by Tseng \cite{tseng1990}.

We define a function $T^R:\mathbb{R}^{S^\prime\times\{0,\dots,B_R\}} \to \mathbb{R}^{S^\prime\times\{0,\dots,B_R\}}$. For $s\in S$ and $r\leq B_R$, let 
\[ T^R_{s,r}(x) := \begin{cases}
\max\limits_{\alpha\in Act(s)} \Big( \sum_{t\in S^\prime} P(s,\alpha,t) \cdot x_{t,r+wgt(s,\alpha)} & \\
\hspace{1.5cm} + P(s,\alpha,\mathit{goal}) \cdot (r+R+wgt(s,\alpha)) \Big) & \text{if } r<\mathfrak{p}_R,\\
p_s^{\max} \cdot (r+R) + E^\mathfrak{Max}_s & \text{otherwise.}
\end{cases}\]

In order to define a suitable weighted maximum norm, we begin by recursively defining the following partition $S_0,\dots , S_k$ of $S$:
\begin{eqnarray*}
S_0 &:=& \{\mathit{goal}, fail\}\\
S_{i+1} &:=& \{t\in S\setminus (S_0 \cup \dots \cup S_i) | \sum_{q\in S_0 \cup \dots \cup S_i} P(t,\alpha,q) >0 \text{ for all }\alpha \in Act(t)\}. 
\end{eqnarray*}
If at some point $S_0,\dots,S_i$ is not yet a partition of $S$, then $S_{i+1}$ is non-empty. If it was empty, then $T:=S\setminus(S_0 \cup \dots \cup S_i)$ 
would contain an end component, as for each $t\in T$ there would be an action $\alpha$ such that $P(t,\alpha,T)=1$. 
So, the recursive definition produces a partition $S=S_0\cup \dots \cup S_k$ in finitely many steps.

Now, for each $s\in S_i$ let $w_s:= 1 - \delta^{2i}$ where \[\delta := \mathrm{min} \{P(s,\alpha,t) | s, \alpha, t \text{ s.t. } P(s,\alpha,t)>0\}  .\] 
These $w_s$ will serve as weights for our weighted supremum norm.

We use the following fact \cite[Lemma 3]{tseng1990}:
Let $\gamma:= \frac{1-\delta^{2k-1}}{1-\delta^{2k}}<1$.
For all $s\not\in\{gaol, fail\}$ and all $\alpha\in Act(s)$, we have 
\[ \sum_{t\in S^\prime} \frac{P(s,\alpha,t)\cdot w_t}{w_s}\leq \gamma.
\]

We define the following norm on $\mathbb{R}^{S^\prime\times\{0,\dots,B_R\}}$:
\[\|x\|_w := \max\limits_{s\in S^\prime,0\leq r\leq B_R} \frac{|x_{s,r}|}{w_s}.\]

We show that $T^R$ is a contraction with respect to this norm: For $x,y\in \mathbb{R}^{S^\prime\times\{0,\dots,B_R\}}$, we claim
\[ \|T^R(x) - T^R(y)\|_w \leq \gamma\cdot\|x-y\|_w. \]

Let $s\in S^\prime$ and $r<\mathfrak{p}_R$.
\begin{eqnarray*}
& & T^R_{s,r}(x)- T^R_{s,r}(y) \\
& = &\max\limits_{\alpha\in Act(s)} \Big( \sum_{t\in S^\prime} P(s,\alpha,t) \cdot x_{t,r+wgt(s,\alpha)}+ P(s,\alpha,\mathit{goal}) \cdot (r+R+wgt(s,\alpha)) \Big)\\
 & &-\max_{\beta\in Act(s)}  \Big(\sum_{q\in S^\prime} P(s,\beta,q) \cdot y_{q,r+wgt(s,\beta)}+ P(s,\beta,\mathit{goal} )\cdot (r+R+wgt(s,\beta)) \Big)\\
& \leq &  \max_{\alpha\in Act(s)} \sum_{t\in S^\prime} P(s,\alpha,t) \cdot (x_{t,r+wgt(s,\alpha)} -  y_{t,r+wgt(s,\alpha)})\\
& = &  \max_{\alpha\in Act(s)} \sum_{t\in S^\prime} (P(s,\alpha,t)\cdot w_t) \cdot \frac{(x_{t,r+wgt(s,\alpha)} -  y_{t,r+wgt(s,\alpha)})}{w_t}\\
& \leq &  \max_{\alpha\in Act(s)} \sum_{t\in S^\prime} (P(s,\alpha,t)\cdot w_t) \cdot \max_{q\in S}\frac{ (x_{q,r+wgt(s,\alpha)} -  y_{q,r+wgt(s,\alpha)})}{w_q}\\
& \leq & \gamma \cdot w_s \cdot \|x - y\|_w.
\end{eqnarray*}
By symmetry, we obtain the same inequality for $T^R_{s,r}(y)- T^R_{s,r}(x) $. Dividing the inequality by $w_s$ concludes the proof that $T^R$ is a contraction. 
By the Banach fixed-point theorem, $T^R$ hence has a unique fixed point.

That this fixed point is the unique solution of the linear program is now easy to see. The map $T^R$ is defined such that any  $z$ satisfying the constraints of the linear program satisfies $z\geq T^R(z)$. 
But, if there are coordinates $s,r$ such that $z_{s,r}>T^R_{s,r}(z)$ then replacing $z_{s,r}$ by $T^R_{s,r}(z)$ leads to $z^\prime$ still satisfying the constraints and resulting in a smaller value of $\sum_{s,r} z^\prime_{s,r}$. 
So, the unique fixed point of $T^R$ is the unique optimal solution of the linear program.

Finally, we can easily check that $(\mathbb{E}^{\max}_s(\oplus^{r+R} \mathit{goal}))_{s\in S, 0\leq r\leq B_R}$ is indeed a fixed point of $T^R$ as we know that $\mathfrak{p}_R$ is a saturation point.
\end{proof}


\section{Existence of Optimal Schedulers} \label{app:optimalWD}


We provide the proofs to Section \ref{sec:optimalWD} here.

Recall that we consider an MDP $\mathcal{M}$ with finite maximal partial expectation. 
In particular, we assume that all states are reachable from $s_{\mathit{init}}$ and that $\mathit{goal}$ is reachable from all states except $fail$. 
Furthermore, there are no positively weight-divergent end components and so we can assume that all end components have negative maximal expected mean payoff (see Proposition \ref{prop:removingzeroECs}).

We split the proof of Proposition \ref{prop:supWD} into the following two propositions:

\begin{restatable}{myprop} {WRscheduler}
Let $\mathcal{M}$ be an MDP with $\mathit{PE}^{\sup}_{s_{\mathit{init}}}<\infty$.
For each scheduler $\mathfrak{S}\in HR^\mathcal{M}$, there is a scheduler $\mathfrak{T}\in WR^\mathcal{M}$ such that $\mathit{PE}^\mathfrak{S}=\mathit{PE}^\mathfrak{T}$ and $\Pr^\mathfrak{S}_{s_{\mathit{init}}}(\Diamond \mathit{goal})=\Pr^\mathfrak{T}_{\sinit} (\Diamond \mathit{goal})$.
\end{restatable}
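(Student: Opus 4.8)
The plan is to pass to the weight-unfolded chain whose states are the configurations $(s,w)\in S\times\mathbb{Z}$ and to show that both the partial expectation and the goal-reachability probability of a scheduler are determined \emph{solely} by its occupation measure on configurations. Replacing $\mathfrak{S}$ by the weight-based scheduler that reads off the local action frequencies of $\mathfrak{S}$ then automatically preserves both quantities.

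First I would invoke Proposition \ref{prop:removingzeroECs} to assume that every end component has strictly negative maximal expected mean payoff. Fixing $\mathfrak{S}\in\mathit{HR}^\mathcal{M}$, let $s(n),w(n),\alpha(n)$ denote the state, accumulated weight, and chosen action after $n$ steps, and for $s\notin\{\mathit{goal},\mathit{fail}\}$ define the expected visit counts
\[
\theta_{s,w}:=\sum_{n\geq 0}\Pr^{\mathfrak{S}}_{\sinit}\!\bigl(s(n)=s,\,w(n)=w\bigr),
\qquad
\theta_{s,w,\alpha}:=\sum_{n\geq 0}\Pr^{\mathfrak{S}}_{\sinit}\!\bigl(s(n)=s,\,w(n)=w,\,\alpha(n)=\alpha\bigr).
\]
The crucial point is that these are finite: in the MEC-quotient $\{\mathit{goal},\mathit{fail}\}$ is reached with a geometrically decaying escape probability, while inside an end component the strictly negative drift of the accumulated weight (the super-martingale behind Corollary \ref{cor:boundEC}) makes the configuration chain transient, so each fixed level is visited only finitely often in expectation; the same geometric tail bounds give absolute summability of all series below. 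I would then express both targets as linear functionals of $(\theta_{s,w,\alpha})$. Since $\mathit{goal}$ is absorbing with a zero-weight self-loop, every path reaching $\mathit{goal}$ does so through exactly one transition from a configuration $(s,w)$ via some $\alpha$, so
\[
\Pr^{\mathfrak{S}}_{\sinit}(\Diamond\,\mathit{goal})=\sum_{s,w,\alpha}\theta_{s,w,\alpha}\cdot P(s,\alpha,\mathit{goal}),
\]
and, because the accumulated weight upon that entry is precisely $w+\wgt(s,\alpha)$ while paths missing $\mathit{goal}$ contribute $0$,
\[
\mathit{PE}^{\mathfrak{S}}_{\sinit}=\sum_{s,w,\alpha}\theta_{s,w,\alpha}\cdot P(s,\alpha,\mathit{goal})\cdot\bigl(w+\wgt(s,\alpha)\bigr).
\]
The interchange of sum and expectation behind these identities is legitimate by the absolute summability just noted.

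I would then define $\mathfrak{T}\in\mathit{WR}^\mathcal{M}$ by $\mathfrak{T}(\pi)(\alpha):=\theta_{s,w,\alpha}/\theta_{s,w}$ for $s=\last(\pi)$, $w=\wgt(\pi)$ whenever $\theta_{s,w}>0$ (and arbitrarily otherwise); this is weight-based by construction. Writing $\eta$ for the occupation measure of $\mathfrak{T}$ and $Q$ for the configuration-transition matrix induced by $\mathfrak{T}$, both $\theta$ and $\eta$ are summable solutions of the flow-balance system $x=\mathds{1}_{(\sinit,0)}+xQ$: the balance holds for $\theta$ because it holds step-by-step in any Markov evolution and its action split $\theta_{s,w,\alpha}/\theta_{s,w}$ is by definition the distribution $\mathfrak{T}$ uses, and it holds for $\eta$ by definition of $\mathfrak{T}$. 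Transience of the $\mathfrak{T}$-chain forces this system to have a unique summable solution, whence $\eta=\theta$; substituting into the two displayed identities yields $\mathit{PE}^{\mathfrak{T}}=\mathit{PE}^{\mathfrak{S}}$ and $\Pr^{\mathfrak{T}}_{\sinit}(\Diamond\,\mathit{goal})=\Pr^{\mathfrak{S}}_{\sinit}(\Diamond\,\mathit{goal})$, as required.

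The step I expect to be the main obstacle is the finiteness/transience layer: proving that the occupation measures are finite and, above all, that the flow-balance system admits a \emph{unique} summable solution. Both rest on the strictly negative drift of the accumulated weight inside end components, which renders the unfolded configuration chain transient and simultaneously justifies every interchange of limits; once this transience is secured, the coincidence of the occupation measures—and hence of the two functionals—is routine.
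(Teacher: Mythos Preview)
Your approach is essentially the same as the paper's: both pass to the weight-unfolded configuration space, define the occupation measures $\theta_{s,w}$ and $\theta_{s,w,\alpha}$, use the negative-drift assumption (via Proposition~\ref{prop:removingzeroECs}) to secure their finiteness, define $\mathfrak{T}$ by the local action frequencies $\theta_{s,w,\alpha}/\theta_{s,w}$, and then argue that $\mathfrak{T}$'s occupation measure solves the same flow-balance system and is therefore equal to $\theta$. The paper is terser on the uniqueness step and on extracting the conclusion from equal occupation measures, whereas you spell out the last-entry-into-$\mathit{goal}$ representations of $\mathit{PE}$ and of the reachability probability; this is a harmless elaboration, not a different route.
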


\begin{proof}
Let $\mathfrak{S}\in HR$. For each state-weight pair $(s,w)$ with $s\in S\setminus \{\mathit{goal}, fail\}$ and $w\in\mathbb{Z}$, we let $\theta^\mathfrak{S}_{s,w}$ 
be the expected number of times that $s$ is reached with accumulated weight $w$ under $\mathfrak{S}$, and we let $\theta^\mathfrak{S}_{s,w,\alpha}$ be the expected number of times that $\alpha$ is chosen in this situation by $\mathfrak{S}$. 
We have that
\[ \theta^\mathfrak{S}_{s,w} = \sum\limits_{\substack{\pi \text{ finite path,} \\ last(\pi)= s, \\ wgt(\pi)=w }} \Pr^\mathfrak{S}_{s_{\mathit{init}}} (\pi) 
\text{ and } 
\theta^\mathfrak{S}_{s,w,\alpha} = \sum\limits_{\substack{\pi \text{ finite path,} \\ last(\pi)= s, \\ wgt(\pi)=w }} \Pr^\mathfrak{S}_{s_{\mathit{init}}} (\pi) \cdot \mathfrak{S}(\pi)(\alpha).\]
Note that $\theta^\mathfrak{S}_{s,w}$ is finite for all $s\in S\setminus \{\mathit{goal}, fail\}$, $w\in \mathbb{Z}$ as all end components have negative maximal expected mean payoff. 

Now, we define a weight-based deterministic scheduler $\mathfrak{T}$ by 
\[ \mathfrak{T}(s,w)(\alpha) := \begin{cases}
\theta^\mathfrak{S}_{s,w,\alpha} / \theta^\mathfrak{S}_{s,w} & \text{, if }\theta^\mathfrak{S}_{s,w}>0, \\
\text{arbitrary }& \text{, otherwise.} 
\end{cases}\]
Clearly, only state-weight-pairs $(s,w)$ which are reachable under $\mathfrak{S}$ are reachable under $\mathfrak{T}$. Further, $\mathfrak{T}$ is well-defined as $\sum_{\alpha\in Act(s)} \mathfrak{T}(s,w)(\alpha) =1$ for all reachable $(s,w)$. 

For each state-weight-pair $(s,w)$ reachable under $\mathfrak{S}$ and hence under $\mathfrak{T}$ let $\theta^{\mathfrak{T}}_{s,w}$ be the expected number of times that $(s,w)$ is reached under $\mathfrak{T}$.  
Then, the collection of all $\theta^{\mathfrak{T}}_{s,w}$ is uniquely determined by the following set of equations: For all $s,w$,
\[ \theta^{\mathfrak{T}}_{s,w} = \delta_{s,w} + \sum_{t\in S, \alpha\in Act(t)} P(t,\alpha,s)\cdot x_{t,w- wgt(t,\alpha)}\cdot \mathfrak{T}(t,w-wgt(t,\alpha))(\alpha)\]
where $\delta_{s,w}=1$ iff $s=s_{\mathit{init}}$ and $w=0$, and  $\delta_{s,w}=0$ otherwise.
By spelling out the last steps of the paths in the definition of $\theta_{s,w}^\mathfrak{S}$, one can see that $\theta_{s,w}^\mathfrak{S}$ provides the solution to this set of equations and hence  $\theta_{s,w}^\mathfrak{S}= \theta_{s,w}^\mathfrak{T}$ for all $(s,w)$. 
By the definition of $\mathfrak{T}$, the expected number of times action $\alpha$ is chosen in $(s,w)$ under $\mathfrak{T}$ is hence  $\theta_{s,w,\alpha}^\mathfrak{S}$ as well and the claim follows.
\end{proof}

\begin{restatable} {myprop}{WDscheduler}
Let $\mathcal{M}$ be an MDP with $\mathit{PE}^{\sup}_{s_{\mathit{init}}}<\infty$. Then, we have
\[\sup_{\mathfrak{S}\in \mathit{WR}^{\mathcal{M}}} \mathit{PE}^\mathfrak{S}_{s_{\mathit{init}}} = \sup_{\mathfrak{S}\in \mathit{WD}^{\mathcal{M}}} \mathit{PE}^\mathfrak{S}_{s_{\mathit{init}}}.\]
\end{restatable}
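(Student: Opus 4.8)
The inequality $\sup_{\mathfrak{S}\in \mathit{WD}^{\mathcal M}}\mathit{PE}^{\mathfrak{S}}_{s_{\mathit{init}}}\le \sup_{\mathfrak{S}\in \mathit{WR}^{\mathcal M}}\mathit{PE}^{\mathfrak{S}}_{s_{\mathit{init}}}$ is immediate since $\mathit{WD}^{\mathcal M}\subseteq \mathit{WR}^{\mathcal M}$. For the converse it suffices to fix an arbitrary $\mathfrak{T}\in \mathit{WR}^{\mathcal M}$ and construct $\mathfrak{T}^\prime\in \mathit{WD}^{\mathcal M}$ with $\mathit{PE}^{\mathfrak{T}^\prime}_{s_{\mathit{init}}}\ge \mathit{PE}^{\mathfrak{T}}_{s_{\mathit{init}}}$; taking the supremum over $\mathfrak{T}$ then yields the claim. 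By Proposition \ref{prop:removingzeroECs} we may assume that every end component has negative maximal expected mean payoff. A useful consequence I would record first is that there are \emph{no $0$-weight cycles}: any cycle spans a strongly connected sub-MDP, hence an end component on which repeating the cycle is a valid scheduler of mean payoff $\wgt(\text{cycle})/\text{length}$, so a $0$-weight cycle would produce an end component of maximal mean payoff $\geq 0$. Therefore no state--weight pair $(s,w)$ is ever revisited along a path. Since $\mathfrak{T}$ is weight-based, its residual behaviour at a reachable pair depends only on $(s,w)$, which lets me define the finite value function $f(s,w):=\mathit{PE}^{\mathfrak{T}}_s[w]$, with $f(\goal,w)=w$ and $f(\fail,w)=0$.

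Next I would set up a policy-improvement (local exchange) step. For $s\notin\{\goal,\fail\}$ put
\[ Q(s,w,\alpha):=P(s,\alpha,\goal)\cdot\bigl(w+\wgt(s,\alpha)\bigr)+\sum_{t\notin\{\goal,\fail\}}P(s,\alpha,t)\cdot f\bigl(t,\,w+\wgt(s,\alpha)\bigr). \]
Conditioning on the first action and successor gives the one-step identity $f(s,w)=\sum_{\alpha}\mathfrak{T}(s,w)(\alpha)\cdot Q(s,w,\alpha)$, so $f(s,w)$ is a convex combination of the $Q(s,w,\alpha)$ over $\alpha$ in the support of $\mathfrak{T}(s,w)$. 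Hence some action $\alpha^{\ast}$ satisfies $Q(s,w,\alpha^{\ast})\ge f(s,w)$, and I define $\mathfrak{T}^\prime(s,w):=\alpha^{\ast}$ for every reachable pair. By construction $f$ is sub-invariant for $\mathfrak{T}^\prime$. Writing $f^\prime(s,w):=\mathit{PE}^{\mathfrak{T}^\prime}_s[w]$ for the value of the new scheduler and $h:=f-f^\prime$ (so $h(\goal,\cdot)=h(\fail,\cdot)=0$), subtracting the defining equation for $f^\prime$ from the sub-invariance inequality makes the $\goal$-transition terms cancel and leaves, with $\alpha=\mathfrak{T}^\prime(s,w)$,
\[ h(s,w)\le\sum_{t\notin\{\goal,\fail\}}P(s,\alpha,t)\cdot h\bigl(t,\,w+\wgt(s,\alpha)\bigr). \]
Unrolling this $n$ times along $\mathfrak{T}^\prime$-runs yields $h(s_{\mathit{init}},0)\le \mathbb{E}^{\mathfrak{T}^\prime}_{s_{\mathit{init}}}\bigl[\mathbf{1}_{\{\text{not absorbed after }n\}}\cdot h(s_n,w_n)\bigr]$, and it remains to show this residual tends to $0$, which gives $f(s_{\mathit{init}},0)\le f^\prime(s_{\mathit{init}},0)$, the desired inequality.

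The crux, and the step I expect to be most delicate, is exactly the vanishing of this residual. Two effects must be controlled at once: a $\mathfrak{T}^\prime$-run may fail to be absorbed (it can loiter forever inside a negative-mean-payoff end component), and the accumulated weight $w_n$ can drift, so that the naive bound $|h(s,w)|\le |f(s,w)|+|f^\prime(s,w)|$ is not uniformly bounded. The finiteness hypothesis is precisely what tames this: negative maximal mean payoff forces $w_n\to-\infty$ on non-absorbed runs, while the exponential tail estimates of Corollary \ref{cor:boundEC} together with the resulting growth bounds $|f(s,w)|,|f^\prime(s,w)|\le |w|\cdot p^{\max}_s+\mathit{PE}^{ub}$ show that both $\mathbb{E}^{\mathfrak{T}^\prime}[\mathbf{1}_{\{\text{not absorbed after }n\}}\cdot|w_n|]$ and the probability of surviving $n$ steps with bounded weight go to $0$; combining them gives the required convergence. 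An alternative that sidesteps the infinite-horizon verification is to note that, for fixed choices at all other pairs, $\mathit{PE}^{\mathfrak{T}}_{s_{\mathit{init}}}$ is an \emph{affine} function of the single distribution $\mathfrak{T}(s,w)$ (because $(s,w)$ is never revisited) and is therefore maximised at a Dirac distribution; however, turning this per-pair observation into a simultaneous derandomisation of all infinitely many pairs requires the same tail control to pass to the limit, so the estimate above is unavoidable.
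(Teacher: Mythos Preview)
Your overall strategy---replace each randomised choice by a best single action and verify the resulting deterministic scheduler is no worse---is exactly the paper's; the paper derandomises one state--weight pair at a time and iterates, whereas you derandomise all pairs at once and then verify via a policy-improvement unrolling. Two concrete gaps remain. First, the assertion that there are no $0$-weight cycles is wrong: a cycle $s_0\,\alpha_0\,s_1\cdots s_k{=}s_0$ is not in general an end component, because an end component $(E,\ActEC)$ must be \emph{closed}, i.e.\ every successor of every included state--action pair must lie in $E$. A cycle whose actions leak probability outside the cycle violates this, and Proposition~\ref{prop:removingzeroECs} says nothing about such cycles. Hence state--weight pairs can be revisited, which kills the ``affine in a single distribution'' alternative you sketch at the end (your main unrolling argument does not actually use non-revisitation, so there the error is harmless but should simply be deleted).

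Second---and this does affect the main argument---the bound $|f(s,w)|,\,|f'(s,w)|\le |w|\cdot p^{\max}_s+\mathit{PE}^{ub}$ is only an \emph{upper} bound on $f$ and $f'$, not on their absolute values: $\mathit{PE}^{ub}$ caps partial expectations from above, but for a fixed $\mathfrak{T}\in\mathit{WR}^{\mathcal M}$ the value $f(s,w)=\mathit{PE}^{\mathfrak{T}\uparrow w}_s[w]$ can be arbitrarily negative (even $-\infty$) at reachable pairs, and likewise for $f'$. Consequently $h=f-f'$ is not controlled from above by the quantities you cite, and the vanishing of the residual $\mathbb{E}^{\mathfrak{T}'}_{s_{\mathit{init}}}\bigl[\mathbf{1}_{\{\text{not absorbed after }n\}}\,h(s_n,w_n)\bigr]$ is not established. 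The paper avoids this verification altogether by exchanging one pair at a time---each single exchange is a clean inequality between two finite numbers, no tail estimate needed---and then iterating over the countably many pairs; the passage to a fully deterministic limit scheduler is left informal there as well, but the per-step argument is sound.
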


\begin{proof}
Let $\mathfrak{S}$ be a weight-based randomized scheduler and let $(s,w)$ be a state-weight-pair  such that $\mathfrak{S}(s,w)$ is not a Dirac distribution.
We define $\mathfrak{S}\uparrow w$ by
\[\mathfrak{S}\uparrow w \,(t,v) := \mathfrak{S}(t,v+w).\] 
Now, 
\begin{eqnarray*}
&&\mathit{PE}^{\mathfrak{S}\uparrow w}_{s} [w]  \\
=&& \sum_{\alpha\in Act(s)} \mathfrak{S}(s,w)(\alpha) \cdot \sum_{t\in S} P(s,\alpha,t) \cdot \mathit{PE}^{\mathfrak{S}\uparrow(w+wgt(s,\alpha))}_{t} [w+wgt(s,\alpha)].
\end{eqnarray*}
But then there is an action $\beta\in Act(last(\pi))$ such that 
\begin{eqnarray*}
&&\sum_{t\in S} P(s,\beta,t) \cdot \mathit{PE}^{\mathfrak{S}\uparrow(w+wgt(s,\beta))}_{t} [w+wgt(s,\beta)] \\
\geq &&\sum_{\alpha\in Act(s)} \mathfrak{S}(s,w)(\alpha) \cdot \sum_{t\in S} P(s,\alpha,t) \cdot \mathit{PE}^{\mathfrak{S}\uparrow(w+wgt(s,\alpha))}_{t} [w+wgt(s,\alpha)]
\end{eqnarray*}
because $\mathfrak{S}(s,w)$ is a probability distribution. 
We conclude that the scheduler $\mathfrak{S}^\prime$ which agrees with $\mathfrak{S}$ on all state-weight-pairs except $(s,w)$ and assigns probability $1$ to $\beta$ for $(s,w)$ satisfies $\mathit{PE}^{\mathfrak{S}^\prime}_{s_{\mathit{init}}} \geq \mathit{PE}^{\mathfrak{S}}_{s_{\mathit{init}}}$. In this way, we can replace all probability distributions that $\mathfrak{S}$ chooses by Dirac distributions and generate a sequence of schedulers with non-decreasing partial expectations. Ultimately, we obtain a weight-based deterministic scheduler $\mathfrak{T}$ with $\mathit{PE}^{\mathfrak{T}}_{s_{\mathit{init}}}\geq\mathit{PE}^{\mathfrak{S}}_{s_{\mathit{init}}}$.
\end{proof}

\begin{mydef}[Metric on weight-based deterministic schedulers]
Given an MDP $\mathcal{M}$ with arbitrary integer weights, we define the following metric $d^\mathcal{M}$ on the set of weight-based deterministic schedulers, i.e. on the set of functions from $S\times \mathbb{Z} \to Act$:
For two such schedulers $\mathfrak{S}$ and $\mathfrak{T}$, we let \[ d^\mathcal{M}(\mathfrak{S},\mathfrak{T}) := 2^{-R} \] where $R$ is the greatest natural number such that \[\mathfrak{S}\upharpoonright S\times \{-(R-1), \dots , R-1\} = \mathfrak{T}\upharpoonright S\times \{-(R-1), \dots , R-1\}\] or $\infty$ if there is no greatest such natural number.
\end{mydef}

\begin{mylem}
The metric space $(Act^{S\times \mathbb{Z}},d^\mathcal{M})$ is compact.
\end{mylem}
\begin{proof}
We can identify $Act^{S\times \mathbb{Z}}$ with $(Act^{S\times \{+,-\}})^\mathbb{N}$. Then it is easy to see that the metric $d^\mathcal{M}$ induces the usual tree topology on this finitely branching tree of height $\omega$. Therefore, the space is homeomorphic to the Cantor space $2^\omega$ and hence compact.
\end{proof}

\begin{mylem}[Upper Semi-Continuity of Partial Expectations]
Assume that $\mathit{PE}^{\sup}_{s_{\mathit{init}}}$ is finite in $\mathcal{M}$. Then, the function \[\mathit{PE}: (\mathit{WD}, d^{\mathit{WD}}) \to (\mathbb{R}_\infty,d^{euclid})\] assigning $\mathit{PE}^\mathfrak{S}_{s_{\mathit{init}}}$ to a weight-based deterministic scheduler $\mathfrak{S}$ is upper semi-continuous. 

\end{mylem}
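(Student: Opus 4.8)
The plan is to prove upper semi-continuity pointwise: for a fixed $\mathfrak{S}\in\mathit{WD}$ and $\epsilon>0$ I will produce an $N\in\mathbb{N}$ so that every $\mathfrak{T}$ with $d^\mathcal{M}(\mathfrak{S},\mathfrak{T})\le 2^{-N}$ — equivalently, every $\mathfrak{T}$ agreeing with $\mathfrak{S}$ on all pairs $(s,w)$ with $|w|<N$ — satisfies $\mathit{PE}^\mathfrak{T}_{s_{\mathit{init}}}\le\mathit{PE}^\mathfrak{S}_{s_{\mathit{init}}}+\epsilon$. By Proposition \ref{prop:removingzeroECs} I may assume every end component has negative maximal mean payoff, so that the uniform tail estimates of Section \ref{sec:approximation} are available. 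The organizing idea is to split each $\goal$-reaching run according to the first moment its accumulated weight leaves the symmetric window $[-N,N]$.

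Concretely I would write $\mathit{PE}^\mathfrak{U}_{s_{\mathit{init}}}=A_N+B_N^\mathfrak{U}$ for $\mathfrak{U}\in\{\mathfrak{S},\mathfrak{T}\}$, where $A_N$ is the expected value of $\oplus^b\goal$ over runs that reach $\goal$ while the accumulated weight stays inside $[-N,N]$, and $B_N^\mathfrak{U}$ collects the runs that leave the window before reaching $\goal$. Since $\mathfrak{S}$ and $\mathfrak{T}$ agree on the window, they assign the same probabilities and the same weights to every ``inside'' run, so the term $A_N$ is literally identical for both; the whole task therefore reduces to showing $B_N^\mathfrak{T}\le B_N^\mathfrak{S}+\epsilon$ for $N$ large. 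For the positive side of the window I would use that a run leaving it first reaches weight $>N$, and by Corollary \ref{cor:boundEC} together with the polynomial-time upper-bound proposition the probability of ever reaching weight $\ge N$ decays like $\lambda_\mathcal{M}^{N/(c_\mathcal{M}+W)}$ uniformly over all schedulers. As the residual partial expectation from such an exit point is bounded in absolute value by $\mathit{PE}^{ub}$ plus the exit weight, the positive-exit part of $B_N^\mathfrak{T}$ is at most $(\mathit{PE}^{ub}+N+|b|)\,\lambda_\mathcal{M}^{N/(c_\mathcal{M}+W)}$, which tends to $0$ uniformly in $\mathfrak{T}$; hence this part lies below $\epsilon/2$ once $N$ is large.

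The delicate part — and the step I expect to be the main obstacle — is the negative side. A run that exits below $-N$ and then reaches $\goal$ contributes $w_{\goal}+b$ with $w_{\goal}$ possibly very negative, so its contribution is either nonpositive (when $w_{\goal}\le -b$, which is harmless for an \emph{upper} bound) or positive only if the weight climbs back from below $-N$ to above $-b$. Such a climb is itself a positive excursion of height at least $N-|b|$ and is therefore geometrically improbable by the same Blackwell estimate, with a correspondingly small bounded contribution. This yields $B_N^\mathfrak{T}\le\epsilon/2$ for \emph{all} admissible $\mathfrak{T}$ once $N$ is large. It then remains to bound $B_N^\mathfrak{S}$ from below, and here I would invoke that $\mathit{PE}^\mathfrak{S}_{s_{\mathit{init}}}$ is a finite real number: the required sub-claim is that for a weight-based \emph{deterministic} scheduler in a finite MDP the variable $\oplus^b\goal$ is absolutely integrable, because each state fixes its transition probabilities and hence the weight-at-goal has a geometrically decaying negative tail. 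Granting this, $A_N\to\mathit{PE}^\mathfrak{S}_{s_{\mathit{init}}}$ by dominated convergence, so $B_N^\mathfrak{S}=\mathit{PE}^\mathfrak{S}_{s_{\mathit{init}}}-A_N\to 0$ and in particular $B_N^\mathfrak{S}\ge-\epsilon/2$ for large $N$.

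Combining the two estimates gives $B_N^\mathfrak{T}\le\epsilon/2\le B_N^\mathfrak{S}+\epsilon$, which is the desired inequality. The conceptual crux is that agreement on a large symmetric weight window pins the behaviour down \emph{exactly} until the weight escapes, and that both escape directions are tamed by the negative-drift tail bounds of Section \ref{sec:approximation}; the one genuinely new point is ruling out a heavy negative tail of the weight-at-goal, for which the restriction to deterministic weight-based schedulers (the domain of $\mathit{PE}$) is exactly what is needed.
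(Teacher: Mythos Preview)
Your overall decomposition --- splitting runs by the first exit from a symmetric weight window and controlling the exit contributions with the Blackwell-type tail bounds of Section~\ref{sec:approximation} --- is exactly the paper's strategy, and your treatment of the positive-exit part and of the \emph{upper} bound on $B_N^\mathfrak{T}$ (discarding nonpositive contributions, bounding the positive ones via the climb-back probability) is sound.

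The gap is in the step where you bound $B_N^\mathfrak{S}$ from below. Your ``required sub-claim'' that $\oplus^b\goal$ is absolutely integrable for every weight-based deterministic scheduler is false, and the justification you give --- ``each state fixes its transition probabilities'' --- confuses weight-based with memoryless schedulers: a WD scheduler's action at $s$ depends on the accumulated weight, so the induced process is a Markov chain on the \emph{infinite} state space $S\times\mathbb{Z}$, not on $S$. A concrete counterexample: take states $s_0,s_1,\goal$ with $s_0\xrightarrow{\alpha\,|\,-1}s_1$ deterministically, and from $s_1$ either $\beta$ (weight $0$; to $\goal$ and $s_0$ each with probability~$1/2$) or $\gamma$ (weight $0$; to $s_0$ with probability~$1$). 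The only nontrivial end component has mean payoff $-1/2$. Let $\mathfrak{S}$ choose $\beta$ at $(s_1,w)$ iff $-w$ is a power of~$2$, else $\gamma$. Then $\goal$ is reached at weight $-2^j$ with probability $2^{-(j+1)}$, so $\mathit{PE}^\mathfrak{S}_{s_0}=-\sum_{j\ge 0}2^j\cdot 2^{-(j+1)}=-\infty$. Since the codomain of $\mathit{PE}$ is $\mathbb{R}_\infty$, upper semi-continuity at such a point means showing $\mathit{PE}^\mathfrak{T}<b$ for every real $b$ on a neighbourhood, and this case needs a separate argument. The paper treats it explicitly: when $\mathit{PE}^\mathfrak{S}=-\infty$ one shows $A_N\to-\infty$ (monotone convergence on the negative part suffices), while your uniform bound $B_N^\mathfrak{T}\le\epsilon/2$ still applies, so $\mathit{PE}^\mathfrak{T}=A_N+B_N^\mathfrak{T}$ can be pushed below any $b$. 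When $\mathit{PE}^\mathfrak{S}>-\infty$, absolute integrability \emph{does} hold --- the positive part by the uniform geometric tail and the negative part precisely by finiteness of $\mathit{PE}^\mathfrak{S}$ --- so your dominated-convergence step is salvageable there, just not with the justification you gave.
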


\begin{proof}
Let $\mathfrak{S}$ be a $WD$-scheduler with $\mathit{PE}^\mathfrak{S}_{s_{\mathit{init}}}>-\infty$. Given $\epsilon>0$, we will define a natural number $R$ such that any $WD$-scheduler $\mathfrak{T}$ with $\mathfrak{T}\upharpoonright S\times [-R,R] = \mathfrak{S}\upharpoonright S\times [-R,R]$ satisfies $\mathit{PE}^\mathfrak{T}_{s_{\mathit{init}}} < \mathit{PE}^\mathfrak{S}_{s_{\mathit{init}}} + 4 \epsilon$.

First, we observe that for each state $s$ there is a natural number $W_s$ and a probability $p_s<1$ such that $\Pr^{\max}_s (\Diamond wgt>W_s)\leq p_s$ because $\mathit{PE}^{\sup}_s<\infty$. Let $W:= \max_s W_s + \max_{s,\alpha} |wgt(s,\alpha)|$ and $p:= \max_s p_s$. Then, for each state $s$ and each natural number $n$, we have that $\Pr^{\max}_s(\Diamond wgt > n\cdot W)\leq p^n$.

As $\mathit{PE}^\mathfrak{S}_{s_{\mathit{init}}}>-\infty$, we know that 
\[\lim_{l\to\infty} \sum_{n=l}^\infty \Pr^\mathfrak{S}_{s_{\mathit{init}}} (\Diamond^{=-n} \mathit{goal})\cdot n =0.
\]
We define $l_\epsilon^\mathfrak{S}$ to be the smallest natural number such that 
\[\sum_{n=l_\epsilon^\mathfrak{S}}^\infty \Pr^\mathfrak{S}_{s_{\mathit{init}}} (\Diamond^{=-n} \mathit{goal})\cdot n <\epsilon.\]
Let $k$ be the smallest natural number such that $p^k\cdot(l^\mathfrak{S}_\epsilon+W/(1-p)^2) < \epsilon$. Further define $R^- :=l^\mathfrak{S}_\epsilon+k\cdot W$.
Let 
\begin{eqnarray*}
H:= \min \{0, \mathit{PE}^{\mathfrak{S}\uparrow r}_s(\oplus^r \mathit{goal}) | && s\in S, 0\leq r\leq \max_{s,\alpha} |wgt(s,\alpha)|, \\
&& \text{ and }(s,r)\text{ is reachable under }\mathfrak{S}\}, 
\end{eqnarray*}
and let $n$ be the least natural number such that 
\[ p^n \cdot (n\cdot W \cdot \max_s p_s^{\max} + \max_s \mathit{PE}^{\sup}_s -H) < 2\epsilon.\]
Finally, define $R^+:= n\cdot W$.

We claim that $R:=\max\{R^-,R^+\}$ does the job. So let $\mathfrak{T}$ be a scheduler with $\mathfrak{T}\upharpoonright S\times [-R,R] = \mathfrak{S}\upharpoonright S\times [-R,R]$.
Let 
\begin{eqnarray*}
\mathfrak{P}^+ := \{\pi \text{ finite path} | && wgt(\pi)>R\text{ and any proper prefix }\pi^\prime \\
&& \text{ satisfies }wgt(\pi^\prime)\in[-R,R]\}, \\
\mathfrak{P}^- := \{\pi \text{ finite path}| && wgt(\pi)<R\text{ and any proper prefix }\pi^\prime \\
&& \text{ satisfies }wgt(\pi^\prime)\in[-R,R]\}.
\end{eqnarray*}

Recall the following definition. Given a finite path $\pi$ and a path $\rho$ starting in  $last(\pi)$ and a scheduler $\mathfrak{Q}$, we further define the scheduler $\mathfrak{Q}\uparrow\pi$ by
\[\mathfrak{Q}\uparrow\pi \,(\rho) := \mathfrak{Q}(\pi;\rho)\] where $\pi;\rho$ denotes the concatenation of the paths $\pi$ and $\rho$.

The schedulers $\mathfrak{S}$ and $\mathfrak{T}$ agree on all paths not having a prefix in one of these two sets. 
So,
\begin{eqnarray*}
&&\mathit{PE}^\mathfrak{T}_{s_{\mathit{init}}}-\mathit{PE}^\mathfrak{S}_{s_{\mathit{init}}} \\
=&&\sum_{\pi\in\mathfrak{P}^+\cup\mathfrak{P}^-}  (\mathbb{E}^\mathfrak{T\uparrow \pi}_{last(\pi)}(\oplus^{wgt(\pi)}\mathit{goal})-\mathbb{E}^\mathfrak{S\uparrow \pi}_{last(\pi)}(\oplus^{wgt(\pi)}\mathit{goal}))\cdot \Pr^\mathfrak{S}_{s_{\mathit{init}}}(\pi)
\end{eqnarray*}
Split up into two sums, we get the following inequalities:
\begin{eqnarray*}
&&\sum_{\pi\in\mathfrak{P}^+}  (\mathbb{E}^\mathfrak{T\uparrow \pi}_{last(\pi)}(\oplus^{wgt(\pi)}\mathit{goal})-\mathbb{E}^\mathfrak{S\uparrow \pi}_{last(\pi)}(\oplus^{wgt(\pi)}\mathit{goal}))\cdot \Pr^\mathfrak{S}_{s_{\mathit{init}}}(\pi)\\
&\leq&\Pr^\mathfrak{S}_{s_{\mathit{init}}}(\mathfrak{P}^+) \cdot (R \cdot \max_s p_s^{\max}+\max_s \mathit{PE}^{\sup}_s - \min_{\pi\in\mathfrak{P}^+}\mathbb{E}^\mathfrak{S\uparrow \pi}_{last(\pi)}(\oplus^{wgt(\pi)}\mathit{goal}))\\
&\leq&p^n\cdot (n\cdot W \cdot \max_s p_s^{\max}+\max_s \mathit{PE}^{\sup}_s - H) < 2\epsilon.
\end{eqnarray*}

\begin{eqnarray*}
&&\sum_{\pi\in\mathfrak{P}^-}  (\mathbb{E}^\mathfrak{T\uparrow \pi}_{last(\pi)}(\oplus^{wgt(\pi)}\mathit{goal})-\mathbb{E}^\mathfrak{S\uparrow \pi}_{last(\pi)}(\oplus^{wgt(\pi)}\mathit{goal}))\cdot \Pr^\mathfrak{S}_{s_{\mathit{init}}}(\pi)\\
&=&\sum_{\pi\in\mathfrak{P}^-}  \mathbb{E}^\mathfrak{T\uparrow \pi}_{last(\pi)}(\oplus^{wgt(\pi)}\mathit{goal})\cdot \Pr^\mathfrak{S}_{s_{\mathit{init}}}(\pi) \\
&&-\sum_{\pi\in\mathfrak{P}^-}  \mathbb{E}^\mathfrak{S\uparrow \pi}_{last(\pi)}(\oplus^{wgt(\pi)}\mathit{goal})\cdot \Pr^\mathfrak{S}_{s_{\mathit{init}}}(\pi)\\
&\leq& \sum_{i=0}^\infty \max_{\pi\in\mathfrak{P}} \Pr^{\mathfrak{T}\uparrow \pi}_{last(\pi)}(\Diamond^{=R^-+i} \mathit{goal})\cdot i \\
&& - \sum_{\pi\in\mathfrak{P}^-}  \mathbb{E}^\mathfrak{S\uparrow \pi}_{last(\pi)}(\oplus^{wgt(\pi)}\mathit{goal})\cdot \Pr^\mathfrak{S}_{s_{\mathit{init}}}(\pi)\\
&\leq& \sum_{i=0}^\infty p^{k+i} \cdot W \cdot (i+1) - \sum_{n=l_\epsilon^\mathfrak{S}+1}^\infty \Pr^\mathfrak{S}_{s_{\mathit{init}}}(\{\pi\in\mathfrak{P}^-|\pi\vDash\Diamond^{-n} \mathit{goal}\})\cdot (-n) \\
&&- \sum_{n=-l_\epsilon^\mathfrak{S}}^\infty \Pr^\mathfrak{S}_{s_{\mathit{init}}}(\{\pi\in\mathfrak{P}^-|\pi\vDash\Diamond^{n} \mathit{goal}\})\cdot n \\
&\leq& p^k \cdot W/(1-p)^2 + \epsilon + p^k \cdot l^\mathfrak{S}_\epsilon \leq 2\epsilon.
\end{eqnarray*}
This finishes the case where $\mathit{PE}^\mathfrak{S}_{s_{\mathit{init}}}>-\infty$.
If $\mathit{PE}^\mathfrak{S}_{s_{\mathit{init}}}=-\infty$, we have to show for each $b\in\mathbb{R}$ that there is an $R$ such that all schedulers agreeing with $\mathfrak{S}$ on the weight-window $[-R,R]$ have a partial expectation below $b$. But as we can make \[\sum_{\zeta\vDash\Diamond \mathit{goal} \land \Box wgt \in [-R,R]} wgt(\zeta)\cdot \Pr^\mathfrak{S}_{s_{\mathit{init}}}(\zeta)\] arbitrarily small, this is easy to see.
\end{proof}

\begin{mythm}[Existence of Optimal Schedulers for Partial Expectations]
If $\mathit{PE}^{\sup}_{s_{\mathit{init}}}$ is finite in an MDP $\mathcal{M}$, then there is a weight-based deterministic scheduler $\mathfrak{S}$ with 
\[\mathit{PE}^{\sup}_{s_{\mathit{init}}}=\mathit{PE}^\mathfrak{S}_{s_{\mathit{init}}}.\]
\end{mythm}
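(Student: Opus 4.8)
The plan is to synthesize the three preparatory results established just above into what is essentially a one-line application of the extreme value theorem for upper semi-continuous functions. First I would invoke Proposition~\ref{prop:supWD} to replace the supremum over all history-dependent randomized schedulers by the supremum over the much more tractable class $\mathit{WD}^\mathcal{M}$ of weight-based deterministic schedulers, reducing the task to showing that $\sup_{\mathfrak{S}\in \mathit{WD}^\mathcal{M}}\mathit{PE}^\mathfrak{S}_{s_{\mathit{init}}}$ is actually attained. This reduction is exactly what justifies passing to the function space $Act^{S\times\mathbb{Z}}$ in the first place.

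Next I would recall the two analytic facts proved immediately before the theorem: the metric space $(Act^{S\times\mathbb{Z}},d^\mathcal{M})$ is compact, and the map $\mathfrak{S}\mapsto\mathit{PE}^\mathfrak{S}_{s_{\mathit{init}}}$ is upper semi-continuous on it (here the hypothesis $\mathit{PE}^{\sup}_{s_{\mathit{init}}}<\infty$ is precisely what is needed to invoke the semi-continuity lemma). The domain is nonempty since $Act(s)\neq\emptyset$ for every state $s$, so at least one weight-based deterministic scheduler exists.

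The conclusion then follows from the standard fact that an upper semi-continuous real-valued function on a nonempty compact space attains its supremum: there is a scheduler $\mathfrak{S}\in\mathit{WD}^\mathcal{M}$ with $\mathit{PE}^\mathfrak{S}_{s_{\mathit{init}}}=\sup_{\mathfrak{T}\in\mathit{WD}^\mathcal{M}}\mathit{PE}^\mathfrak{T}_{s_{\mathit{init}}}$, and by the first reduction this common value equals $\mathit{PE}^{\sup}_{s_{\mathit{init}}}$. Finiteness of $\mathit{PE}^{\sup}_{s_{\mathit{init}}}$ guarantees that this maximum is a genuine real number rather than $\pm\infty$, so the extreme value theorem applies in $\mathbb{R}$ rather than merely in the extended reals.

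I do not expect a genuine obstacle here, since all of the difficulty has been front-loaded into the preparatory lemmas. The only points requiring care are to confirm that the combination \emph{upper semi-continuous on compact} yields an attained \emph{maximum} (not merely a supremum that is approached but never met), and to check nonemptiness and finiteness so the argument lives in $\mathbb{R}$. The substantive work—the step that genuinely exploits the structure of partial expectations and the negative-mean-payoff normalization—lies entirely in the upper semi-continuity lemma, whose proof must control both the large-positive-weight and large-negative-weight tails; the theorem itself is just the topological payoff of that analysis.
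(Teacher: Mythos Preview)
Your proposal is correct and matches the paper's argument essentially verbatim: upper semi-continuity of $\mathfrak{S}\mapsto\mathit{PE}^\mathfrak{S}_{s_{\mathit{init}}}$ combined with compactness of $(\mathit{WD},d^\mathcal{M})$ yields an attained maximum, which equals $\mathit{PE}^{\sup}_{s_{\mathit{init}}}$ by Proposition~\ref{prop:supWD}. Your write-up is in fact more explicit than the paper's (you spell out the role of Proposition~\ref{prop:supWD} and nonemptiness), but the mathematical content is identical.
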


\begin{proof}
If $\mathit{PE}^{\sup}_{s_{\mathit{init}}}$ is finite, then the map $\mathit{PE}: (WD, d^{WD}) \to (\mathbb{R}_\infty,d^{euclid})$ is upper semi-continuous as we have just shown. But then, this map has a maximum because $(WD,d^{WD})$ is a compact metric space. 
\end{proof}


\section{Approximation}  \label{app:approximation}


\subsection{Approximating Optimal Partial Expectations} \label{app:approximationPE}


\begin{myprop}
Let $s\in S$ and let \[\mathfrak{q}_s:=\frac{\mathit{PE}^{ub}-\mathit{PE}^{\mathfrak{Min}}_s}{p_s^{\min} - \min\limits_{\alpha\not\in Act^{\min}(s)} p_{s,\alpha}^{\min}}.\] Then any $WD$-scheduler $\mathfrak{S}$ maximizing the partial expectation in $\mathcal{M}$ satisfies $\mathfrak{S}(s,w)\in Act^{\min}(s)$ if $w\leq \mathfrak{q}_s$.
\end{myprop}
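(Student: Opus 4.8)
The plan is to argue by contradiction via a local exchange at a single state-weight pair, mirroring the upper-saturation argument of Proposition~\ref{prop:SaturationPoint1} but in the regime of very negative accumulated weight, where it becomes favorable to \emph{minimize} the probability of reaching $\mathit{goal}$ (since a goal-reaching path then contributes the large negative bias $w$).

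First, I record the elementary identity $\mathbb{E}^{\mathfrak{T}}_s(\oplus^w\mathit{goal}) = \mathit{PE}^{\mathfrak{T}}_s + w\cdot\Pr^{\mathfrak{T}}_s(\Diamond\mathit{goal})$, valid for any scheduler $\mathfrak{T}$ and any $w$, because $\oplus^w\mathit{goal}$ adds the constant $w$ on exactly the goal-reaching paths. Writing $\Delta_s := \min_{\alpha\notin Act^{\min}(s)} p_{s,\alpha}^{\min} - p_s^{\min}$ with $p_{s,\alpha}^{\min} := \sum_t P(s,\alpha,t)\,p_t^{\min}$, I note $\Delta_s>0$ by definition of $Act^{\min}(s)$, so the denominator of $\mathfrak{q}_s$ is $-\Delta_s$ and $\mathfrak{q}_s\le 0$ (its numerator $\mathit{PE}^{ub}-\mathit{PE}^{\mathfrak{Min}}_s$ is non-negative since $\mathit{PE}^{ub}$ bounds the partial expectation from every state). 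Rearranging the definition gives the key identity $\mathit{PE}^{\mathfrak{Min}}_s - \mathit{PE}^{ub} = \mathfrak{q}_s\cdot\Delta_s$.

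Next I decompose the optimal partial expectation at the first visit to a pair $(s,w)$. Assume $\mathfrak{S}$ is a $WD$-scheduler with $\mathit{PE}^{\mathfrak{S}}_{s_{\mathit{init}}}=\mathit{PE}^{\sup}_{s_{\mathit{init}}}$, and suppose for contradiction that $(s,w)$ is reachable under $\mathfrak{S}$ with $w\le\mathfrak{q}_s$ and $\mathfrak{S}(s,w)=\alpha\notin Act^{\min}(s)$. Since $\mathfrak{S}$ is weight-based, the residual after any path first reaching $(s,w)$ is a single scheduler $\mathfrak{T}$ (weight is additive, so the residual depends only on $(s,w)$), whence
\[ \mathit{PE}^{\mathfrak{S}}_{s_{\mathit{init}}} = C + \Pr^{\mathfrak{S}}_{s_{\mathit{init}}}(\Diamond (s,w))\cdot V_{\mathfrak{T}}, \qquad V_{\mathfrak{T}} := \mathit{PE}^{\mathfrak{T}}_s + w\,\Pr^{\mathfrak{T}}_s(\Diamond\mathit{goal}), \]
where $C$ collects the contribution of paths never visiting $(s,w)$. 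Let $\mathfrak{S}'$ agree with $\mathfrak{S}$ up to the first visit to $(s,w)$ and follow $\mathfrak{Min}$ afterwards; its value is the same expression with $V_{\mathfrak{T}}$ replaced by $V_{\mathfrak{Min}} := \mathit{PE}^{\mathfrak{Min}}_s + w\,p_s^{\min}$. By Proposition~\ref{prop:supWD} the $WD$-supremum equals the global supremum, so $\mathfrak{S}'$ (although history-dependent) is a legitimate competitor, and it suffices to show $V_{\mathfrak{Min}}\ge V_{\mathfrak{T}}$ with a strict gap. The core estimate uses $\Pr^{\mathfrak{T}}_s(\Diamond\mathit{goal})\ge p_{s,\alpha}^{\min}\ge p_s^{\min}+\Delta_s$ (any continuation of $\alpha$ reaches $\mathit{goal}$ with probability at least $p_{s,\alpha}^{\min}$), together with $w\le 0$ and $\mathit{PE}^{\mathfrak{T}}_s\le\mathit{PE}^{ub}$:
\begin{align*}
V_{\mathfrak{Min}} - V_{\mathfrak{T}} &= (\mathit{PE}^{\mathfrak{Min}}_s - \mathit{PE}^{\mathfrak{T}}_s) + w\big(p_s^{\min} - \Pr^{\mathfrak{T}}_s(\Diamond\mathit{goal})\big)\\
&\ge (\mathit{PE}^{\mathfrak{Min}}_s - \mathit{PE}^{ub}) - w\,\Delta_s = \Delta_s(\mathfrak{q}_s - w) \ge 0,
\end{align*}
since multiplying $p_s^{\min}-\Pr^{\mathfrak{T}}_s(\Diamond\mathit{goal})\le -\Delta_s$ by $w\le 0$ flips the inequality. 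Hence switching to $\mathfrak{Min}$ at $(s,w)$ never lowers the partial expectation, and the gap $\Delta_s(\mathfrak{q}_s-w)$ is strictly positive whenever $w<\mathfrak{q}_s$, contradicting optimality of $\mathfrak{S}$; for unreachable pairs the choice is immaterial and may be taken in $Act^{\min}(s)$.

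I expect the main obstacle to be the boundary case $w=\mathfrak{q}_s$, where the gap $\Delta_s(\mathfrak{q}_s-w)$ degenerates to $0$. There strictness must instead be drawn from the fact that the computed bound $\mathit{PE}^{ub}$ strictly over-approximates $\mathit{PE}^{\mathfrak{T}}_s$ (the geometric bound is never attained), giving $\mathit{PE}^{\mathfrak{Min}}_s-\mathit{PE}^{\mathfrak{T}}_s > \mathit{PE}^{\mathfrak{Min}}_s-\mathit{PE}^{ub}$ and keeping the inequality strict; the remaining care is the verification that each residual is indeed the same weight-based scheduler $\mathfrak{T}$, which is where the weight-additivity and the weight-based assumption on $\mathfrak{S}$ are used.
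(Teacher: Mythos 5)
Your proof is correct and takes essentially the same route as the paper: the paper's own proof is a one-line reference to the exchange argument of Proposition~\ref{prop:SaturationPoint1}, which is precisely your contradiction-by-local-deviation with the same three ingredients ($\mathit{PE}^{\mathfrak{T}}_s\le\mathit{PE}^{ub}$, $\Pr^{\mathfrak{T}}_s(\Diamond\mathit{goal})\ge p^{\min}_{s,\alpha}\ge p_s^{\min}+\Delta_s$, and the sign flip from $w\le\mathfrak{q}_s\le 0$ turning the probability surplus into a penalty). Your explicit handling of unreachable state-weight pairs and of the degenerate boundary case $w=\mathfrak{q}_s$ addresses caveats that the paper's proof-by-analogy silently glosses over, so no gap relative to the paper's own level of rigor remains.
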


\begin{proof}
The proof works  analogously to the proof of Proposition \ref{prop:SaturationPoint1}.
\end{proof}

Let $\mathfrak{q}:=\min_{s\in S} \mathfrak{q}_s$ and let $D:=\mathit{PE}^{ub}-\min \{ \mathit{PE}^\mathfrak{Max}_s , \mathit{PE}^\mathfrak{Min}_s |s\in S\}$.
Given $\epsilon>0$, we define $R^+_\epsilon := (c_\mathcal{M}+W)\cdot \left\lceil \frac{\log(2D)+\log(1/\epsilon)}{\log(1/\lambda_\mathcal{M})} \right\rceil$ and $R^-_\epsilon:= \mathfrak{q}-R^+_\epsilon$.

\begin{mythm}
There is a weight-based deterministic scheduler $\mathfrak{S}$ such that the scheduler $\mathfrak{T}$ defined by
\[\mathfrak{T} (\pi) = 
\begin{cases}
\mathfrak{S}(\pi) & \text{, if any prefix }\pi^\prime\text{ of }\pi\text{ satisfies }R^-_\epsilon\leq wgt(\pi)\leq R^+_\epsilon,\\
\mathfrak{Max}(\pi) & \text{, if the shortest prefix }\pi^\prime\text{ of }\pi\text{ with } wgt(\pi^\prime)\not\in [R^-_\epsilon,R^+_\epsilon] \\
& \text{satisfies }wgt(\pi^\prime)>R^+_\epsilon, \\
\mathfrak{Min}(\pi) & \text{, otherwise,}
\end{cases}\]
satisfies $\mathit{PE}^\mathfrak{T}_{s_{\mathit{init}}}\geq \mathit{PE}^{\sup}_{s_{\mathit{init}}} - \epsilon$. 
\end{mythm}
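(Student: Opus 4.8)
The plan is to take $\mathfrak{S}$ to be an \emph{optimal} weight-based deterministic scheduler, which exists precisely because $\mathit{PE}^{\sup}_{s_{\mathit{init}}}<\infty$ (the existence theorem of Section~\ref{sec:optimalWD}). With this choice $\mathit{PE}^\mathfrak{S}_{s_{\mathit{init}}}=\mathit{PE}^{\sup}_{s_{\mathit{init}}}$, so it suffices to bound $\mathit{PE}^\mathfrak{S}_{s_{\mathit{init}}}-\mathit{PE}^\mathfrak{T}_{s_{\mathit{init}}}\le\epsilon$. Since $\mathfrak{T}$ agrees with $\mathfrak{S}$ for as long as the accumulated weight stays in the window $[R^-_\epsilon,R^+_\epsilon]$, I would write the difference as a first-exit decomposition
\[
\mathit{PE}^\mathfrak{S}_{s_{\mathit{init}}}-\mathit{PE}^\mathfrak{T}_{s_{\mathit{init}}}=\sum_{\pi\in\mathfrak{P}^+\cup\mathfrak{P}^-}\Pr^\mathfrak{S}_{s_{\mathit{init}}}(\pi)\,\big(\mathit{PE}^{\mathfrak{S}{\uparrow}\pi}_{\last(\pi)}[\wgt(\pi)]-\mathit{PE}^{\mathfrak{T}{\uparrow}\pi}_{\last(\pi)}[\wgt(\pi)]\big),
\]
where $\mathfrak{P}^+$ (resp.\ $\mathfrak{P}^-$) collects the finite paths that first leave the window through the top (resp.\ the bottom), and on which $\mathfrak{T}{\uparrow}\pi=\mathfrak{Max}$ (resp.\ $\mathfrak{T}{\uparrow}\pi=\mathfrak{Min}$). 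It then remains to bound each of the two sums by $\epsilon/2$, using that the exponent $n:=\lceil(\log(2D)+\log(1/\epsilon))/\log(1/\lambda_\mathcal{M})\rceil$ built into $R^+_\epsilon=(c_\mathcal{M}+W)\cdot n$ gives $\lambda_\mathcal{M}^{\,n}\le\epsilon/(2D)$.

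For the upper sum I would argue via the Bellman identity $\mathit{PE}^\mathfrak{U}_s[w]=\mathit{PE}^\mathfrak{U}_s+w\cdot\Pr^\mathfrak{U}_s(\Diamond\goal)$. On $\pi\in\mathfrak{P}^+$ the weight $w=\wgt(\pi)>R^+_\epsilon>0$ is positive and $\Pr^{\mathfrak{S}{\uparrow}\pi}_{\last(\pi)}(\Diamond\goal)\le p^{\max}_{\last(\pi)}$, so the weight-dependent term can only \emph{lower} the value of $\mathfrak{S}{\uparrow}\pi$ relative to $\mathfrak{Max}$; hence the per-path loss is at most $\mathit{PE}^{ub}-\mathit{PE}^\mathfrak{Max}_{\last(\pi)}\le D$, needing no optimality of $\mathfrak{S}{\uparrow}\pi$. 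Since reaching accumulated weight $R^+_\epsilon$ has probability at most $\lambda_\mathcal{M}^{\,n}$ under any scheduler (the Blackwell super-martingale bound, Corollary~\ref{cor:boundEC} aggregated over end components and the MEC-quotient), the upper sum is at most $\Pr^\mathfrak{S}_{s_{\mathit{init}}}(\mathfrak{P}^+)\cdot D\le\lambda_\mathcal{M}^{\,n}\cdot D\le\epsilon/2$.

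The lower sum is where I expect the real difficulty, and it is genuinely \emph{not} symmetric to the upper one: the probability $\Pr^\mathfrak{S}_{s_{\mathit{init}}}(\mathfrak{P}^-)$ of reaching a very negative weight need not be small, because the super-martingale only controls upward drift while end components typically push the weight down, so the crude per-path bound $\le D$ cannot be multiplied by a small probability. The resolution I would use is the lower saturation point: as $\mathfrak{S}$ is optimal and $\wgt(\pi)\le R^-_\epsilon\le\mathfrak{q}\le\mathfrak{q}_s$, the residual $\mathfrak{S}{\uparrow}\pi$ selects actions in $Act^{\min}$ for as long as the weight stays below $\mathfrak{q}$. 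I would compare $\mathfrak{S}{\uparrow}\pi$ with the auxiliary scheduler $\mathfrak{V}$ that follows $\mathfrak{S}{\uparrow}\pi$ until the weight first climbs back up to $\mathfrak{q}$ and then switches to $\mathfrak{Min}$. Then $\mathfrak{V}$ uses only $Act^{\min}$ actions, hence reaches $\goal$ with the minimal probability, and so $\mathit{PE}^\mathfrak{V}_{\last(\pi)}[\wgt(\pi)]\le\mathit{PE}^\mathfrak{Min}_{\last(\pi)}[\wgt(\pi)]$ by the defining optimality of $\mathfrak{Min}$ among minimal-probability schedulers (for which $\mathit{PE}[w]=\mathit{PE}+w\,p^{\min}$, so maximizing $\mathit{PE}$ and $\mathit{PE}[w]$ coincide). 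The schedulers $\mathfrak{S}{\uparrow}\pi$ and $\mathfrak{V}$ differ only on runs that raise the weight from $\wgt(\pi)\le\mathfrak{q}-R^+_\epsilon$ up to $\mathfrak{q}$, i.e.\ by at least $R^+_\epsilon$; by the same martingale bound this event has probability at most $\lambda_\mathcal{M}^{\,n}$, and the value gap contributed after such a return is bounded by $D$. Thus each per-path loss in $\mathfrak{P}^-$ is at most $\lambda_\mathcal{M}^{\,n}\cdot D$, and summing against $\Pr^\mathfrak{S}_{s_{\mathit{init}}}(\mathfrak{P}^-)\le 1$ again yields $\epsilon/2$; combining the two estimates gives $\mathit{PE}^\mathfrak{T}_{s_{\mathit{init}}}\ge\mathit{PE}^{\sup}_{s_{\mathit{init}}}-\epsilon$.

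Beyond that asymmetry, the point I expect to require the most care is the bookkeeping of the bias term at the instant the weight re-enters $\{\wgt\ge\mathfrak{q}\}$: since $\mathfrak{q}$ may have either sign and the crossing overshoots by up to $W$, the bound $\le D$ on the post-return value gap must be justified by splitting the Bellman expression into a bias-free part (bounded by $D$ via $\mathit{PE}^{ub}$ and $\min_s\{\mathit{PE}^\mathfrak{Max}_s,\mathit{PE}^\mathfrak{Min}_s\}$) and a weight term of favorable sign, or by absorbing the $W$-slack into the constant hidden in $R^+_\epsilon$. The factor $2D$ in the definition of $R^+_\epsilon$ is exactly what makes both $\epsilon/2$ estimates go through at once.
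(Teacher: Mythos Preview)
Your proposal is correct and follows essentially the same route as the paper: take $\mathfrak{S}$ optimal, perform the first-exit decomposition into $\mathfrak{P}^+$ and $\mathfrak{P}^-$, bound the upper sum by $\Pr^\mathfrak{S}(\mathfrak{P}^+)\cdot D\le\lambda_\mathcal{M}^{\,n}\cdot D\le\epsilon/2$, and handle the lower sum via the lower saturation point $\mathfrak{q}$ together with the auxiliary scheduler (your $\mathfrak{V}$ is exactly the paper's $\mathfrak{S}'$) so that the per-path loss is at most $\Pr^{\max}(\Diamond\,\wgt\ge R^+_\epsilon)\cdot D\le\epsilon/2$. Your final paragraph even flags the bias-bookkeeping issue at the moment of return to $\{\wgt\ge\mathfrak{q}\}$, which the paper leaves implicit.
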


\begin{proof}
Let $\mathfrak{S}$ be a weight-based deterministic scheduler with $\mathit{PE}^\mathfrak{S}_{s_{\mathit{init}}}=\mathit{PE}^{\sup}_{s_{\mathit{init}}}$. Define \[\mathfrak{T} (\pi) = 
\begin{cases}
\mathfrak{S}(\pi) & \text{, if any prefix }\pi^\prime\text{ of }\pi\text{ satisfies }R^-_\epsilon\leq wgt(\pi)\leq R^+_\epsilon,\\
\mathfrak{Max}(\pi) & \text{, if the shortest prefix }\pi^\prime\text{ of }\pi\text{ with } wgt(\pi^\prime)\not\in [R^-_\epsilon,R^+_\epsilon] \\
& \text{satisfies }wgt(\pi^\prime)>R^+_\epsilon, \\
\mathfrak{Min}(\pi) & \text{, otherwise.}
\end{cases}\]
We give an estimation for the difference $\mathit{PE}^{\sup}_{s_{\mathit{init}}}-\mathit{PE}^\mathfrak{T}_{s_{\mathit{init}}}$. In order to do so, we define the following two sets:
\begin{eqnarray*}
\Pi^+_\epsilon &:=& \{\pi \text{ finite }\mathfrak{S}\text{-path } | wgt(\pi)\geq R^+_\epsilon \\
&& \text{ and for any proper prefix }\pi^\prime \text{ of }\pi\text{, } R^-_\epsilon\leq wgt(\pi^\prime)\leq R^+_\epsilon\}, \\
\Pi^-_\epsilon &:=& \{\pi \text{ finite }\mathfrak{S}\text{-path } | wgt(\pi)\leq R^-_\epsilon \\
&& \text{ and for any proper prefix }\pi^\prime \text{ of }\pi\text{, } R^-_\epsilon\leq wgt(\pi^\prime)\leq R^+_\epsilon\}.
\end{eqnarray*}

The schedulers $\mathfrak{S}$ and $\mathfrak{T}$ agree on all paths not in $\Pi^+_\epsilon$ or $\Pi^-_\epsilon$. Hence,
\begin{eqnarray*}
&&\mathit{PE}^{\sup}_{s_{\mathit{init}}} - \mathit{PE}^\mathfrak{T}_{s_{\mathit{init}}} \\
&=& \sum_{\pi\in \Pi^+_\epsilon} \Pr^\mathfrak{S}_{s_{\mathit{init}}}(\pi)\cdot(\mathit{PE}^{\sup}_{last(\pi)}  [wgt(\pi)] - \mathit{PE}^{\mathfrak{Max}}_{last(\pi)} - p^{\max}_{last(\pi)}\cdot wgt(\pi) ) + \\
&&  \sum_{\pi\in \Pi^-_\epsilon} \Pr^\mathfrak{S}_{s_{\mathit{init}}}(\pi)\cdot(\mathit{PE}^{\sup}_{last(\pi)} [wgt(\pi)] - \mathit{PE}^{\mathfrak{Min}}_{last(\pi)} - p^{\min}_{last(\pi)}\cdot wgt(\pi)).
\end{eqnarray*}
For the first sum, we have the following estimation:
\begin{eqnarray*}
&&\sum_{\pi\in \Pi^+_\epsilon}\Pr^\mathfrak{S}_{s_{\mathit{init}}}(\pi)\cdot( \mathit{PE}^{\sup}_{last(\pi)} [wgt(\pi)] - \mathit{PE}^{\mathfrak{Max}}_{last(\pi)} - p^{\max}_{last(\pi)}\cdot wgt(\pi)) \\
&\leq&\sum_{\pi\in \Pi^+_\epsilon} \Pr^\mathfrak{S}_{s_{\mathit{init}}}(\pi)\cdot (\mathit{PE}^{\sup}_{last(\pi)} - \mathit{PE}^{\mathfrak{Max}}_{last(\pi)} ) \\
&\leq&  \Pr^\mathfrak{S}_{s_{\mathit{init}}}(\Pi^+_\epsilon) \cdot D \leq \Pr^\mathfrak{S}_{s_{\mathit{init}}} (\Diamond wgt\geq R^+_\epsilon) \cdot D \\
&\leq&  \lambda_\mathcal{M} ^{ \frac{\log(2D)+\log(1/\epsilon)}{\log(1/\lambda_\mathcal{M})}} \cdot D = 2^{\log(\lambda_\mathcal{M})\cdot  \frac{\log(2D)+\log(1/\epsilon)}{\log(1/\lambda_\mathcal{M})}} \cdot D = 2^{\log(\epsilon)-\log(2D)} \cdot D= \epsilon /2.
\end{eqnarray*}

For the second sum, consider the following scheduler.
On extensions of paths in $\Pi^-_\epsilon$, let $\mathfrak{S}^\prime$ be the scheduler which behaves like $\mathfrak{S}$ until the accumulated weight is at least $\mathfrak{q}$ again and then switches to the choices of $\mathfrak{Min}$. We know that $\mathfrak{S}$ only chooses actions in $Act^{\min}(s)$ when in a state $s$ with accumulated weight below $\mathfrak{q}$.
On the other hand, $\mathfrak{Min}$ is optimal among these schedulers. So, $\mathfrak{Min}$ is at least as good as $\mathfrak{S}^\prime$ on extensions of paths in $\Pi^-_\epsilon$ with respect to maximizing the partial expectation. Further, starting at a path in $\Pi^{-}_\epsilon$ we reach an accumulated weight of at least $\mathfrak{q}$ only if we accumulate a weight of at least $R^+$. Afterwards, we can bound the advantage of $\mathfrak{S}$ over $\mathfrak{Min}$ by $D$. So, we get the following  estimation:
\begin{eqnarray*}
&&\sum_{\pi\in \Pi-_\epsilon}\Pr^\mathfrak{S}_{s_{\mathit{init}}}(\pi)\cdot( \mathit{PE}^{\sup}_{last(\pi)} [wgt(\pi)] - \mathit{PE}^{\mathfrak{Min}}_{last(\pi)} - p^{\min}_{last(\pi)}\cdot wgt(\pi)) \\
&\leq&\sum_{\pi\in \Pi^-_\epsilon} \Pr^\mathfrak{S}_{s_{\mathit{init}}}(\pi)\cdot (\Pr^{\max}_{last(\pi)}(\Diamond wgt \geq R^+) \cdot D ) \leq   \epsilon /2. 
\end{eqnarray*}
So, $\mathit{PE}^{\sup}_{s_{\mathit{init}}}-\mathit{PE}^\mathfrak{T}_{s_{\mathit{init}}}\leq \epsilon$.
\end{proof}

This result now allows us to compute an $\epsilon$-approximation and an $\epsilon$-optimal scheduler with finite memory by linear programming, similar to the case of non-negative weights. The linear program has $R^+_\epsilon + R^-_\epsilon$ many variables and $|Act|$-times as many inequalities. 

\begin{mycor}
The maximal partial expectation $\mathit{PE}^{\sup}_{s_{\mathit{init}}}$ can be approximated up to an absolute error of $\epsilon$ in time exponential in the size of $\mathcal{M}$ and polynomial in $\log(1/\epsilon)$. 
\end{mycor}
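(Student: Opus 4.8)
The plan is to convert the $\epsilon$-optimal scheduler $\mathfrak{T}$ of the preceding theorem into a finite linear program over the bounded weight window $[R^-_\epsilon,R^+_\epsilon]$, exactly as in the non-negative case (Proposition~\ref{prop:linearprogram}). The theorem already guarantees that some scheduler which acts optimally inside the window and follows $\mathfrak{Max}$ above $R^+_\epsilon$ and $\mathfrak{Min}$ below $R^-_\epsilon$ has partial expectation at least $\mathit{PE}^{\sup}_{s_{\mathit{init}}}-\epsilon$. Hence it suffices to \emph{optimize} over the class of schedulers that are forced to behave like $\mathfrak{Max}$ (resp.\ $\mathfrak{Min}$) as soon as the accumulated weight leaves the window: the best such scheduler has value sandwiched between $\mathit{PE}^{\sup}_{s_{\mathit{init}}}-\epsilon$ and $\mathit{PE}^{\sup}_{s_{\mathit{init}}}$ (the latter because it is a genuine scheduler), so that value is the desired $\epsilon$-approximation, and an $\epsilon$-optimal finite-memory scheduler is read off from the maximizing actions.

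First I would set up the program with variables $x_{s,r}$ for $s\in S^\prime$ and integers $r$ with $R^-_\epsilon\leq r\leq R^+_\epsilon$, pinning the values outside the window to those realized by the memoryless schedulers, namely $\mathit{PE}^{\mathfrak{Max}}_s+r\cdot p^{\max}_s$ for $r>R^+_\epsilon$ and $\mathit{PE}^{\mathfrak{Min}}_s+r\cdot p^{\min}_s$ for $r<R^-_\epsilon$. Inside the window I would impose, for every $s$ and every $\alpha\in Act(s)$, the constraint $x_{s,r}\geq P(s,\alpha,\mathit{goal})\cdot(r+\wgt(s,\alpha))+\sum_{t\in S^\prime}P(s,\alpha,t)\cdot x_{t,r+\wgt(s,\alpha)}$, reading any $x_{t,r'}$ with $r'$ outside the window as the corresponding pinned boundary value, and minimize $\sum_{s,r}x_{s,r}$ to select the least solution.

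For correctness I would invoke the weighted-maximum-norm contraction argument of Proposition~\ref{prop:linearprogram} essentially verbatim: after the preprocessing of Proposition~\ref{prop:removingzeroECs} the only end components are $\{\mathit{goal}\}$ and $\{fail\}$, so from every state one of these is reached with probability at least $\delta^{|S|}$ within $|S|$ steps, where $\delta$ is the minimal non-zero transition probability. This yields a contraction factor $\gamma<1$ in Tseng's weighted sup-norm \cite{tseng1990,Veinott1969} on the \emph{finite} variable set, with the absorbing $\mathit{goal}$, $fail$ and the pinned out-of-window entries playing the role of the saturation values; terms whose weight index leaves the window drop out of the difference $T(x)-T(y)$ and therefore only help. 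Consequently $T$ has a unique fixed point, the minimizing LP computes it, and it equals the optimal partial expectations of the constrained class, so $x_{s_{\mathit{init}},0}$ is the $\epsilon$-approximation. Finally I would bound the complexity: $\mathit{PE}^{ub}$, $D$, $c_\mathcal{M}$, $\mathfrak{q}$ and $1/\log(1/\lambda_\mathcal{M})$ all have logarithmic length polynomial in $\Size(\mathcal{M})$, so $R^+_\epsilon$ and $|R^-_\epsilon|$ are bounded by a number exponential in $\Size(\mathcal{M})$ and polynomial in $\log(1/\epsilon)$; the window thus contains that many integers, the LP has $|S^\prime|$ times as many variables (and $|Act|$ times as many inequalities) with coefficients of polynomially bounded bit-length, and solving a linear program is polynomial in its bit-size, giving the claimed bound.

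The step I expect to be the main obstacle is the correctness argument, and specifically the contraction: one might be tempted to claim that the product graph on $S^\prime\times[R^-_\epsilon,R^+_\epsilon]$ is acyclic because all cycles have negative weight, but this fails for \emph{probabilistic} branching (a closed walk of total weight $0$ need not generate a non-negative end component), so the bounded window really can contain weight-$0$ closed walks. The delicate point is therefore to justify uniqueness and optimality of the LP solution \emph{despite} the oscillation of accumulated weight; this is precisely where the uniform escape probability $\delta^{|S|}$ to $\{\mathit{goal},fail\}$, rather than any monotonicity of the weight, must supply the contraction, and where the $\epsilon$-loss guarantee of the preceding theorem is needed to tie the constrained optimum to $\mathit{PE}^{\sup}_{s_{\mathit{init}}}$.
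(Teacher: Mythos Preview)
Your overall plan coincides with the paper's: the same linear program over the bounded weight window with the $\mathfrak{Max}$/$\mathfrak{Min}$ boundary values, and the same sandwich argument showing the LP optimum lies in $[\mathit{PE}^{\sup}_{s_{\mathit{init}}}-\epsilon,\mathit{PE}^{\sup}_{s_{\mathit{init}}}]$. The complexity bound is also argued correctly.

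There is, however, a genuine gap in your justification of the contraction. You assert that after the preprocessing of Proposition~\ref{prop:removingzeroECs} ``the only end components are $\{\mathit{goal}\}$ and $\{\fail\}$'', and hence ``from every state one of these is reached with probability at least $\delta^{|S|}$ within $|S|$ steps''. This is false. Proposition~\ref{prop:removingzeroECs} only guarantees that every end component has \emph{negative} maximal expected mean payoff; it does not eliminate end components. A state $s$ with a self-loop of weight $-1$ (plus some other action towards $\mathit{goal}$) already gives a surviving end component from which neither $\mathit{goal}$ nor $\fail$ is reached in $|S|$ steps under the looping action. So your escape bound for the original MDP, and with it the contraction as you stated it, does not hold.

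The paper closes this gap by arguing at the level of the \emph{product} MDP on $S\times\{R^-_\epsilon{-}W,\dots,R^+_\epsilon{+}W\}$: it is \emph{this} MDP that has no non-trivial end components. Indeed, a putative end component of the product projects to an end component $\mathcal{E}$ of $\mathcal{M}$, and any scheduler staying inside the product end component keeps the accumulated weight bounded forever; but by the super-potential estimate following Proposition~\ref{prop:MP} one has $\mathbb{E}\bigl[w(i)+u_{s(i)}\bigr]\le w(0)+u_{s(0)}-i\cdot t$ with $t>0$ for any scheduler inside $\mathcal{E}$, which is impossible if $w(i)$ stays bounded. Once the product MDP is known to be end-component free, Tseng's layered partition and the weighted-sup-norm contraction of Proposition~\ref{prop:linearprogram} apply verbatim \emph{to the product}, and the rest of your argument stands.
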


\begin{proof}
Consider the following linear program with one variable $x_{s,w}$ for each $s\in S$ and $R^- - W\leq w \leq R^+ +W$:

Minimize $\sum_{s,w} x_{s,w}$ under the following constraints:
\[x_{\mathit{goal}, w} = w \text{, and } x_{fail, w}=0,\]
for $w\geq R^+$ and $s\in S\setminus \{\mathit{goal}, fail\}$,
\[ x_{s,w} = \mathit{PE}^\mathfrak{Max}_s +p^{\max}_s \cdot w, \]
for $w\leq R^-$ and $s\in S\setminus \{\mathit{goal}, fail\}$,
\[ x_{s,w} = \mathit{PE}^\mathfrak{Min}_s +p^{\min}_s \cdot w, \]
and for $R^-<w < R^+$, $s\in S\setminus \{\mathit{goal}, fail\}$, and $\alpha \in Act(s)$,
\[ x_{s,w}\geq \sum_{t\in S} P(s,\alpha,t)\cdot x_{t,w+wgt(s,\alpha)} .\]
The unique solvability can be shown as in Proposition \ref{prop:linearprogram} using  that all end components have negative mean payoff:
We can interpret the linear program on an MDP with state space $S\times \{R^- -W, \dots, R^+\}$ and the transitions induced by $\mathcal{M}$. This MDP now has no end components.
\end{proof}


\subsection{Transfer to Conditional Expectations} \label{app:approximationCE}


We restate the algorithm given in Section \ref{sec:approximation}.
Let $\mathcal{M}$ be an MDP with $\mathit{CE}^{sup}_{s_{\mathit{init}}}<\infty$ and let $\epsilon>0$. By Proposition \ref{prop:reduction}, we can assume that $\Pr^{\min}_{\mathcal{M},s_{\mathit{init}}}(\Diamond \mathit{goal})=:p$ is positive.
 We know that $\mathit{CE}^{sup}_{s_{\mathit{init}}}\in [\mathit{CE}^{\mathfrak{Max}}_{s_{\mathit{init}}},\mathit{CE}^{ub}]$. We perform a binary search to approximate $\mathit{CE}^{sup}_{s_{\mathit{init}}}$:

We put $A_0:=\mathit{CE}^{\mathfrak{Max}}_{s_{\mathit{init}}}$ and $B_0:=\mathit{CE}^{ub}$. Given $A_i$ and $B_i$, let $\theta_i := (A_i+B_i)/2$. Then, we approximate $\mathit{PE}^{\sup}_{s_{\mathit{init}}}[-\theta_i]$ up to an absolute error of $p\cdot \epsilon$. 
 Let $E_i$ be the value of this approximation. If $E_i \in [-2p\cdot \epsilon, 2p\cdot \epsilon]$, terminate and return $\theta_i$ as the approximation for $\mathit{CE}^{\sup}_{s_{\mathit{init}}}$. If $E_i<-2p\cdot \epsilon$, put $A_{i+1}:=A_i$ and $B_{i+1}:=\theta_i$, and repeat. If $E_i>2p\cdot\epsilon$, put $A_{i+1}:=\theta_i$ and $B_{i+1}:= B_i$, and repeat.

\begin{myprop}
The procedure terminates after at most $\lceil \log((A_0-B_0)/(p\cdot\epsilon))\rceil$ iterations and returns an $3\epsilon$-approximation of $\mathit{CE}^{\sup}_{s_{\mathit{init}}}$ in time  exponential in the size of $\mathcal{M}$ and polynomial in $\log(1/\epsilon)$.
\end{myprop}

\begin{proof}
We begin by showing that the algorithm terminates after at most $\lceil \log((A_0-B_0)/(\epsilon\cdot p)) \rceil$ many iterations, i.e. when $|A_i-B_i|\leq \epsilon\cdot p$.
We know that $\mathit{PE}^{\sup}_{s_{\mathit{init}}}[-\theta_i]<0$ if $E_i<- 2p\epsilon$ and $\mathit{PE}^{\sup}_{s_{\mathit{init}}}[-\theta_i]>0$ if $E_i>2p\epsilon$.
By Proposition \ref{prop:threshold}, we conclude that $\mathit{CE}^{\sup}_{s_{\mathit{init}}}\in [A_{i+1},B_{i+1}]$ at any time.
So, after at most $log((A_0-B_0)/(\epsilon\cdot p))$ many iteration, we have that $|A_i-B_i|\leq \epsilon\cdot p$ and hence $\mathit{CE}^{\sup}_{s_{\mathit{init}}}-\epsilon\cdot p \leq \theta_i \leq \mathit{CE}^{\sup}_{s_{\mathit{init}}}+\epsilon\cdot p$. We claim that then $E_i \in [-2p\epsilon, 2p\epsilon]$. Suppose $E_i < -2p\epsilon$. Then $\mathit{PE}^{\sup}_{s_{\mathit{init}}}[-\theta_i] < - p\epsilon$. But we have \[0=\mathit{PE}^{\sup}_{s_{\mathit{init}}}[-\mathit{CE}^{\sup}_{s_{\mathit{init}}}] \leq \mathit{PE}^{\sup}_{s_{\mathit{init}}}[-\theta_i+p\epsilon] \leq \mathit{PE}^{\sup}_{s_{\mathit{init}}}[-\theta_i] + p\epsilon\] contradicting the supposition. Analogously, we show that $E_i$ cannot be greater than $2p\epsilon$.

Next, we show that the algorithm returns an $3\epsilon$-approximation of $\mathit{CE}^{\sup}_{s_{\mathit{init}}}$. As soon as the algorithm terminates, we have that $E_i \in [-2p\epsilon, 2p\epsilon]$. So, $\mathit{PE}^{\sup}_{s_{\mathit{init}}}[-\theta_i]  \in [-3p\epsilon, 3p\epsilon]$. So there is a scheduler $\mathfrak{S}$ with  
\[\mathit{PE}^{\mathfrak{S}}_{s_{\mathit{init}}}[-\theta_i] =  \mathit{PE}^{\mathfrak{S}}_{s_{\mathit{init}}}-\theta_i \cdot \Pr^{\mathfrak{S}}_{s_{\mathit{init}}}(\Diamond \mathit{goal})  \geq -3p\epsilon.\] 
As $\Pr^{\mathfrak{S}}_{s_{\mathit{init}}}(\Diamond \mathit{goal}) \geq p$, this implies $\mathit{CE}^{\sup}_{s_{\mathit{init}}} \geq \mathit{CE}^{\mathfrak{S}}_{s_{\mathit{init}}}\geq \theta_i -3\epsilon$.
On the other hand, suppose that $\mathit{CE}^{\sup}_{s_{\mathit{init}}}>\theta_i+3\epsilon$. Then there is a scheduler $\mathfrak{T}$ with $\mathit{CE}^{\mathfrak{T}}_{s_{\mathit{init}}}>\theta_i +3\epsilon$. For this scheduler, we have
\[0 < \mathit{PE}^{\mathfrak{T}}_{s_{\mathit{init}}}[-\theta_i-3\epsilon] = \mathit{PE}^{\mathfrak{T}}_{s_{\mathit{init}}}[-\theta_i] - 3\epsilon \cdot \Pr^{\mathfrak{T}}_{s_{\mathit{init}}}(\Diamond \mathit{goal}) \leq \mathit{PE}^{\mathfrak{T}}_{s_{\mathit{init}}}[-\theta_i] - 3\epsilon\cdot p .\]
This contradicts $\mathit{PE}^{\sup}_{s_{\mathit{init}}}[-\theta_i] \leq 3p\epsilon$.
Therefore, the algorithm indeed returns a $3\epsilon$-approximation of $\mathit{CE}^{\sup}_{s_{\mathit{init}}}$.

Finally, we show that the claimed running time is correct: The algorithm stops after at most $\lceil \log((A_0-B_0)/(\epsilon\cdot p)) \rceil$ iterations. As all values involved can be computed in polynomial time, this is polynomial in the size of $\mathcal{M}$ and linear in $\log(1/\epsilon)$. In each iteration, we have to approximate the maximal partial expectation $\mathit{PE}^{\sup}_{s_{\mathit{init}}}[-\theta_i]$ up to an absolute error of $p\cdot \epsilon$. As the logarithmic length of $\theta_i$ is polynomial in the size of $\mathcal{M}$ as well, this can be done in time exponential in the size of $\mathcal{M}$ and polynomial in $\log(1/\epsilon)$.
\end{proof}


\end{document}